
\documentclass{birkjour}
%
%
%
 \usepackage{amsthm}
 \usepackage{amssymb}
 \usepackage{bbold}
 \usepackage{mathtools}
 \usepackage{xcolor}
 \usepackage{enumitem}
 \usepackage{verbatim}
 \usepackage{hyperref}
 
 \hypersetup{
    colorlinks,
    linkcolor={red!50!black},
    citecolor={blue!50!black},
    urlcolor={blue!80!black}
}

 \newtheorem*{thm*}{Theorem}
 \newtheorem{thm}{Theorem}[section]
 \newtheorem{cor}[thm]{Corollary}
 \newtheorem{lem}[thm]{Lemma}
 \newtheorem{prop}[thm]{Proposition}
 \theoremstyle{definition}
 \newtheorem{defn}[thm]{Definition}
 \theoremstyle{remark}
 \newtheorem{rem}[thm]{Remark}
 
 \numberwithin{equation}{section}
 
\newcommand{\idmat}{\mathbb{1}}
\newcommand{\idop}{\mathrm{id}}
\newcommand{\blt}{\mathcal{B}}
\newcommand{\states}{\mathcal{S}}
\newcommand{\trcl}{\mathcal{B}_1}
\newcommand{\norm}[1]{\left\lVert #1 \right\rVert}
\newcommand{\abs}[1]{\left\vert #1 \right\vert}
\newcommand{\tr}[1]{\mathrm{tr}\left[#1\right]} 
\newcommand{\ptr}[2]{\mathrm{tr}_{#1}\left[#2\right]}
\newcommand{\set}[2]{\left\{ #1 \,\middle|\, #2 \right\}}
\DeclarePairedDelimiter\bra{\langle}{\rvert}
\DeclarePairedDelimiter\ket{\lvert}{\rangle}
\DeclarePairedDelimiterX\braket[2]{\langle}{\rangle}{#1 \delimsize\vert #2}
\newcommand{\sq}[1]{`#1'}

\begin{document}

%
%
%
%
%
%
%
%
%

\title[\sq{Interaction-Free} Channel Discrimination]
 {\sq{Interaction-Free} Channel Discrimination}


\author{Markus Hasen\"ohrl}

\address{
Zentrum Mathematik\br
Garching Forschungszentrum\br
Boltzmannstr. 3\br
85748 Garching bei M\"unchen}

\email{m.hasenoehrl@tum.de}

\thanks{$^{1}$Department of Mathematics, Technical University of Munich, Garching, Germany\\$^2$Munich Center for Quantum Science and Technology (MCQST), Munich, Germany.}
\author{Michael M. Wolf}
\address{Zentrum Mathematik\br
Garching Forschungszentrum\br
Boltzmannstr. 3\br
85748 Garching bei M\"unchen}
\email{m.wolf@tum.de}



%

\begin{abstract}
In this work, we investigate the question, which objects one can discriminate perfectly by \sq{interaction-free} measurements. 
To this end, we interpret the Elitzur-Vaidman bomb-tester experiment as a quantum channel discrimination problem and generalize the notion of \sq{interaction-free} measurement to arbitrary quantum channels. Our main result is a necessary and sufficient criterion for when it is possible or impossible to discriminate quantum channels in an \sq{interaction-free} manner (i.e., such that the discrimination error probability and the \sq{interaction} probability can be made arbitrarily small). For the case where our condition holds, we devise an explicit protocol with the property that both probabilities approach zero with an increasing number of channel uses, $N$. More specifically, the \sq{interaction} probability in our protocol decays as $\frac{1}{N}$ and we show that this rate is the optimal achievable one. Furthermore, our protocol only needs at most one ancillary qubit and might thus be implementable in near-term experiments. For the case where our condition does not hold, we prove an inequality that quantifies the trade-off between the error probability and the \sq{interaction} probability. 
\end{abstract}

\maketitle
\tableofcontents

\newpage
\section{Introduction}
In 1993, Elitzur and Vaidman proposed their famous bomb-tester experiment \cite{Elitzur1993} to demonstrate that the arguably most intriguing property of quantum theory - superposition - can be exploited to detect an ultra-sensitive bomb in a black-box, in such a way that there is a non-vanishing probability that the bomb will not explode. Only two years later, Kwiat et al. \cite{PhysRevLett.74.4763} showed how to employ another fundamental phenomenon - the quantum Zeno effect \cite{QuantumZenoEffect} - to boost the probability that the bomb will not explode as close to $1$ as one pleases. These powerful ideas found applications in \sq{interaction-free} imaging \cite{InteractionFreeImaging, NonInvasiveElectron}, counterfactual quantum computation \cite{10.1007/3-540-49208-9_7, CounterfactualQC, HoustenDirectQunatum}, counterfactual communication \cite{PhysRevLett.110.170502} and cryptography \cite{CounterfactualCrypto} and even complexity theory \cite{QueryCompelxity}. Despite the great success, it became apparent that the aforementioned techniques, which we will generically call \sq{interaction-free} measurements, are subject to some fundamental limitations. Notably, it is impossible to learn the outcome of a decision problem solved by a quantum computer \cite{CounterfactualQC} without \sq{running} the computer in at least one of the two cases, and two optically semi-transparent objects cannot be discriminated in such a way that no photon gets absorbed \cite{PhysRevA.63.032105, PhysRevA.64.062303}.

Despite the results mentioned above, there seems to be no general theory that pinpoints what can or cannot be done perfectly with \sq{interaction-free} measurements. Encouraged by recent results that generalize the quantum Zeno effect\cite{WolfZenoGeneralized, Burgarth1, Burgarth2, Barankai2018GeneralizedQZ}, this work aims to remedy these shortcomings. To this end, we interpret the Elitzur-Vaidman bomb-tester experiment as a quantum channel discrimination problem and generalize the notion of \sq{interaction-free} measurement to quantum channels via two slightly different, but in the end largely equivalent models. The theory of quantum supermaps \cite{ChiribellaQunatumNetwork} then provides the right framework to consider all possible (causally ordered) discrimination strategies and thus to decide, when it is possible or impossible to discriminate two channels in an \sq{interaction-free} manner.

\paragraph{Organization of the paper} This article is structured as follows: In the remainder of this section, we are going to review the bomb-tester experiment in its versions by Elitzur and Vaidman and by Kwiat et al. and try to convey the idea, how the general model should look like. Armed with that rough understanding, we will be able to state and discuss the major results of this work. This is the content of Section \ref{ResultsSection}. In Section \ref{ModelSection}, we give a detailed derivation of our model. Our main result, a characterization of what is possible and impossible to do with \sq{interaction-free} measurements, is the combination of two pillars: a no-go theorem, in the form of an inequality, that tells us when it is impossible to discriminate two channels in an \sq{interaction-free} manner; and a quantum protocol that discriminates two channels in those cases that are not touched by the no-go theorem. A quantitative treatment of this protocol will be given in Section \ref{ConstructivePart}, while the main content of Section \ref{NOGOSection} is the no-go theorem. Also in Section \ref{NOGOSection}, we prove fundamental limits for the achievable decay rate of the \sq{interaction} probability.

\paragraph{The bomb-tester experiment}
We will briefly review the bomb-tester experiment in its original version by Elitzur and Vaidman and its iterative version by Kwiat et al.. 
\begin{figure}[htbp]
    \centering
    \includegraphics[width=.9\textwidth]{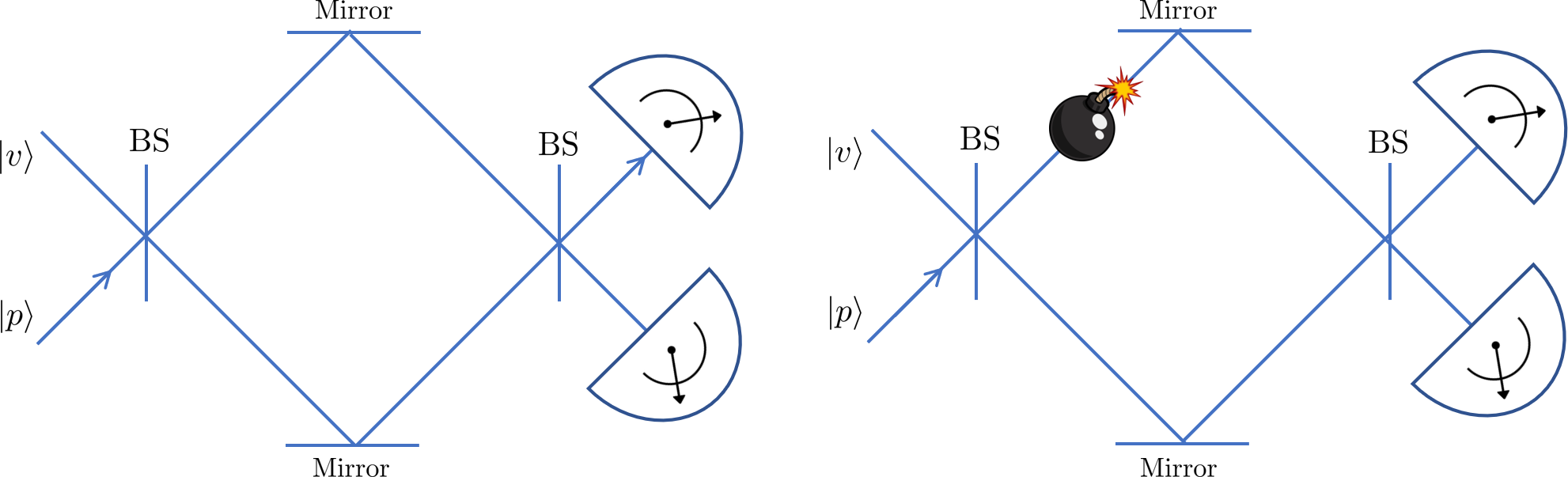}
    \caption{Elitzur-Vaidman bomb-tester experiment} \label{ElitzurFig}
\end{figure}
Suppose you got a box and you have been promised that inside of this box there is an ultra-sensitive bomb. By ultra-sensitive, we mean that the bomb will explode even if only one photon hits it. As you do not trust the deliverer, you want to check if there is a bomb inside the box. For some reason, the only way to obtain information about the content of the box is by shining light through it. Doing so, however, might trigger the bomb, which is what we want to avoid. Obviously, if photons were classical particles our task would be impossible. To circumvent this problem, Elizur and Vaidman proposed to put the box into the upper arm of a Mach-Zehnder interferometer, as depicted in Fig. \ref{ElitzurFig}. If we work only with a single photon, then this proposal can be stated abstractly as follows: The Hilbert space of the problem is $\mathcal{H} = \mathcal{H}_U \otimes \mathcal{H}_L$, where $\mathcal{H}_U = \mathcal{H}_L = \mathrm{span}\{v, p\}$ and the orthogonal unit vectors $v$ and $p$ denote the vacuum and one-photon states, respectively. The $50/50$ beamsplitter (BS) can be modeled as a unitary transformation $U$, defined by
\begin{align} \label{DefBeamsplitter}
    \begin{split}
    U v\otimes v &= v\otimes v \\
    U p\otimes v &= \cos(\theta)\, v\otimes p + \sin(\theta) \,p\otimes v\\
    U v\otimes p &= -\sin(\theta) \,v\otimes p + \cos(\theta) \,p\otimes v,
    \end{split}
\end{align}
where $\theta = 45^\circ$. Suppose we start with a photon in the lower input, then the initial state is $s_0 := \ket{v\otimes p}\bra{v\otimes p}$. There are two cases to analyze. On the one hand, if there is no bomb in the box, then the two beamsplitters rotate the state by $90^\circ$. Hence, the photon ends up in the upper output. On the other hand, if there is a bomb in the box, then the bomb acts as a measurement device in the upper path. There are three possible outcomes of the experiment. The first possibility is that the photon takes the upper path and thus causes the bomb to explode. This happens with a probability of 50\%. If the bomb does not explode, then, by the measurement postulate, the state of the system is still $s_0$. Since the second beamsplitter has a $50/50$ splitting ratio, the probability that we measure the photon in the upper output equals the probability that we measure the photon in the lower output. I.e., the probability for each of them is 25\%. The important point here is that in 25\% of the cases the photon ends up in the lower path. In that case, we can conclude that there is a bomb in the box, but the bomb has not been triggered. However, we only get this result in 25\% of the cases. 

\paragraph{Kwiat et al.'s iterative version}To increase the efficiency of this protocol, the crucial idea is to feed the output back to the input, (thus to let the photon go through the box many times) and to adjust the splitting ratio of the beamsplitters sensibly (see \cite{PhysRevLett.74.4763} for the experimental realization). The easiest way to analyze this proposal is to think of the feedback loop in a \sq{rolled out} way. That is, we look at this proposal as if we had $N$ copies of the Mach-Zehnder interferometer (where $N$ is the number of times we let the photon go through the box), in each of which the box is in the upper arm (see Fig. \ref{KwiatBombFig}).
\begin{figure}[htbp]
    \centering
    \includegraphics[width=.9\textwidth]{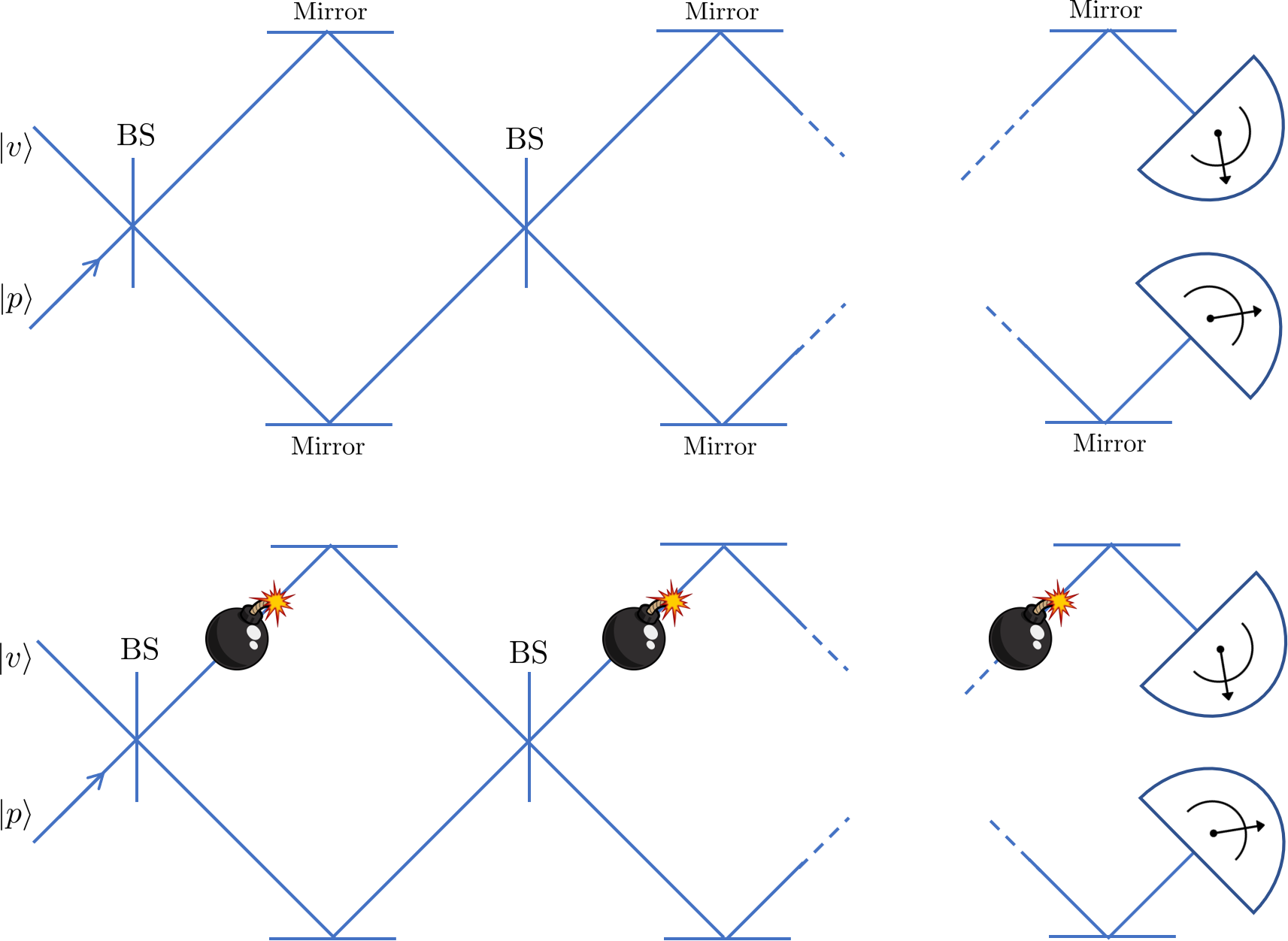} 
    \caption{Kwiat et al.'s version of the bomb-tester experiment} \label{KwiatBombFig}
\end{figure} We further choose the angle $\theta := \frac{90^\circ}{N}$ in \eqref{DefBeamsplitter}, which defines the action of the beamsplitters. Let us analyze this protocol: If there is no bomb in the box and the photon starts in the lower path, then the photon travels through $N$ beamsplitters, each of which rotates the state by an angle of $\frac{90^\circ}{N}$. So overall the state is rotated by $90^\circ$, which means that the photon will be in the upper output. For the case where there is a bomb in the box, let us calculate the probability that the photon always takes the lower path and therefore does not hit the bomb. For each of the beamsplitters, if the photon is in the lower path before the beamsplitter, then the probability that the photon will be in the lower path after the beamsplitter is given by $\cos^{2}(\theta)$. Since the bomb can be viewed as a measurement device, the probability that the photon always takes the lower path is simply the product of the probabilities at each beamsplitter. Hence, $P(\text{always lower path}) = \cos^{2N}(\theta)$. For $N \rightarrow \infty$, we have
\begin{align*}
    \cos^{2N}(\theta) = \left(1-\frac{\pi^2}{8N^2} + \mathcal{O}(N^{-4})\right)^{2N} = 1-\frac{\pi^2}{4N} + \mathcal{O}(N^{-2}) \xrightarrow{N \rightarrow \infty} 1.
\end{align*}
This simple calculation has the remarkable consequence that (when $N$ is large enough) the photon will always end up in the lower path and the bomb will not explode. Since the photon will always end up in the upper path, if there is no bomb in the box, this protocol enables us to tell (with probability approaching $1$), whether there is a bomb in the box, while simultaneously ensuring that the bomb will not be triggered. 
\paragraph{Interpretation as a channel discrimination problem} We have seen in the previous paragraph, how to discriminate between a completely transparent object (empty box) and an opaque object (bomb) such that the probability that a photon gets absorbed by the opaque object can be made as small as one pleases. This problem can be reinterpreted as a channel discrimination problem as follows: The channel corresponding to the transparent object is simply the identity channel ($T_{empty} := \idop$), while the action of the opaque object can be identified with the channel\footnote{$\trcl(\mathcal{H})$ denotes the set of traceclass operators on the Hilbert space $\mathcal{H}$ and $\states(\mathcal{H})$ denotes the set of density operators.} $T_{bomb} : \trcl(\mathcal{H}_U) \rightarrow \trcl(\mathcal{H}_U)$, defined by\begin{align*}
    T_{bomb}(\cdot) = \tr{\cdot} \ket{v}\bra{v}.
\end{align*}
\begin{figure}[htbp]
    \centering
    \includegraphics[width=.9\textwidth]{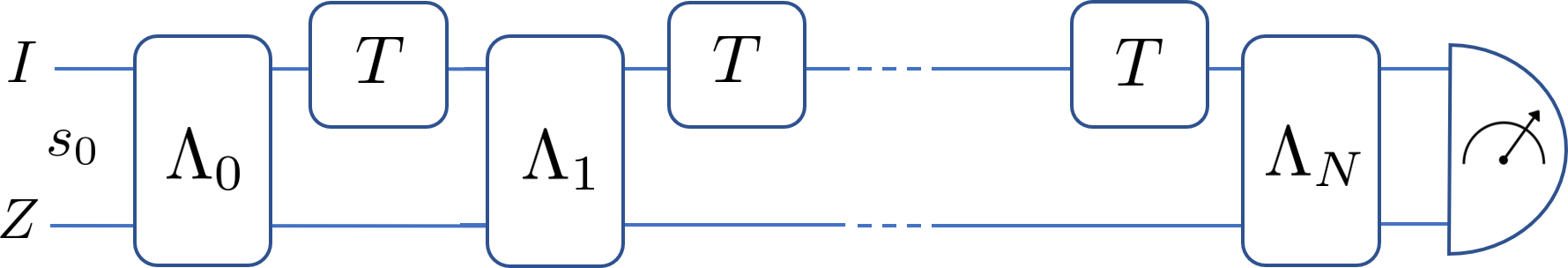}
    \caption{$N$-step discrimination strategy} \label{GeneralDiscriminationProtocol}
\end{figure}According to the theory of quantum combs\footnote{Quantum combs: also known as quantum supermaps, quantum strategies, \dots}, the most general (causally ordered) strategy to discriminate channels is given by the sequential scheme, depicted in Figure \ref{GeneralDiscriminationProtocol}. That is, if the channels to be discriminated act on the system $I$ ($I$ for interaction), then the most general discrimination strategy\footnote{This includes in particular coherent evolution, the use of entanglement, measurements, adaptive strategies, channels used in paralell, \dots} allowed by quantum theory can be described as follows: First one chooses an ancillary system $Z$ (which might be arbitrarily large) and an initial state $s_0 \in \states(\mathcal{H}_I \otimes \mathcal{H}_Z)$. Then we can apply a channel\footnote{Of course, the application of $\Lambda_0$ is redundant, since one could choose $s_0$ differently. Allowing to apply $\Lambda_0$, however, will simplify the notation.} $\Lambda_0 : \trcl(\mathcal{H}_I \otimes \mathcal{H}_Z) \rightarrow \trcl(\mathcal{H}_I \otimes \mathcal{H}_Z)$ to $s_0$. Afterwards, the unknown channel is applied to the system (i.e., if $T : \trcl(\mathcal{H}_I) \rightarrow \trcl(\mathcal{H}_I)$ is the unknown channel, then its application transforms the state $\Lambda_0(s_0)$ to $(T\otimes \idop)(\Lambda_0(s_0))$). Then we can transform the system by applying a channel $\Lambda_1 : \trcl(\mathcal{H}_I \otimes \mathcal{H}_Z) \rightarrow \trcl(\mathcal{H}_I \otimes \mathcal{H}_Z)$. Afterwards, we apply the unknown channel again, followed by an application of a channel $\Lambda_2 : \trcl(\mathcal{H}_I \otimes \mathcal{H}_Z) \rightarrow \trcl(\mathcal{H}_I \otimes \mathcal{H}_Z)$. We repeat this process $N$ times overall. In the end, our system is in a state $\rho_N^T \in \states(\mathcal{H}_I \otimes \mathcal{H}_Z)$, which depends on $T$.  Hence, by measuring, we can obtain information about the identity of $T$. Kwiat et al.'s protocol can be integrated in this formalism as follows: We identify the upper path with the system $I$ and the lower path with the system $Z$ and choose $s_0 := \ket{v \otimes p} \bra{v \otimes p}$. For $0 \leq i \leq N-1$, the channels $\Lambda_i$ are defined by $\Lambda_i(\cdot) := U\cdot U^\dagger = : \hat{U}(\cdot)$, with $\theta = \frac{90^\circ}{N}$ and we set $\Lambda_N := \idop$. It is then easy to calculate that 
\begin{align} \label{KwiatAsQCEval}
\begin{split}
    \rho_N^{T_{empty}} &= \hat{U}^N(\ket{v \otimes p} \bra{v \otimes p}) = \ket{p\otimes v}\bra{p \otimes v}\\
    \rho_N^{T_{bomb}} &= \left((T_{bomb}\otimes\idop)\circ\hat{U}\right)^N(\ket{v \otimes p} \bra{v \otimes p}) \\&=  \cos^{2N}(\theta) \ket{v\otimes p}\bra{v\otimes p} + (1-\cos^{2N}(\theta)) \ket{v\otimes v}\bra{v\otimes v},
\end{split}
\end{align}
where $\rho_N^{T_{empty}}$ and $\rho_N^{T_{bomb}}$ denote the output states of the protocol, when the unknown channel is $T_{empty}$ or $T_{bomb}$. An interesting aspect of the expressions \eqref{KwiatAsQCEval} is that one can read off the results of the last paragraph, since the states are orthogonal and since the probability that the bomb explodes is simply given by the coefficient of $\ket{v\otimes v}\bra{v\otimes v}$. To abstract from the bomb-tester experiment, we want to allow for arbitrary quantum channels and for arbitrary discrimination strategies (Figure \ref{GeneralDiscriminationProtocol}). In this more general setting, the concept of the output state does not change. What is not a priori clear is, what it means that something was \sq{interaction-free}. Since we want to allow for arbitrary strategies (for example, involving many photons in arbitrary superpositions), the output state does not, in general, contain the information if an interaction occurred. Therefore, we need to model separately what \sq{interaction-free} means for general discrimination strategies. A derivation of such a model based on some axioms takes some effort. We will, therefore, postpone this discussion until Section \ref{ModelSection}. For now, let us just describe the essential constituents. Firstly, for the notion of \sq{interaction-free} to have any meaning, there needs to be some way not to interact with the object in the box. We will thus assume, in analogy to the bomb-tester experiment, the existence of a \textit{vacuum state}. That is, we assume that for the channels under consideration, there exists a pure state $\ket{v}\bra{v} \in \states(\mathcal{H}_I)$ such that $\ket{v}\bra{v}$ gets mapped to a pure state by the channel and that if the channel is applied to $\ket{v}\bra{v}$, then there is no \sq{interaction} with the object in the box. This concept is formalized by the notion of a channel with vacuum. 
\begin{defn}[Channel with vacuum] \label{ChannelWithVacuumDef}
A \textit{channel with vacuum $v \in \mathcal{H}$} is a channel $T : \trcl(\mathcal{H}) \rightarrow \trcl(\mathcal{H})$ together with a unit vector $v \in \mathcal{H}$ such that $T(\ket{v}\bra{v})$ is pure. 
The unit vector $v$ is called the $\textit{vacuum}$ and the state $\ket{v}\bra{v} \in \states(\mathcal{H})$ is called the \textit{vacuum state}. 
\end{defn}
The notion of an object in the box already suggests that we should look at the given channel in the open system picture. To this end, we imagine a Daemon sitting in the box and trying to figure out if something else than the vacuum was sent through the box. To do so, we allow the Daemon to access the object in the box. In more mathematical terms, the Daemon has full access to the output of the conjugate channel \cite{King2007PropertiesOC}. An important implicit assumption underlying the discussion above is that the channels we look at can be applied several times (which means that the channel does not change) - a Markovianity assumption. Given just this Markovianity assumption, it is possible to determine the probability that, for a certain discrimination strategy, the Daemon will find out if at any point during the execution of the strategy, the channel was applied to something else than the vacuum state. We will call this probability the \textit{\sq{interaction} probability} (see Definition \ref{InteractionProbabilityDef}), denoted by $P_I^T(D)$, where $T$ denotes the channel and $D$ the discrimination strategy. The central notion of \textit{discrimination in an \sq{interaction-free} manner}, as formalized in Definition \ref{Interaction-FreeDiscriminationFormalDef}, is then defined by demanding that the discrimination error probability as well as the \sq{interaction} probability can be made arbitrarily small simultaneously. We finish this section by formalizing the notion of a discrimination strategy\footnote{Note that in this definition, we allow the input and output spaces to be different from $\mathcal{H}\otimes \mathcal{H}_Z$. This is solely for notational flexibility and has no physical significance.} and by fixing the notation.
\begin{defn}[Discrimination strategy]
An \textit{$N$-step discrimination strategy} is a tupel $(\mathcal{H}, \mathcal{H}_Z, \mathcal{H}_i, \mathcal{H}_o, s_0, \Lambda)$, where $\mathcal{H}$, $\mathcal{H}_Z$, $\mathcal{H}_i$ and $\mathcal{H}_o$ are Hilbert spaces, $s_0 \in \states(\mathcal{H}_i)$ is the initial state and $\Lambda := \{\Lambda_0, \Lambda_1, \dots, \Lambda_N\}$ is a set of channels, with $\Lambda_0 : \trcl(\mathcal{H}_i) \rightarrow \trcl(\mathcal{H} \otimes \mathcal{H}_Z)$, $\Lambda_n : \trcl(\mathcal{H} \otimes \mathcal{H}_Z) \rightarrow \trcl(\mathcal{H} \otimes \mathcal{H}_Z)$ for $1 \leq n \leq N-1$ and $\Lambda_N : \trcl(\mathcal{H} \otimes \mathcal{H}_Z) \rightarrow \trcl(H_o)$. \\
An $N$-step discrimination strategy induces the \textit{intermediate state map} $\rho : \blt(\trcl(\mathcal{H})) \times \{0, 1, 2, \dots, N\} \rightarrow \trcl(\mathcal{H}\otimes \mathcal{H}_Z) \cup \trcl(\mathcal{H}_o)$, defined by
\begin{align*}
\begin{split}
    \rho(T, 0) &= \Lambda_0(s_0)\\
    \rho(T, n) &= \Lambda_n\circ (T \otimes \idop) \circ \rho(T, n-1), \text{ for } 1 \leq n \leq N. 
\end{split}
\end{align*}
We will always write\footnote{The superscript should not be confused with the transpose.} $\rho^T_n$ for $\rho(T, n)$ and omit $\mathcal{H}_i$ and $\mathcal{H}_o$, if $\mathcal{H}_i = \mathcal{H}_o = \mathcal{H}\otimes\mathcal{H}_Z$.
\end{defn}

\paragraph{Notation}
Throughout, $\mathcal{H}$ (with some subscript) denotes a separable complex Hilbert space and in this paragraph, $\mathcal{X}$ and $\mathcal{Y}$ are Banach spaces.
The range of a map $f : \mathcal{X} \rightarrow \mathcal{Y}$ is denoted by $\mathrm{ran}(f) := \set{f(x)}{x \in \mathcal{X}}$. The kernel of $f$ is $\mathrm{ker}(f) := \set{x \in \mathcal{X}}{f(x) = 0}$. The dual space $\mathcal{X}^*$ of $\mathcal{X}$ is the set of bounded linear functionals on $\mathcal{X}$. The orthogonal complement of a linear subspace $\mathcal{V} \subseteq \mathcal{H}$ is denoted by $\mathcal{V}^\bot$. The open $\epsilon$-ball around $x_0\in \mathcal{X}$ is defined by $B_\epsilon(x_0) := \set{x \in \mathcal{X}}{\norm{x-x_0} < \epsilon}$ and the closed $\delta$-disc around $z_0 \in \mathbb{C}$ is denoted by $\mathbb{D}_{\delta}(z_0) := \set{z \in \mathbb{C}}{\abs{z-z_0} \leq \delta}$

The Banach space of bounded linear operators $\mathcal{X} \rightarrow \mathcal{X}$ is denoted by $\blt(\mathcal{X})$. The space of trace-class operators $\trcl(\mathcal{H})$ becomes a Banach space with trace-norm $\norm{\cdot}_1 := \tr{\abs{\cdot}}$. For $A \in \blt(\mathcal{H})$, the adjoint is denoted by $A^\dagger$ and the support of $A$ is defined by $\mathrm{supp}(A) := \mathrm{ker}(A)^\bot$. If $A^\dagger = A$, then $A$ is called self-adjoint. $A$ is called positive semi-definite, sometimes denoted by $A \geq 0$, if $A$ is self-adjoint and $\braket{\psi}{A\psi} \geq 0$ for all $\psi \in \mathcal{H}$. For a closed subspace $\mathcal{V} \subseteq \mathcal{H}$, we denote (in a slight abuse of notation) by $\blt(\mathcal{V}) \subseteq \blt(\mathcal{H})$ the bounded linear operators with range and support in $\mathcal{V}$ and by $\trcl(\mathcal{V})$ the trace-class operators with range and support in $\mathcal{V}$. 

A linear operator $T \in \blt(\trcl(\mathcal{H}))$ is called a quantum operation, if it is completely positive and trace non-increasing. If $T$ is even trace-preserving, then $T$ is called a (quantum) channel. If a quantum channel $T$ is written in the form $T(\cdot) = \ptr{E}{V\cdot V^\dagger}$, where $V : \mathcal{H} \rightarrow \mathcal{H}_E \otimes \mathcal{H}$ is an isometry and where $\mathrm{tr}_E$ is the partial trace, then $V$ is called a Stinespring isometry. The set of (quantum) states on $\mathcal{H}$ is given by $\states(\mathcal{H}) := \set{\rho \in \trcl(\mathcal{H})}{\rho \geq 0, \tr{\rho} = 1}$. The identity channel is denoted by $\idop$ and the unit matrix by $\idmat$.  
For positive semi-definite trace-class operators $\rho$ and $\sigma$, the fidelity is defined by $\sqrt{F}(\rho, \sigma) := \norm{\sqrt{\rho}\sqrt{\sigma}}_1$.
 
For $B \in \blt(\mathcal{X})$, the resolvent set is $\rho(B) := \set{z\in\mathbb{C}}{z-B \text{ is invertible}}$ and the spectrum is $\sigma(B) := \mathbb{C}\setminus\rho(B)$. The discrete spectrum of $B$ is the subset of isolated points of $\sigma(B)$ such that the corresponding Riesz projection has finite rank.  

\newpage
\section{Results} \label{ResultsSection}
To state and discuss our main results, we need one more concept, which is similar to that of a decoherence-free subspace\footnote{An isometric subspace is a decoherence-free subspace, if the range of the isometry is $\mathcal{V}$.}.
\begin{defn}[Isometric subspace] \label{DefIsometricSubspace}
Let $\mathcal{V}$ be a closed linear subspace of a Hilbert space $\mathcal{H}$. A channel $T : \trcl(\mathcal{H}) \rightarrow \trcl(\mathcal{H})$ is said to be \textit{isometric on $\mathcal{V}$}, if there exists an isometry $V : \mathcal{V} \rightarrow \mathcal{H}$, such that
\begin{align*}
    T\vert_{\trcl(\mathcal{V})}(\cdot) = V \cdot V^\dagger.
\end{align*}
If $T$ is isometric on $\mathcal{V}$, we call $\mathcal{V}$ an \textit{isometric subspace} w.r.t. $T$. 
\end{defn}

\noindent The significance of channels that are isometric on $\mathcal{V}$ is that they are the analogue to the identity channel in the bomb-tester case. To see why, note that $T|_{\trcl(\mathcal{V})}$ satisfies the Knill-Laflamme error-correcting conditions \cite{PhysRevLett.84.2525}. Hence, by composing $T|_{\trcl(\mathcal{V})}$ with an appropriate channel, we obtain the identity channel on $\trcl(\mathcal{V})$. Furthermore, as Lemma \ref{MaximalVacuumSubspaceLemma} proves in a language adapted to our model, the output of the conjugate channel of $T$ will be the same for all $\rho \in \trcl(\mathcal{V})$. In particular, if we have $v \in \mathcal{V}$, where $v$ is the vacuum, then even though $\rho \in \trcl(\mathcal{V})$ might be different form $\ket{v}\bra{v}$, the Daemon (having access to the conjugate channel only) has no chance to tell that something else than the vacuum has been sent through the box.

We are now ready to state our main result, which is an easy to check necessary and sufficient criterion that tells us when it is possible (or impossible) to discriminate two quantum channels in an \sq{interaction-free} manner.   
\begin{thm}[Main result - qualitative] \label{MainResult}
Let $\mathrm{dim}(\mathcal{H}) < \infty$. Two channels $T_A, T_B : \trcl(\mathcal{H}) \rightarrow \trcl(\mathcal{H})$ with vacuum $v \in \mathcal{H}$ can be discriminated in an \sq{interaction-free} manner, if and only if there exists a subspace $\mathcal{V} \subseteq \mathcal{H}$ with the following three properties:
\begin{enumerate}
    \item $v \in \mathcal{V}$ \label{MainProp1}
    \item At least one of the two channels is isometric on $\mathcal{V}$.\label{MainProp2}
    \item $T_A\vert_{\blt(\mathcal{V})} \neq T_B\vert_{\blt(\mathcal{V})}$\label{MainProp3}
\end{enumerate}
\end{thm}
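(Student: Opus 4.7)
My plan is to split into sufficiency and necessity, with the sufficiency direction handled by an explicit Zeno-style construction and the necessity direction by a quantitative no-go inequality.

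For sufficiency ($\Leftarrow$), I would assume such a $\mathcal{V}$ exists and, without loss of generality, that $T_A$ is isometric on $\mathcal{V}$ with isometry $V_A : \mathcal{V}\to\mathcal{H}$. By property (3) and finite dimensionality, I can pick a unit vector $w\in\mathcal{V}$ (or a density operator $\rho_{*}\in\trcl(\mathcal{V})$) on which $T_A$ and $T_B$ disagree. The plan is to adapt Kwiat et al.'s interferometer: introduce a qubit ancilla $\mathcal{H}_Z$, start from $\ket{v}\bra{v}$ tensored with a reference ancilla state, and let each intervention $\Lambda_i$ first invert $V_A$ on its range (which is possible because $T_A|_{\trcl(\mathcal{V})}$ satisfies Knill--Laflamme) and then perform a rotation by angle $\theta=\pi/(2N)$ in the two-dimensional plane spanned by $\ket{v,0}$ and a carrier of $w$ inside $\mathcal{V}\otimes\mathcal{H}_Z$. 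Under $T_A$ the composite evolution is a coherent $90^\circ$ rotation over $N$ steps, so the output states of the two hypotheses are orthogonal, giving vanishing error probability. Lemma \ref{MaximalVacuumSubspaceLemma} ensures that states living in $\trcl(\mathcal{V})$ leave no footprint in the conjugate channel, so the interaction probability is driven only by the $O(1/N^2)$ leakage per step, summing to $O(1/N)$ in accordance with the quantitative bounds of Section \ref{ConstructivePart}.

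For necessity ($\Rightarrow$), the plan is to prove the contrapositive using the no-go inequality of Section \ref{NOGOSection}: if no subspace $\mathcal{V}$ satisfies (1)--(3), then a strictly positive lower bound of the form $P_E + c(T_A,T_B)\,P_I\geq \Delta(T_A,T_B)>0$ applies to every $N$-step strategy. The idea is that small interaction probability forces every intermediate state $\rho_n^T$ to lie, in trace norm, close to $\trcl(\mathcal{V}_n)$ for some isometric subspace $\mathcal{V}_n\ni v$, because by Lemma \ref{MaximalVacuumSubspaceLemma} the Daemon's information is exactly what separates $\trcl(\mathcal{V}_n)$ from generic states under the conjugate channel. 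Finite dimensionality lets me extract a single subspace $\mathcal{V}$ on which (1) and (2) hold; by the failing hypothesis (3), $T_A|_{\blt(\mathcal{V})}=T_B|_{\blt(\mathcal{V})}$, and propagating this equality step by step through the Markovian protocol (using continuity in the initial-state deviation from $\trcl(\mathcal{V})$) shows that $\rho_N^{T_A}$ and $\rho_N^{T_B}$ remain close in trace norm, forbidding a vanishing error probability.

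The main obstacle is the necessity direction, and specifically the step in which small \emph{average} leakage over $N$ steps is converted into a single subspace $\mathcal{V}$ that works simultaneously for every step and for at least one of $T_A,T_B$. Because property (2) is one-sided, the argument cannot be symmetric in $T_A$ and $T_B$; instead one must designate the ``isometric'' channel by some maximality prescription (e.g.\ iterating the construction of Lemma \ref{MaximalVacuumSubspaceLemma}), and then argue that the other channel's deviation is detectable either through the conjugate channel or through the output measurement. A secondary technical challenge is ensuring, in the sufficiency construction, that the Knill--Laflamme decoder on the ancilla-extended isometry can be chosen coherently over all $N$ steps using a single qubit ancilla; this should follow from the explicit form of $V_A$ after projecting onto the two-dimensional invariant plane used by the Zeno rotation.
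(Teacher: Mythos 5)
Both directions of your proposal contain a genuine gap, and in both cases the missing ingredient is precisely the technical core of the paper's argument.

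In the sufficiency direction, your construction (invert $V_A$, rotate in the plane spanned by the vacuum and a single witness vector $w$) does not guarantee that the effective channel seen in the decoded frame under hypothesis $T_B$ has $\ket{v}\bra{v}$ as its \emph{unique} fixed state, and without uniqueness the Zeno-type bounds fail: Remark \ref{SpectralConditionIsNecessary} exhibits a pinching channel for which the naive Kwiat-like protocol yields an \sq{interaction} probability that does \emph{not} vanish as $N\to\infty$. Your assertion that "the output states of the two hypotheses are orthogonal" implicitly treats $T_B$ as a projective measurement, which a general channel differing from $T_A$ only on coherences in $\trcl(\mathcal{V})$ need not be. The paper closes this gap with the reduction superchannel of Theorem \ref{ReductionSuperchannelTheorem}, whose twirling step (Lemma \ref{TwirlingLemma}) converts the hypothesis $T_B\vert_{\trcl(\mathcal{V})}\neq T_A\vert_{\trcl(\mathcal{V})}$ into ergodicity of $R(T_B)$ with a pure fixed state, so that Theorem \ref{FiniteDimKwiat} (via the spectral-gap/holomorphic-calculus machinery) applies. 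Your "secondary technical challenge" about coherent decoding is not the real obstruction; the fixed-point structure is.

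In the necessity direction, your plan to propagate $T_A\vert_{\blt(\mathcal{V})}=T_B\vert_{\blt(\mathcal{V})}$ "step by step" in trace norm cannot work, for a quantitative reason. Writing $P$ for the projection onto $\mathcal{V}$, the difference $(T_A-T_B)(\rho)$ is controlled not by $\tr{P^\bot\rho}$ but by the off-diagonal blocks $P\rho P^\bot$, whose trace norm is only $O\bigl(\sqrt{\tr{P^\bot\rho}}\bigr)$. Telescoping over $N$ steps therefore gives at best $\norm{\rho_N^{T_A}-\rho_N^{T_B}}_1\lesssim\sum_n\sqrt{\tr{P^\bot\rho_n}}\leq\sqrt{N\,P_I}$, i.e. $P_I\gtrsim(1-2P_e)^2/N$ --- which is exactly the (achievable!) rate of Theorem \ref{RateLimitTheorem} and is perfectly consistent with both $P_e\to 0$ and $P_I\to 0$. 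The paper avoids the factor of $N$ by working with the fidelity instead of the trace norm: the Uhlmann-based tradeoff of Proposition \ref{No-GoInequality} has the \emph{product} form $\sqrt{\tr{P_k\rho}\tr{P_k\sigma}}$ in the two hypotheses' states, so that Cauchy--Schwarz over the $N$ steps yields $\sqrt{P_I^{T_A}P_I^{T_B}}$ with no residual $N$-dependence (Proposition \ref{NoGoTheorem}, Theorem \ref{QuantitativeNoGo}), after which Fuchs--van de Graaf converts fidelity into error probability. Your instinct to designate the isometric channel by a maximality prescription is correct and is realized by the maximal vacuum subspace of Lemma \ref{MaximalVacuumSubspaceLemma}, which also supplies the needed lower bound $\mathfrak{i}_T(\rho)\geq C_T\tr{P^\bot\rho}$; but without the fidelity argument the no-go inequality does not close.
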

\begin{rem}
At first glance it may seem to be hard to check whether such a subspace exists. This is not so, as one only needs to consider two candidates for $\mathcal{V}$, the so called maximal vacuum subspaces $\mathcal{V}_{T_A}$ and $\mathcal{V}_{T_B}$, which we define and study in \ref{MaximalVacuumSubspaceDef} and \ref{MaximalVacuumSubspaceLemma}.     
\end{rem}

\noindent Theorem \ref{MainResult} is the qualitative combination of two quantitative results. 
\paragraph{The constructive case} We consider the case, where there is a subspace $\mathcal{V}$, such that $\mathcal{V}$ contains the vacuum and one of the two channels is isometric on $\mathcal{V}$ and $T_A\vert_{\blt(\mathcal{V})} \neq T_B\vert_{\blt(\mathcal{V})}$. In this case, our main theorem says that we can discriminate the two channels in an \sq{interaction-free} manner. It turns out that one does not need complete information about the two channels to perform the discrimination task. To account for this, we consider the more general task, where we want to know to which one of two known, disjoint, sets of channels the unknown channel belongs. Of course, Theorem \ref{MainResult} puts some restrictions on how these sets may look like. 
Specifically we consider the following: Given a channel $T$ with vacuum $v \in \mathcal{V}$ that is isometric on $\mathcal{V}$, we take as our first set (a subset of) the set of channels that equal $T$, if we restrict their domains to $\trcl(\mathcal{V})$. The second set is less restricted in that we only assume that all channels must be channels with (the same) vacuum $v$ and that the restrictions to $\trcl(\mathcal{V})$ must not equal $T|_{\trcl(\mathcal{V})}$. It will then turn out that under these conditions, these two sets can be discriminated in an \sq{interaction-free} manner. Roughly speaking, this tells us that we can test whether the unknown channel is $T$ or some other channel, whoes identity is unknown. Put it yet another way: If the identity channel is interpreted as an empty box and every other channel as a non empty box, then our result says that one can always find out, (in an \sq{interaction-free} manner) if there is something or nothing in the box. Before we state this in mathematical terms, we need to define the \textit{discrimination error probability} for two sets. 
\begin{defn}[Error probability]\label{ErrorProbabilityDefn}
Let $\mathcal{C}_A, \mathcal{C}_B \subseteq \blt(\trcl(\mathcal{H}))$ be two sets of channels. For an $N$-step discrimination strategy $D$
and a two-valued POVM $\Pi = \{\pi_A, \pi_B\}$, the \textit{discrimination error probability} is defined by
\begin{align*}
    P_e(D,\Pi) &:= \frac{1}{2} \left[ \sup_{T \in \mathcal{C}_A} \tr{\pi_B \rho_N^{T}} + \sup_{T \in \mathcal{C}_B} \tr{\pi_A \rho_N^{T}} \right].
\end{align*}
\end{defn}

\begin{thm}[Discrimination strategy] \label{DiscriminationStroategyOutlineTheorem}
For $\mathrm{dim}(\mathcal{H}) < \infty$, let $\mathcal{C}_A, \mathcal{C}_B \subseteq \blt(\trcl(\mathcal{H}))$ be two closed sets of channels and $\mathcal{V}$ be a subspace of $\mathcal{H}$, such that
\begin{enumerate}
    \item For all $T \in \mathcal{C}_A \cup \mathcal{C}_B$, $T$ is a channel with vacuum $v \in \mathcal{V}$.  
    \item For all $T \in \mathcal{C}_A$, $T$ is isometric on $\mathcal{V}$.
    \item The set $\mathcal{C}_A\vert_{\trcl(\mathcal{V})} := \set{T\vert_{\trcl(\mathcal{V})}}{T \in \mathcal{C}_A}$ contains exactly one element.
    \item $\mathcal{C}_A\vert_{\trcl(\mathcal{V})}$ and $\mathcal{C}_B\vert_{\trcl(\mathcal{V})} := \set{T\vert_{\trcl(\mathcal{V})}}{T \in \mathcal{C}_B}$ are disjoint.
\end{enumerate} Then there exist a constant $C$, and for every $N \in \mathbb{N}$, an $N$-step discrimination strategy $D$ and a two-valued POVM $\Pi$, such that 
\begin{gather*}
P_e(D, \Pi) \leq \frac{C}{N^2}, \\
    P_I^{T_A}(D) = 0 \quad \text{ and } \quad P_I^{T_B}(D) \leq \frac{C}{N}, 
\end{gather*}
for all $T_A \in \mathcal{C}_A$ and all $T_B \in \mathcal{C}_B$, where $P_I$ denotes the \sq{interaction} probability. Thus, the sets $\mathcal{C}_A$ and $\mathcal{C}_B$ can be discriminated in an \sq{interaction-free} manner.
\end{thm}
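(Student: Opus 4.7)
The approach is to construct an explicit protocol that generalizes Kwiat et al.'s iterative bomb-tester to channels that are isometric only on a subspace $\mathcal V$. Let $V:\mathcal V\to\mathcal H$ be the common isometry provided by condition~(3), and fix any unitary completion $R$ of $V^\dagger$ on $\mathcal H$ so that $RV\ket{\psi}=\ket{\psi}$ for every $\psi\in\mathcal V$. Using compactness of $\mathcal C_B$ in finite dimensions together with condition~(4), extract a single unit vector $\ket{u}\in\mathcal V$ orthogonal to $\ket{v}$ for which the ``return coefficient'' $\eta(T_B):=\tr{\ket{u}\bra{u}RT_B(\ket{u}\bra{u})R^\dagger}$ is bounded away from $1$ uniformly in $T_B\in\mathcal C_B$. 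The degenerate case $\dim\mathcal V=1$ is handled separately by the trivial protocol of feeding the vacuum $N$ times.

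The protocol lives on $\mathcal H\otimes\mathbb C^2$ with Kwiat angle $\theta=\pi/(2N)$. Let $U_\theta$ be the unitary that rotates $\mathrm{span}\{\ket{v,0},\ket{u,1}\}$ by angle $\theta$ and acts as identity on the orthogonal complement, and write $\hat U_\theta(\cdot)=U_\theta(\cdot)U_\theta^\dagger$ and $\hat R(\cdot)=(R\otimes\idmat)(\cdot)(R^\dagger\otimes\idmat)$. With $s_0=\ket{v,0}\bra{v,0}$, set $\Lambda_0=\hat U_\theta$, $\Lambda_n=\hat U_\theta\circ\hat R$ for $1\le n\le N-1$, $\Lambda_N=\hat R$, and measure with the POVM $\Pi=\{\ket{u,1}\bra{u,1},\,\idmat-\ket{u,1}\bra{u,1}\}$, declaring $\mathcal C_A$ on the first outcome.

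For $T_A\in\mathcal C_A$ a direct induction using $(T_A\otimes\idop)(\ket{\psi}\bra{\psi})=\ket{V\psi}\bra{V\psi}$ for $\psi\in\mathcal V\otimes\mathbb C^2$ together with $\hat R\ket{V\psi}\bra{V\psi}=\ket{\psi}\bra{\psi}$ shows $\rho_n^{T_A}=\ket{\phi_{n+1}}\bra{\phi_{n+1}}$, where $\ket{\phi_k}=\cos(k\theta)\ket{v,0}+\sin(k\theta)\ket{u,1}$. Hence $\rho_N^{T_A}=\ket{u,1}\bra{u,1}$ and the $T_A$-side of $P_e$ vanishes identically; moreover every input handed to $T_A$ lies in $\trcl(\mathcal V)$, on which $T_A$ is isometric with a trivial Stinespring environment, so $P_I^{T_A}(D)=0$. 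For $T_B\in\mathcal C_B$ I would track the diagonal populations $v_n:=\tr{\ket{v,0}\bra{v,0}\rho_n^{T_B}}$ and $w_n:=\tr{\ket{u,1}\bra{u,1}\rho_n^{T_B}}$. An elementary step-by-step computation gives the recursion $w_{n+1}\le \eta(T_B)\bigl(v_n\sin^2\theta+w_n\cos^2\theta\bigr)$ together with a leakage bound of $O(\sin^2\theta)$ per step; summing yields $P_I^{T_B}(D)\le N\sin^2\theta=O(1/N)$, and the uniform bound $\eta(T_B)\le 1-\delta$ solves the recursion geometrically to give $w_N\le\sin^2\theta/\delta=O(1/N^2)$, which is the $T_B$-side of $P_e$.

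The main obstacle is establishing the uniform bound $\sup_{T_B\in\mathcal C_B}\eta(T_B)<1$. My plan is to observe that $\eta(T_B)=1$ would force the pure state $RT_B(\ket{u}\bra{u})R^\dagger$ to equal $\ket{u}\bra{u}$, i.e.\ $T_B(\ket{u}\bra{u})=V\ket{u}\bra{u}V^\dagger$, and then use continuity of $T_B\mapsto\eta(T_B)$ together with compactness of $\mathcal C_B$ to upgrade the strict disjointness of condition~(4) to a strictly positive diamond-norm gap from which a single $\ket{u}\in\mathcal V$ witnessing $\eta(T_B)<1-\delta$ uniformly can be extracted. The secondary technical step is the careful bookkeeping of the off-diagonal coherences suppressed in the population recursion; I would handle these by a Dyson expansion of $\rho_n^{T_B}$ in powers of $\sin\theta$, noting that the first-order contribution to $\tr{\ket{u,1}\bra{u,1}\rho_n^{T_B}}$ vanishes identically because the cross-terms carry the ancilla structure $\ket{0}\bra{1}$ and $\ket{1}\bra{0}$, which is orthogonal to $\ket{1}\bra{1}$, so only the second-order terms survive and all constants can be absorbed into the single $C$ of the statement.
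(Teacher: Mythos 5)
There is a genuine gap, and it sits exactly where you flag the "main obstacle": the claim that a single unit vector $\ket{u}\in\mathcal V$, $u\perp v$, can be extracted with $\sup_{T_B\in\mathcal C_B}\eta(T_B)<1$. Condition (4) only says that $T_B\vert_{\trcl(\mathcal V)}$ and $T_A\vert_{\trcl(\mathcal V)}$ differ as linear maps on $\trcl(\mathcal V)$; a channel can agree with $T_A$ on every population $\ket{u}\bra{u}$ your protocol probes and differ only on coherences. Concretely, take $\mathcal H=\mathcal V=\mathrm{span}\{v,u\}$, $T_A=\idop$, and $T_B(\rho)=\ket{v}\bra{v}\rho\ket{v}\bra{v}+\ket{u}\bra{u}\rho\ket{u}\bra{u}$ (pinching in the $\{v,u\}$ basis). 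This $T_B$ is a channel with vacuum $v$ and $T_B\vert_{\trcl(\mathcal V)}\neq\idop\vert_{\trcl(\mathcal V)}$, so it is an admissible element of $\mathcal C_B$, yet $\eta(T_B)=1$ for the only admissible $u$ — no choice of $u$ or compactness argument can repair this, and with larger $\mathcal C_B$ (e.g.\ all pinchings in bases containing $v$) no single $u$ works even when each individual $\eta$ could be made $<1$. Worse, your protocol genuinely fails on this example: $T_B\otimes\idop$ kills the $\ket{v}\bra{u}$ coherence between the two arms, so the dynamics reduce to exactly the population recursion analyzed in Remark \ref{SpectralConditionIsNecessary} of the paper, where the off-vacuum population is \emph{frozen} by the Zeno effect rather than reset to the vacuum. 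The per-step interaction is then the accumulated population $w_n\sim n\theta^2$, not $O(\sin^2\theta)$, and $\sum_{n<N}w_n\to\pi^2/8$, so $P_I^{T_B}(D)=\Theta(1)$. Both your geometric solution of the recursion (which needs $\eta\le 1-\delta$) and your claimed $O(\sin^2\theta)$-per-step leakage bound break down.

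This is precisely the hole the paper's reduction superchannel (Theorem \ref{ReductionSuperchannelTheorem}) is built to close, and it cannot be bypassed by a cleverer choice of rotation axis. The paper first twirls over the unitary group stabilizing $v$ inside $\mathcal V$ and couples to an ancillary qubit so that \emph{any} deviation of $T_B$ from $T_A$ on $\trcl(\mathcal V)$ — including deviations visible only on coherences $\ket{v}\bra{\psi}$ — is converted into the statement that the reduced qubit channel $R(T_B)$ has $\ket{q_0}\bra{q_0}$ as its \emph{unique} fixed state (the proof of Lemma \ref{TwirlingLemma} uses complete positivity in an essential way to force this). Uniqueness of the pure fixed point then yields a spectral gap (via Burgarth--Giovannetti), and it is the resolvent/holomorphic-functional-calculus analysis of Theorem \ref{FundamentalTheoremPositivePart} — not a population recursion — that produces the $N^{-2}$ and $N^{-1}$ rates uniformly over the compact set. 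Your treatment of the $T_A$ branch and of $P_I^{T_A}(D)=0$ is fine, and your observation that first-order cross terms vanish parallels the paper's invariant-subspace lemma; but without a mechanism that turns ``different on $\trcl(\mathcal V)$'' into ``unique fixed point after reduction,'' the protocol cannot even discriminate the identity from a qubit dephasing channel, so the proof as proposed does not establish the theorem.
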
 
\begin{rem}
We will not only show the existence of the proclaimed strategy, but propose an explicit one. Our strategy needs only one ancillary qubit system in the worst case scenario (as does the Kwiat et al. protocol) and might thus be implementable in the near future. We also show that one cannot get rid of the ancillary qubit in a naive way. 
\end{rem}
\begin{rem}
Although Theorem \ref{DiscriminationStroategyOutlineTheorem} is formulated for finite-dimensional spaces, a key part of the proof works also in infinite-dimensional spaces (Theorems \ref{FundamentalTheoremPositivePart} and \ref{GeneralTheoremCompactSet}). 
\end{rem}
\begin{rem}
For two channels $T_A$ and $T_B$ with vacuum $v \in \mathcal{H}$, we can define the sets $\mathcal{C}_A := \{T_A\}$ and $\mathcal{C}_B := \{T_B\}$. If there is a subspace $\mathcal{V}$ such that the conditions 1-3 in the main theorem are fulfilled (and, w.l.o.g, $T_A$ is isometric on $\mathcal{V}$), then clearly $\mathcal{C}_A$ and $\mathcal{C}_B$ satisfy the hypothesis of Theorem \ref{DiscriminationStroategyOutlineTheorem} and thus $T_A$ and $T_B$ can be discriminated in an \sq{interaction-free} manner. This proves the direct part of Theorem \ref{MainResult}.
\end{rem}
Given the result of Theorem \ref{DiscriminationStroategyOutlineTheorem}, it is natural to ask, whether the bounds on the error probability and the discrimination probability have the optimal dependence on $N$. This is clearly not the case for the error probability, as is already evident from the bomb-tester experiment. For the \sq{interaction} probability, we were able to show (under a mild condition on $\mathcal{C}_A$ and $\mathcal{C}_B$) that $N^{-1}$ is indeed the best possible rate. We state this as a meta theorem (see Theorem \ref{RateLimitTheorem}).
\begin{thm*}
Subject to a condition stated in Theorem \ref{RateLimitTheorem}, there exists a constant $C > 0$ such that 
\begin{align*}
\max(P_I^{T_A}(D), P_I^{T_B}(D)) \geq C\,\frac{(1-2P_e(D,\Pi))^4}{N},
\end{align*}
for all $N$-step discrimination strategies $D$ and all two-valued POVM's $\Pi$.
\end{thm*}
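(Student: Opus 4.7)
The plan is to convert the error probability into a fidelity-type quantity and bound its growth over the $N$ channel uses. By Helstrom and the Fuchs--van de Graaf inequality, $1 - 2 P_e(D,\Pi) \leq \frac{1}{2}\norm{\rho_N^{T_A}-\rho_N^{T_B}}_1 \leq \sqrt{1 - \sqrt{F}(\rho_N^{T_A},\rho_N^{T_B})^2}$, and since $1-x^2 \leq 2(1-x)$ for $x \in [0,1]$ this yields $(1-2P_e)^2 \leq 2\bigl(1 - \sqrt{F}(\rho_N^{T_A},\rho_N^{T_B})\bigr)$. It therefore suffices to prove a target estimate of the form $1 - \sqrt{F}(\rho_N^{T_A},\rho_N^{T_B}) \leq C'\sqrt{N\,\max(P_I^{T_A}(D),P_I^{T_B}(D))}$; squaring this produces the stated fourth-power bound with a suitable constant $C > 0$.

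To get such an estimate I would pass to the Bures angle $\theta(\rho,\sigma) := \arccos\sqrt{F}(\rho,\sigma)$, which is a metric and is monotone non-increasing under CPTP maps. Setting $\hat\rho_n := \Lambda_n \circ (T_A \otimes \idop)(\rho_{n-1}^{T_B})$ and combining the triangle inequality with monotonicity yields
\begin{align*}
\theta(\rho_n^{T_A},\rho_n^{T_B}) \;\leq\; \theta(\rho_{n-1}^{T_A},\rho_{n-1}^{T_B}) \;+\; \theta\bigl((T_A\otimes\idop)\rho_{n-1}^{T_B},\,(T_B\otimes\idop)\rho_{n-1}^{T_B}\bigr).
\end{align*}
Iterating over $n = 1, \dots, N$ gives $\theta(\rho_N^{T_A},\rho_N^{T_B}) \leq \sum_{n=1}^N \delta_n$, where $\delta_n$ denotes the rightmost term. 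Since $1 - \cos\theta \leq \theta$ for $\theta \geq 0$, this implies $1 - \sqrt{F}(\rho_N^{T_A},\rho_N^{T_B}) \leq \sum_n \delta_n$, reducing the problem to bounding $\sum_n \delta_n$.

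The key technical step is the per-step amplitude estimate $\delta_n \leq C\sqrt{p_n^B}$, where $p_n^B$ is the conditional probability that the Daemon detects an interaction at the $n$-th application of $T_B$ given no earlier detection. I would establish it by fixing Stinespring dilations $V_A, V_B$ of $T_A, T_B$ in which the no-interaction outcome corresponds to a common environment vector $\ket{e_0}$, and invoking the \sq{mild condition} of Theorem \ref{RateLimitTheorem} to ensure that $V_A$ and $V_B$ can be chosen to agree on the vacuum-preserving subspace (sharing the same $\ket{e_0}$). Decomposing the $I$-part of $\rho_{n-1}^{T_B}$ into its component supported on that subspace (trace weight $1-p_n^B$) and the orthogonal complement (trace weight $p_n^B$), the two dilated evolutions differ only on the orthogonal branch, whose amplitude in any purification is $\sqrt{p_n^B}$; a direct fidelity computation in the purified picture followed by monotonicity of $\sqrt{F}$ under the partial trace then yields $\delta_n \leq C\sqrt{p_n^B}$.

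Finally, Cauchy--Schwarz gives $\sum_n\sqrt{p_n^B} \leq \sqrt{N \sum_n p_n^B}$. Since the conditional interaction events at different steps are disjoint, $1 - P_I^{T_B}(D) = \prod_{n=1}^N (1 - p_n^B)$, and $-\log(1-x) \geq x$ on $[0,1)$ bounds $\sum_n p_n^B \leq -\log\bigl(1 - P_I^{T_B}(D)\bigr)$, which is linear in $P_I^{T_B}(D)$ on any compact sub-interval of $[0,1)$. Running the same argument with $T_A$ and $T_B$ interchanged gives the analogous bound with $P_I^{T_A}(D)$; together they produce the $\max$ on the right-hand side of the statement. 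The main obstacle will be the per-step estimate: formulating the \sq{mild condition} of Theorem \ref{RateLimitTheorem} precisely enough to force the shared Stinespring structure on the vacuum-preserving subspace, and handling the regime where $P_I^T(D)$ is close to $1$ (where the logarithmic conversion degenerates, but the bound itself is trivial).
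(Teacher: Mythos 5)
Your opening reduction (Helstrom plus Fuchs--van de Graaf, giving $(1-2P_e)^2\le 2(1-\sqrt{F}(\rho_N^{T_A},\rho_N^{T_B}))$) and the idea of telescoping a per-step fidelity loss over the $N$ channel uses match the paper's skeleton (Lemma \ref{RelationErrorFidelity} and the iteration inside Proposition \ref{NoGoTheorem}). The argument breaks at the per-step estimate $\delta_n\le C\sqrt{p_n^B}$, and it breaks for a structural reason. Under the hypotheses of Theorem \ref{RateLimitTheorem} the channels are only required to agree on $\mathcal{V}=\mathcal{V}_{T_A}\cap\mathcal{V}_{T_B}$; on $\mathcal{W}_2:=\mathcal{V}^\bot\cap\mathcal{V}_{T_B}$ the channel $T_B$ is isometric, hence generates no \sq{interaction} whatsoever, while $T_A$ may act differently there. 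Concretely, take $T_B=\idop$ and $T_A$ any channel with $\mathcal{V}_{T_A}=\mathrm{span}\{v\}$ and $T_A\vert_{\trcl(\mathcal{V}^\bot)}\neq\idop$ (the bomb-tester with the labels swapped): all hypotheses of Theorem \ref{RateLimitTheorem} hold, $p_n^B\equiv 0$, yet for a Kwiat-type strategy $\delta_n>0$ and information is gained. So the claimed per-step inequality is false, and \sq{interchanging $T_A$ and $T_B$} does not repair it: in general both $\mathcal{W}_1=\mathcal{V}^\bot\cap\mathcal{V}_{T_A}$ and $\mathcal{W}_2$ are nontrivial, neither single-trajectory bound holds, and the distinguishing weight sitting in $\mathcal{W}_1$ must be charged to $P_I^{T_B}$ while that in $\mathcal{W}_2$ must be charged to $P_I^{T_A}$, simultaneously and subspace by subspace. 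This cross-charging is exactly what the paper's two-state inequality provides: Proposition \ref{No-GoInequality} (through Lemma \ref{CPLemma}) bounds the per-step fidelity loss by geometric means $\sqrt{\tr{P_k\ptr{Z}{\rho_i^{T_A^\downarrow}}}\tr{P_k\ptr{Z}{\rho_i^{T_B^\downarrow}}}}$ over the partition $\{\mathcal{W}_1,\mathcal{W}_2,\mathcal{W}_3\}$, and the orthogonality assumption $\mathcal{W}_1\perp\mathcal{W}_2$ (the \sq{mild condition} you left vague) is precisely what makes $\mathcal{W}_1\subseteq\mathcal{V}_{T_B}^\bot$ and $\mathcal{W}_2\subseteq\mathcal{V}_{T_A}^\bot$, so that each cross term is dominated by one of the two interaction functionals via Lemma \ref{MaximalVacuumSubspaceLemma}. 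A hybrid argument that compares the two channels along a single trajectory has no mechanism for this.

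There is a second, independent mismatch. Your recursion evaluates $\delta_n$ on the full trace-preserving trajectory $\rho_{n-1}^{T_B}$, whereas $P_I^{T_B}(D)$ is, by Definition \ref{InteractionProbabilityDef}, a sum of the \sq{interaction} functional along the subnormalized no-interaction trajectory $\rho_n^{T_B^\downarrow}$. The weight of the full trajectory outside the vacuum subspace is not controlled by the conditional probabilities $p_n^B$ (after an interaction has occurred, the full state can retain substantial weight outside $\mathcal{V}$ at every later step); what your purification estimate naturally controls is the total transmission $\mathfrak{T}_{T_B}(D)$, which dominates $P_I^{T_B}(D)$ by Lemma \ref{ComparisionIFTF}, so even in the favorable case $\mathcal{W}_2=\{0\}$ your route yields a bound in terms of $\mathfrak{T}$ rather than $P_I$, a strictly weaker conclusion. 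The paper sidesteps this by first passing from $\rho_N^{T_X}$ to $\rho_N^{T_X^\downarrow}$ via strong concavity of the fidelity (using $\rho_N^{T_X}\ge\rho_N^{T_X^\downarrow}$) and then running the per-step Uhlmann argument directly on the subnormalized conditioned trajectories, for which Proposition \ref{No-GoInequality} is stated for arbitrary positive operators; the final $\sqrt{N}$ then comes from Cauchy--Schwarz over the steps, as in your sketch.
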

\noindent The result above cannot hold unconditionally. If there is a subspace $\mathcal{V}$ such that $v \in \mathcal{V}$, \textit{both} channels are isometric on $\mathcal{V}$ and $T_A\vert_{\blt(\mathcal{V})} \neq T_B\vert_{\blt(\mathcal{V})}$, then we can restrict ourselves to probing the channel only with states in $\rho \in \trcl(\mathcal{V})$. Since the Daemon cannot tell the difference between these states, the \sq{interaction} probability is zero and the remaining problem is to discriminate two isometric channels. That problem can be solved with discrimination error probability equal to zero, in a finite number of steps \cite{PhysRevLett.87.177901}. We were unable to show that the case described above is the only one where the $N^{-1}$-rule can be violated, but this seems plausible. 
\noindent 
\paragraph{The no-go case} The second case to consider is where there exists no subspace satisfying all three properties of Theorem \ref{DefIsometricSubspace}. In other words, in this case the channels $T_A$ and $T_B$ must be such that whenever there is a subspace $\mathcal{V}$ that contains the vacuum and on which at least one of the two channels is isometric, then the two channels must necessarily be the same on that subspace\footnote{Unfortunately, this case seems to be the generic case. Indeed, on physical grounds (think of two semi-transparent objects) it is reasonable to assume that for both channels, the only isometric subspace that contains the vacuum is simply $\mathrm{span}\{v\}$ and that $\ket{v}\bra{v}$ is a fixed point.}. In this case, we were able to establish the following theorem.
\begin{thm}[No-go theorem] 
For $\mathrm{dim(\mathcal{H})} < \infty$, let $T_A, T_B : \trcl(\mathcal{H}) \rightarrow \trcl(\mathcal{H})$ be two channels with vacuum $v\in \mathcal{H}$. Suppose that no subspace satisfies the properties \ref{MainProp1}, \ref{MainProp2} and \ref{MainProp3} of Theorem \ref{MainResult} simultaneously.\\ Then there exists a constant $C > 0$, such that
\begin{align*}
    (1-2P_e(D, \Pi))^2 \leq  C \max(P_I^{T_A}(D), P_I^{T_B}(D)),
\end{align*}
for all finite-dimensional $N$-step discrimination strategies $D$ and all two-valued POVMs, $\Pi$.  
Hence, $T_A$ and $T_B$ cannot be discriminated in an \sq{interaction-free} manner.
\end{thm}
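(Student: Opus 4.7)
The plan is to exploit two ingredients. First, the no-go hypothesis forces both channels to act, on a common ``safe'' subspace $\mathcal{V}$, as one and the same isometry. Second, the \sq{interaction} probability controls, via a gentle-measurement argument, how far the true output of any protocol can deviate from the output that would be produced if everything stayed in $\mathcal{V}$. Combining these with the Helstrom bound then yields the inequality.

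To identify the common safe subspace, let $\mathcal{V}_{T_A}$ and $\mathcal{V}_{T_B}$ be the maximal vacuum subspaces from Lemma \ref{MaximalVacuumSubspaceLemma}. Both contain $v$ and each channel is isometric on its own $\mathcal{V}_T$. Since $\mathcal{V}_{T_A}$ already satisfies properties \ref{MainProp1} and \ref{MainProp2} of Theorem \ref{MainResult}, the no-go hypothesis forces property \ref{MainProp3} to fail, i.e.\ $T_A\vert_{\blt(\mathcal{V}_{T_A})} = T_B\vert_{\blt(\mathcal{V}_{T_A})}$. In particular $T_B$ is also isometric on $\mathcal{V}_{T_A}$, so by maximality $\mathcal{V}_{T_A} \subseteq \mathcal{V}_{T_B}$; symmetrically $\mathcal{V}_{T_B} \subseteq \mathcal{V}_{T_A}$, so $\mathcal{V}_{T_A} = \mathcal{V}_{T_B} =: \mathcal{V}$. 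There is then a single isometry $V : \mathcal{V} \to \mathcal{H}$ representing both restrictions $T_A\vert_{\trcl(\mathcal{V})}$ and $T_B\vert_{\trcl(\mathcal{V})}$.

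The central step is a coupling argument in the Stinespring picture. Pick Stinespring isometries $W_T : \mathcal{H} \to \mathcal{H}_E \otimes \mathcal{H}$ such that $W_T\vert_{\mathcal{V}} = \ket{\xi_T} \otimes V$ for a fixed unit vector $\ket{\xi_T} \in \mathcal{H}_E$, the Daemon's ``no-interaction'' register. Dilate the full $N$-step protocol to a pure state $\ket{\Psi_N^T} \in \mathcal{H}_E^{\otimes N} \otimes \mathcal{H} \otimes \mathcal{H}_Z$ (absorbing any Stinespring dilations of the $\Lambda_k$ into $\mathcal{H}_Z$, which is harmless as the $\Lambda_k$ do not touch the $\mathcal{H}_E$ registers). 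By Definition \ref{InteractionProbabilityDef},
\begin{align*}
P_I^T(D) = 1 - \norm{\left(\bra{\xi_T}^{\otimes N} \otimes \idmat\right)\ket{\Psi_N^T}}^2.
\end{align*}
A straightforward induction on the time step shows that the two unnormalised post-selected vectors $(\bra{\xi_{T_A}}^{\otimes N} \otimes \idmat)\ket{\Psi_N^{T_A}}$ and $(\bra{\xi_{T_B}}^{\otimes N} \otimes \idmat)\ket{\Psi_N^{T_B}}$ coincide, because after each $\bra{\xi_T}$-projection the action of $T$ is replaced by $V$ on $\mathcal{V}$ and by zero on $\mathcal{V}^\bot$, while the $\Lambda_k$ do not depend on the identity of $T$. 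Writing $\ket{\Phi_N}\bra{\Phi_N}$ for the common normalised post-selected state, a direct computation gives $\braket{\Phi_N}{\rho_N^T \Phi_N} \geq 1 - P_I^T(D)$, so by the gentle-measurement lemma $\norm{\rho_N^T - \ket{\Phi_N}\bra{\Phi_N}}_1 \leq 2\sqrt{P_I^T(D)}$. The triangle inequality gives $\norm{\rho_N^{T_A} - \rho_N^{T_B}}_1 \leq 4\sqrt{\max(P_I^{T_A}(D), P_I^{T_B}(D))}$, and combining with the Helstrom bound $P_e(D,\Pi) \geq \tfrac{1}{2} - \tfrac{1}{4}\norm{\rho_N^{T_A}-\rho_N^{T_B}}_1$ yields $(1-2P_e(D,\Pi))^2 \leq 4 \max(P_I^{T_A}(D), P_I^{T_B}(D))$, i.e.\ the claim with $C = 4$.

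The hardest part is the bookkeeping inside the induction that identifies the two post-selected vectors: one must simultaneously track the isometric action of each $T$ on $\mathcal{V}$, the potentially mixing behaviour of the $\Lambda_k$ between $\mathcal{V} \otimes \mathcal{H}_Z$ and $\mathcal{V}^\bot \otimes \mathcal{H}_Z$ (components in the second summand vanish after the projection, but this must be argued cleanly), and the fact that $\ket{\xi_{T_A}}$ and $\ket{\xi_{T_B}}$ live in the \emph{a priori} distinct environments of $T_A$ and $T_B$. The crucial simplification is that, once one projects onto $\bra{\xi_T}$ at every step, only the $\mathcal{V}$-part of the system survives, and there both channels act by the same isometry $V$; this is precisely the content of property \ref{MainProp3} failing for every admissible subspace.
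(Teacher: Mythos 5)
Your identification of the common maximal vacuum subspace $\mathcal{V}:=\mathcal{V}_{T_A}=\mathcal{V}_{T_B}$ with $T_A\vert_{\trcl(\mathcal{V})}=T_B\vert_{\trcl(\mathcal{V})}$ is correct and is exactly how the paper rephrases the hypothesis (Remark \ref{DifferentCharacterizationsRemark}). The gap is in the central coupling step. Projecting the $n$-th environment register onto $\bra{\xi_T}$ does not replace $T$ by ``$V$ on $\mathcal{V}$ and zero on $\mathcal{V}^\bot$''; it implements the quantum operation $T^\downarrow(\cdot)=K_T\cdot K_T^\dagger$ with Kraus operator $K_T=(\bra{\xi_T}\otimes\idmat)W_T$. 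By definition of the maximal vacuum subspace, $\psi\in\mathcal{V}_T^\bot$ only means that $W_T\psi$ does not lie \emph{entirely} in $\mathrm{span}\{\xi_T\}\otimes\mathcal{H}$; its component along $\xi_T$ can be nonzero and is in no way constrained by the hypothesis, which says nothing about the channels off $\mathcal{V}$. Hence $K_{T_A}$ and $K_{T_B}$ agree on $\mathcal{V}$ but may differ arbitrarily on $\mathcal{V}^\bot$, and whenever the intermediate states have weight outside $\mathcal{V}$ the two post-selected vectors do \emph{not} coincide. The entire difficulty of the theorem lives in controlling exactly this contribution, and your argument assumes it away.

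That something must be wrong is visible from your universal constant $C=4$: take $\mathcal{H}=\mathrm{span}\{v,p\}$ and Stinespring isometries with $W_Xv=\xi\otimes v$ and $W_Xp=\sqrt{1-\epsilon}\,\xi\otimes\phi_X+\sqrt{\epsilon}\,\xi^\bot\otimes\phi_X'$, $\phi_X\perp v$, with $\phi_A\perp\phi_B$. Then $\mathcal{V}_{T_A}=\mathcal{V}_{T_B}=\mathrm{span}\{v\}$, the hypothesis holds, probing once with $p$ gives $P_I^{T_X}(D)=\epsilon$ while $T_A(\ket{p}\bra{p})$ and $T_B(\ket{p}\bra{p})$ are nearly orthogonal, so $(1-2P_e)^2\to 1$ as $\epsilon\to 0$ while $\max(P_I^{T_A},P_I^{T_B})\to 0$. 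Any valid constant must therefore depend on the channels. This is precisely where the paper works harder: Proposition \ref{No-GoInequality} bounds the one-step fidelity loss $\sqrt{F}(\rho,\sigma)-\sqrt{F}(T_A^\downarrow(\rho),T_B^\downarrow(\sigma))$ by $C_{\mathcal{V},\mathcal{W}}^{(T_A^\downarrow,T_B^\downarrow)}\sqrt{\tr{P^\bot\rho}\tr{P^\bot\sigma}}$ via Uhlmann purifications, Proposition \ref{NoGoTheorem} iterates this over the $N$ steps using monotonicity and strong concavity of the fidelity, and Lemma \ref{MaximalVacuumSubspaceLemma}(\ref{prop5}) converts $\tr{P^\bot\rho}$ into $\mathfrak{i}_T(\rho)$ at the cost of channel-dependent constants $C_{T_A},C_{T_B}$ (a finite-dimensional compactness argument). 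Your gentle-measurement endgame could only be salvaged if you first establish a quantitative version of the branch-agreement claim, which essentially is the paper's Proposition \ref{No-GoInequality}.
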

\noindent Clearly, this implies the converse in Theorem \ref{MainResult}.

As a byproduct, we obtained an inequality for the fidelity, which might be of independent interest.
\begin{prop} \label{FidelityInequality}
For $\mathrm{dim}(\mathcal{H}) < \infty$, let $T_A^\downarrow, T_B^\downarrow : \trcl(\mathcal{H}) \rightarrow \trcl(\mathcal{H})$ be quantum operations and let $\mathcal{V}$ be a subspace of $\mathcal{H}$ such that $T_A^\downarrow\vert_{\trcl(\mathcal{V)}} = T_B^\downarrow\vert_{\trcl(\mathcal{V)}}$ and $T_A^\downarrow\vert_{\trcl(\mathcal{V)}}$ is trace-preserving. Then
\begin{align*}
    \sqrt{F}(T^\downarrow_A(\rho), T^\downarrow_B(\sigma)) \geq \sqrt{F}(\rho, \sigma) - 2 \sqrt{F}(P^\bot \rho P^\bot, P^\bot \sigma P^\bot),
\end{align*}
for all $\rho, \sigma \geq 0$, where $P^\bot$ is the orthogonal projection onto $\mathcal{V}^\bot$.
\end{prop}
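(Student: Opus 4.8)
The plan is to dilate both operations, push the whole problem down to the level of the Stinespring isometries (where the hypothesis on $\mathcal V$ produces a rigid block structure), and then read off the inequality via Uhlmann's theorem together with one elementary trace‑norm identity. First I would remove the normalisation: all three terms in the claimed inequality are homogeneous of degree $\tfrac12$ separately in $\rho$ and in $\sigma$ (each scales like $\sqrt{\tr{\rho}\,\tr{\sigma}}$), and the cases $\rho=0$ or $\sigma=0$ are trivial, so one may assume $\tr{\rho}=\tr{\sigma}=1$. Next, fix Stinespring forms $T^\downarrow_A(\cdot)=\ptr{E}{\tilde A\,\cdot\,\tilde A^\dagger}$ and $T^\downarrow_B(\cdot)=\ptr{E}{\tilde B\,\cdot\,\tilde B^\dagger}$ with one common finite‑dimensional environment $\mathcal H_E$ and contractions $\tilde A,\tilde B:\mathcal H\to\mathcal H_E\otimes\mathcal H$ (so $\tilde A^\dagger\tilde A,\tilde B^\dagger\tilde B\le\idmat$). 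Let $P$ be the projection onto $\mathcal V$, so $P^\bot$ is as in the statement. The equality $T^\downarrow_A\vert_{\trcl(\mathcal V)}=T^\downarrow_B\vert_{\trcl(\mathcal V)}=:S$ together with trace‑preservation there says that $\tilde A P$ and $\tilde B P$ are both Stinespring isometries of the same channel $S$; by the essential uniqueness of the Stinespring dilation I may compose one of them with a unitary on $\mathcal H_E$ so that $\tilde A P=\tilde B P$. Moreover, trace‑preservation on $\trcl(\mathcal V)$ forces $P\tilde A^\dagger\tilde A P=P=P\tilde B^\dagger\tilde B P$, and since $\tilde A^\dagger\tilde A\le\idmat$ this in turn forces $\tilde A^\dagger\tilde A$ (and likewise $\tilde B^\dagger\tilde B$) to commute with $P$. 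Hence $M:=\tilde A^\dagger\tilde B$ is block diagonal for $\mathcal H=\mathcal V\oplus\mathcal V^\bot$, equals $\idmat$ on $\mathcal V$, and is a contraction: $M=\idmat_{\mathcal V}\oplus N$ with $N:=P^\bot M P^\bot$, $\norm{N}\le 1$.

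Now apply Uhlmann's theorem. For a purification $\ket{u_\rho}\in\mathcal H\otimes\mathcal H_R$ of $\rho$ the vector $(\tilde A\otimes\idmat)\ket{u_\rho}$ purifies $T^\downarrow_A(\rho)$, and $(\tilde B\otimes\idmat)\ket{u_\sigma}$ purifies $T^\downarrow_B(\sigma)$; both live in $\mathcal H_E\otimes\mathcal H\otimes\mathcal H_R$, so $\sqrt{F}(T^\downarrow_A(\rho),T^\downarrow_B(\sigma))\ge\bigl\lvert\bra{u_\rho}(M\otimes\idmat)\ket{u_\sigma}\bigr\rvert$, and optimising over purifications gives the clean bound
\begin{align*}
\sqrt{F}\bigl(T^\downarrow_A(\rho),T^\downarrow_B(\sigma)\bigr)\ \ge\ \norm{\sqrt{\rho}\,M\,\sqrt{\sigma}}_1 .
\end{align*}
Writing $M=\idmat-(\idmat-M)$ and noting $\idmat-M=P^\bot(\idmat-M)P^\bot=P^\bot-N$ is supported on $\mathcal V^\bot$, the triangle inequality yields
\begin{align*}
\norm{\sqrt{\rho}\,M\,\sqrt{\sigma}}_1\ \ge\ \sqrt{F}(\rho,\sigma)-\norm{\sqrt{\rho}\,P^\bot\sqrt{\sigma}}_1-\norm{\sqrt{\rho}\,P^\bot N P^\bot\sqrt{\sigma}}_1 .
\end{align*}

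For the middle term I would use the elementary identity $\norm{XY^\dagger}_1=\norm{\,\lvert X\rvert\,\lvert Y\rvert\,}_1$ for arbitrary operators $X,Y$, where $\lvert X\rvert=\sqrt{X^\dagger X}$; it follows by iterating $\norm{XY^\dagger}_1=\tr{\sqrt{Y X^\dagger X Y^\dagger}}=\norm{\,\lvert X\rvert\,Y^\dagger}_1$. Applied to $X=\sqrt{\rho}\,P^\bot$, $Y=\sqrt{\sigma}\,P^\bot$ (so $XY^\dagger=\sqrt{\rho}\,P^\bot\sqrt{\sigma}$) it gives exactly
\begin{align*}
\norm{\sqrt{\rho}\,P^\bot\sqrt{\sigma}}_1=\norm{\sqrt{P^\bot\rho P^\bot}\,\sqrt{P^\bot\sigma P^\bot}}_1=\sqrt{F}\bigl(P^\bot\rho P^\bot,P^\bot\sigma P^\bot\bigr),
\end{align*}
which is the first of the two subtracted fidelities; I expect this identity to be the ``inequality of independent interest'' alluded to, and it is the step where the $\mathcal V^\bot$-corners enter.

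The remaining, and most delicate, point is to control the $N$-term $\norm{\sqrt{\rho}\,P^\bot N P^\bot\sqrt{\sigma}}_1$ by $\sqrt{F}(P^\bot\rho P^\bot,P^\bot\sigma P^\bot)$; this is where the argument has to work, rather than merely the crude bound $\norm{\sqrt{\rho}P^\bot N P^\bot\sqrt{\sigma}}_1\le\norm{\sqrt{\rho}P^\bot}_2\,\norm{P^\bot\sqrt{\sigma}}_2$. I would again factor $\sqrt{\rho}\,P^\bot N P^\bot\sqrt{\sigma}=(\tilde A P^\bot\sqrt{\rho})^\dagger(\tilde B P^\bot\sqrt{\sigma})$, apply the same trace-norm identity to rewrite its trace norm as $\sqrt{F}\bigl((\tilde A P^\bot)(P^\bot\rho P^\bot)(\tilde A P^\bot)^\dagger,\ (\tilde B P^\bot)(P^\bot\sigma P^\bot)(\tilde B P^\bot)^\dagger\bigr)$, and then exploit that the contractions $\tilde A P^\bot$ and $\tilde B P^\bot$ both map $\mathcal V^\bot$ into the \emph{common} orthogonal complement of $\tilde A P(\mathcal V)=\tilde B P(\mathcal V)$ inside $\mathcal H_E\otimes\mathcal H$ — using the residual freedom to modify $\tilde A,\tilde B$ by unitaries on $\mathcal H_E$ fixing $\tilde A P(\mathcal V)$ to align the two dilations so that a data-processing argument bounds the passage through $\tilde A P^\bot$ and $\tilde B P^\bot$. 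Granting this last estimate, the three bounds combine to
\begin{align*}
\sqrt{F}\bigl(T^\downarrow_A(\rho),T^\downarrow_B(\sigma)\bigr)\ \ge\ \sqrt{F}(\rho,\sigma)-2\,\sqrt{F}\bigl(P^\bot\rho P^\bot,P^\bot\sigma P^\bot\bigr),
\end{align*}
which is the assertion. The main obstacle is precisely the boxed-in handling of the $N$-term: everything else is soft (dilation, Uhlmann, triangle inequality, the trace-norm identity), but the $N$-term genuinely uses both the block structure coming from trace-preservation on $\mathcal V$ and the contractivity of the dilations, and getting the sharp fidelity-valued bound there rather than a cruder overlap bound is the crux.
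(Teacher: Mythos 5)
Your preparatory steps are all correct, and they amount to a purification-level rephrasing of the first half of the paper's Proposition \ref{No-GoInequality}: the common dilation with $\tilde A P=\tilde B P$, the block structure $M=\idmat_{\mathcal V}\oplus N$ with $\norm{N}\le 1$, the Uhlmann bound $\sqrt{F}(T_A^\downarrow(\rho),T_B^\downarrow(\sigma))\ge\norm{\sqrt\rho\,M\sqrt\sigma}_1$, and the identity $\norm{\sqrt\rho\,P^\bot\sqrt\sigma}_1=\sqrt{F}(P^\bot\rho P^\bot,P^\bot\sigma P^\bot)$ are all valid. But the step you yourself flag as the crux is not merely unproven --- the estimate you need there is false. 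Take $\mathcal H=\mathbb{C}^3$, $\mathcal V=\mathrm{span}\{e_0\}$, $T_A^\downarrow=\idop$ and $T_B^\downarrow(\cdot)=U\cdot U^\dagger$ with $U=\idmat_{\mathcal V}\oplus(\ket{e_1}\bra{e_2}+\ket{e_2}\bra{e_1})$. Both are channels that agree (as the identity) on $\trcl(\mathcal V)$, and the alignment forces $M=U$ and $N=\ket{e_1}\bra{e_2}+\ket{e_2}\bra{e_1}$; since $\idop$ and $U\cdot U^\dagger$ are extremal, every common dilation has the form $\tilde A=a\otimes\idmat$, $\tilde B=a\otimes U$, so there is no residual freedom to exploit. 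With $\rho=\ket{e_1}\bra{e_1}$ and $\sigma=\ket{e_2}\bra{e_2}$ one gets $\norm{\sqrt\rho\,N\sqrt\sigma}_1=\norm{\ket{e_1}\bra{e_2}}_1=1$ while $\sqrt{F}(P^\bot\rho P^\bot,P^\bot\sigma P^\bot)=0$; grouping $P^\bot-N$ together and allowing the factor $2$ does not help ($1\not\le 0$). The data-processing fallback is also doomed: fidelity is non-decreasing under channels (wrong direction), and here two different maps act on the two arguments --- indeed your rewritten quantity $\sqrt{F}(\tilde AP^\bot\rho P^\bot\tilde A^\dagger,\tilde BP^\bot\sigma P^\bot\tilde B^\dagger)$ equals $1$ in this example. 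The structural reason is that once you pass to the trace norm of $\sqrt\rho(\idmat-M)\sqrt\sigma$ you have a quantity that can be as large as $\norm{\idmat-M}\sqrt{\tr{P^\bot\rho}\tr{P^\bot\sigma}}$, and $\sqrt{\tr{P^\bot\rho}\tr{P^\bot\sigma}}$ dominates (rather than is dominated by) the corner fidelity.

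The paper's route avoids this by never isolating the error term as a trace norm of an operator product. It fixes the Uhlmann-optimal purifications of $(\rho,\sigma)$, keeps the error as the single inner product $\bra{u_\rho}((\idmat-M)\otimes\idmat)\ket{u_\sigma}$, and estimates it blockwise with respect to an orthogonal resolution $\{P_k\}$ of $P^\bot$ via Cauchy--Schwarz, giving $\sum_k\norm{P_k(\idmat-M)P_k}\sqrt{\tr{P_k\rho}\tr{P_k\sigma}}\le 2\sum_k\sqrt{\tr{P_k\rho}\tr{P_k\sigma}}$ (Proposition \ref{No-GoInequality}). Proposition \ref{FidelityInequality} then follows by choosing $\{P_k\}$ to be the rank-one projections of the optimal (Fuchs--Caves) measurement for the pair $(P^\bot\rho P^\bot,P^\bot\sigma P^\bot)$, for which $\sum_k\sqrt{\tr{P_k\rho}\tr{P_k\sigma}}$ equals the corner fidelity exactly. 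In other words, the corner fidelity has to enter as a \emph{measured} fidelity in an optimally chosen basis, not through your (correct, but unusable at this stage) identity $\norm{XY^\dagger}_1=\norm{\abs{X}\,\abs{Y}}_1$. To repair your argument you would need to perform this block decomposition before taking any trace norms.
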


\newpage 
\section{The models} \label{ModelSection}

In this section, we propose two different, but in the end largely equivalent models that generalize the notion \sq{interaction-free} measurement to quantum channels. Since the sequential scheme, given in Figure \ref{GeneralDiscriminationProtocol}, is the most general causally ordered strategy allowed in quantum theory \cite{ChiribellaQunatumNetwork}, it suffices to define our notions for this kind of strategy. In both models, we assume the validity of Figure \ref{GeneralDiscriminationProtocol}. That is, we assume that the unknown channel $T$ does not change during the execution of the discrimination strategy - the Markovianity assumption. This is a relatively weak assumption, since we are in control of the duration between the individual channel invocations. This section consists of four subsections. In the first two subsections we derive our two models. The third subsection summarizes the former two by properly defining the quantities of merit and thereby setting the stage for a rigorous analysis in the later sections. In the fourth subsection, we compare the two models by deriving some elementary properties, which will be used later on.    

\subsection{The \sq{interaction} model}

In our first model, we interpret the term \sq{interaction-free} in an information theoretic way. That is, we imagine a Daemon sitting in the box trying to figure out, if we interacted with the interior of the box. In more technical terms, this means that the Daemon has full access to the output of the conjugate channel. Since the our task would be trivially infeasible otherwise, there must be a way not to interact with the box. Therefore, we only consider channels with vacuum. That is, we assume that for all channels under consideration there exists a distinguished pure state, the \textit{vacuum state}, $\ket{v}\bra{v}$. This state is assumed to have the following two important properties: First, if the vacuum state is sent through the channel, then the Daemon concludes that no interaction has occurred. Second, we assume that the channels under considerations map the vacuum state to a pure state. This assumption is physically reasonable as it means that the state of the probe system does not become entangled with the Daemon's system. If on the contrary, the probe system would become entangled with the Daemon's system, then there must have been an interaction and the term \sq{interaction-free} measurement would be inappropriate. We should mention, however, that the transmission model, which we are going to describe in the next section, does not use the \sq{\sq{vacuum maps to pure state}} assumption. This comes at the cost that the transmission functional is no longer a property of a channel (as the \sq{interaction} functional will turn out to be) but rather an object that has to be modeled separately. Together these two assumptions yield the definition of a channel with vacuum (Definition \ref{ChannelWithVacuumDef}). 
For a given channel $T$ with vacuum $v \in \mathcal{H}_I$ and an $N$-step discrimination strategy $D = (\mathcal{H}_I, \mathcal{H}_Z, \mathcal{H}_i, \mathcal{H}_o, s_0, \Lambda)$, we want to define the \textit{\sq{interaction} probability} $P_I^T(D)$ as the probability that the Daemon in the box encounters that, during the execution of $D$, something else than the vacuum state was sent through the channel. To define this probability, we need to specify how the Daemon can obtain information about what was sent through the channel. 
\begin{figure}[htbp]
    \centering
    \includegraphics[width=.9\textwidth]{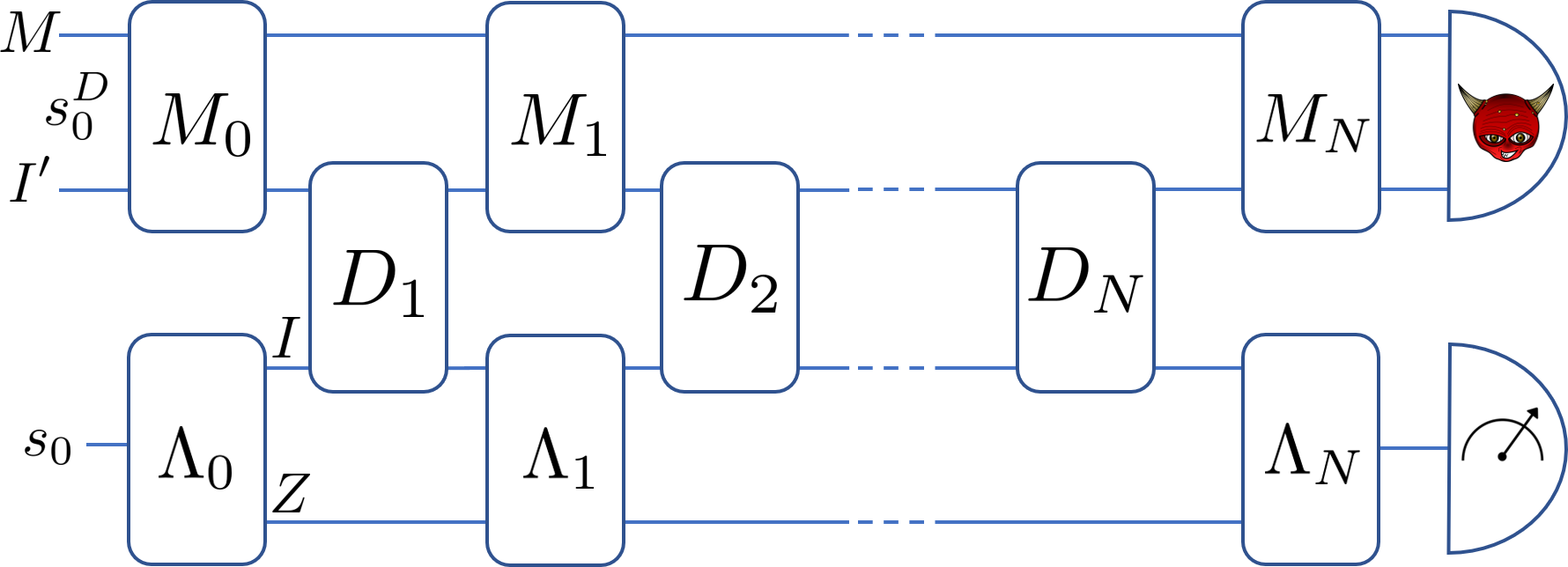}
    \caption{General Scenario}
    \label{Zenon1Picture}
\end{figure}A natural way to model this is by assuming that for each of the $N$ channel-uses (indexed by $n$) in the discrimination strategy, the Daemon is allowed to implement the channel $T$ via a channel $D_n : \trcl(\mathcal{H}_{I^\prime}\otimes \mathcal{H}_I) \rightarrow \trcl(\mathcal{H}_{I^\prime}\otimes \mathcal{H}_I)$, where $\mathcal{H}_{I^\prime}$ is the Hilbert space associated to a system $I^\prime$, which the Deamon controls. We further allow the Daemon to keep an arbitrarily large memory system $M$ (with Hilbert space $\mathcal{H}_M$) which he can manipulate freely (i.e., he can choose the channels $M_n$, defined below). The most general (causally ordered) scheme that can be obtained from the above description is depicted in Figure \ref{Zenon1Picture}. Mathematically, the Daemon's strategy is completely determined by an initial state $s_0^D \in \states(\mathcal{H}_M \otimes \mathcal{H}_{I^\prime})$ and channels $M_0, M_1, \dots, M_N : \trcl(\mathcal{H}_M \otimes \mathcal{H}_{I^\prime}) \rightarrow \trcl(\mathcal{H}_M \otimes \mathcal{H}_{I^\prime})$ and $D_1, D_2, \dots ,D_N : \trcl(\mathcal{H}_{I^\prime} \otimes \mathcal{H}_I) \rightarrow \trcl(\mathcal{H}_{I^\prime} \otimes \mathcal{H}_I)$. Given that data, the scheme in Figure \ref{Zenon1Picture} produces the output (or final) state $\rho_{F} \in \states(\mathcal{H}_M \otimes \mathcal{H}_{I^\prime} \otimes\mathcal{H}_o)$, defined by
\begin{align*} 
    \rho_{F} := (M_N \otimes \Lambda_N) (\idop \otimes D_N \otimes \idop) \dots  (M_1 \otimes \Lambda_1)  (\idop \otimes D_1 \otimes \idop)(M_0 \otimes \Lambda_0)(\chi_0), 
\end{align*}
where $\chi_0 := s_0^D \otimes s_0$. In the end, the Daemon will measure his system ($M+I^\prime$) and decide, based on the measurement outcome, if an interaction has occurred. The \sq{interaction} probability is then the probability that he detects such an interaction, if he chooses his strategy optimally within the given constraints. \\  
Before we can analyze what the Deamon's optimal strategy is, we still need to cast the assumptions that $D_n$ implements $T$, and that $T$ must be independent of the Deamon's strategy (Markovianity) into a mathematical form. Precisely, we assume that $D_n$ must be such that if the Daemon's system ($I^\prime$) and $I$ are uncorrelated, then the action on the system $I$ must be independent of the state of the system $I^\prime$. In formulas: We assume that
\begin{align} \label{SemiCausalCondition}
    \ptr{I}{D_n(\rho_{I^\prime}\otimes \rho_{I})} = T(\rho_I), \text{ for all } \rho_{I^\prime} \in \states(\mathcal{H}_{I^\prime}) \text{ and } \rho_I \in \states(\mathcal{H}_I). 
\end{align}
We now note that \eqref{SemiCausalCondition} is exactly the definition of a semi-causal channel, as introduced in \cite{CausalLocalizableOriginalPaper}. A structure theorem by Eggeling, Schlingemann
and Werner \cite{Eggeling_2002} tells us that semi-causal channels are semi-localizable. That is, $D_n$ can be written in the form
\begin{align*} 
    D_n(\rho_{I^\prime I}) = \ptr{E_n}{ (X_n \otimes \idop_I) (\idop_{I^\prime} \otimes \hat{V}_n)(\rho_{I^\prime I})},
\end{align*}
where $\hat{V}_n : \trcl(\mathcal{H}_I) \rightarrow \trcl(\mathcal{H}_{E_n} \otimes \mathcal{H}_I),$ defined by $\hat{V}_n(\cdot) = V_n \cdot V_n^\dagger$ is the quantum channel associated with a Stinespring isometry $V_n: \mathcal{H}_I \rightarrow \mathcal{H}_{E_n} \otimes \mathcal{H}_I$ of $T$ and $X_n : \trcl(\mathcal{H}_{I^\prime} \otimes \mathcal{H}_{E_n}) \rightarrow \trcl(\mathcal{H}_{I^\prime} \otimes \mathcal{H}_{E_n})$ is some channel. To proceed further in our search for the Daemon's optimal strategy, we make a few simplifying observations and definitions. Firstly, the unitary freedom in the Stinespring dilation $\hat{V}_n$ can be absorbed into the channel $X_i$. We can therefore assume, without loss of generality, that $\mathcal{H}_{E_1} = \mathcal{H}_{E_2} = \dots = \mathcal{H}_{E_N} =: \mathcal{H}_E$ and $\hat{V}_1 = \hat{V}_2 = \dots = \hat{V}_N =: \hat{V}$. Secondly, for $\rho \in \states(\mathcal{H}_M \otimes \mathcal{H}_{I^\prime} \otimes \mathcal{H}_I)$, we have
\begin{align*}
    (M_n \otimes \idop_{I})D_n(\rho) = \ptr{E_n}{( \left[M_n \otimes \idop_{E_n})(\idop_M \otimes X_n)\right]  \otimes \idop_I) (\idop_{M I^\prime} \otimes \hat{V}_n)(\rho)},
\end{align*}
which motivates the definition $\underline{X}_n := (M_n \otimes \idop_{E_n})(\idop_M \otimes X_n)$. In the following, we will adopt the convention that if some channel acts trivially on a tensor factor (i.e. as the identity), then we omit these tensor factors in the notation (e.g., $\underline{X}_i \otimes \idop_I$ becomes just $\underline{X}_i$). With the newly introduced notation, it follows from the definition of $\sigma_F$ that the state the Daemon obtains is 
\begin{align*}
    \ptr{IZ}{\rho_F} = \mathrm{tr}_{I Z} \, \Lambda_N \mathrm{tr}_{E_N} \underline{X}_N \hat{V}_N  \Lambda_{N-1} \mathrm{tr}_{E_{N-1}} \underline{X}_{N-1} \dots  \Lambda_{1} \mathrm{tr}_{E_1} \underline{X}_{1}\hat{V}_{1} M_0\Lambda_0 (\chi_0).
\end{align*}
We can commute the $\underline{X}_i$s and $\mathrm{tr}_{E_i}$s to the left. Thus, upon defining the channel $\Gamma : \trcl(\mathcal{H}_{E_N}\otimes\mathcal{H}_{E_{N-1}}\otimes \dots \otimes \mathcal{H}_{E_1}) \rightarrow \trcl(\mathcal{H}_{M}\otimes\mathcal{H}_{I^\prime})$ by
\begin{align*}
    \Gamma(\rho) = \mathrm{tr}_{E_N} \underline{X}_N \mathrm{tr}_{E_{N-1}}\underline{X}_{N-1} \dots \mathrm{tr}_{E_1 }\underline{X}_{1}M_0(s_0^D\otimes \rho),
\end{align*}
we have
\begin{align*}
    \ptr{IZ}{\rho_F} = \Gamma(\mathrm{tr}_{I Z}\, \Lambda_N \hat{V}_N \Lambda_{N-1} \hat{V}_{N-1} \dots  \Lambda_{1} \hat{V}_{1}\Lambda_0(s_0)). 
\end{align*}
To decide if the channel was ever applied to a state different from the vacuum state, the Daemon measures his state with a two valued POVM, $\{Q_1, Q_2\}$.
By convention, he will conclude that an interaction occurred (something else than the vacuum was sent through), if the event corresponding to $Q_2$ occurs. If the state sent through the channel is always the vacuum state, then the Daemon's final state is
\begin{align*}
    \Gamma(\left[\ptr{I}{V\ket{v}\bra{v}V^\dagger}\right]^{\otimes N}),
\end{align*}
where the tensor power is in the space $\mathcal{H}_{E_N} \otimes \mathcal{H}_{E_{N-1}} \otimes \dots \otimes \mathcal{H}_{E_1}$. Since the Daemon must not report an interaction, if the state was always the vacuum state, we demand
\begin{align*}
0 = \tr{Q_2 \Gamma(\left[\ptr{I}{V\ket{v}\bra{v} V^\dagger}\right]^{\otimes N})} = \tr{\Gamma^*(Q_2)  \left[\ptr{I}{V\ket{v}\bra{v} V^\dagger}\right]^{\otimes N}},
\end{align*}
where $\Gamma^*$ denotes the channel $\Gamma$ in the Heisenberg picture. Clearly, if $\Gamma^*(Q_2) = \idmat^{\otimes N} - P_v^{\otimes N}$, where $P_v$ is the orthogonal projection onto the support of $\ptr{I}{V\ket{v}\bra{v} V^\dagger}$, then this requirement is fulfilled. Since we want to choose the optimal strategy the Daemon can pursue, we want to set $\Gamma^*(Q_2) := \idmat^{\otimes N} - P_v^{\otimes N}$. We can always choose $\Gamma$ and $Q_2$ to satisfy the last equation, because this corresponds to the strategy where the Daemon simply stores all the states he obtains from the Stinespring dilation in each round. This justifies the graphical representation in Figure \ref{Zenon2Picture}. Since we defined the \sq{interaction} probability to be the probability that the Daemon concludes that an interaction occurred (if he acts optimally), we have 
\begin{align} \label{PiOpendef}
    P_I^T(D) := \tr{(\idmat^{\otimes N} - P_v^{\otimes N}) \; \mathrm{tr}_{I Z} \hat{V}_N \Lambda_{N-1} \hat{V}_{N-1} \dots  \Lambda_{1} \hat{V}_{1}(\rho_0^T)},
\end{align}
where $\rho_0^T := \Lambda_0(s_0)$ is the first intermediate state.
We remark that the definition of $P_I^T(D)$ does not depend on the particular choice of the Stinespring dilation, since the unitary freedom in the Stinespring isometries is compensated by the equal and opposite freedom in $P_v$. 
\begin{figure}[htbp]
    \centering
    \includegraphics[width=.9\textwidth, height = 3.5cm]{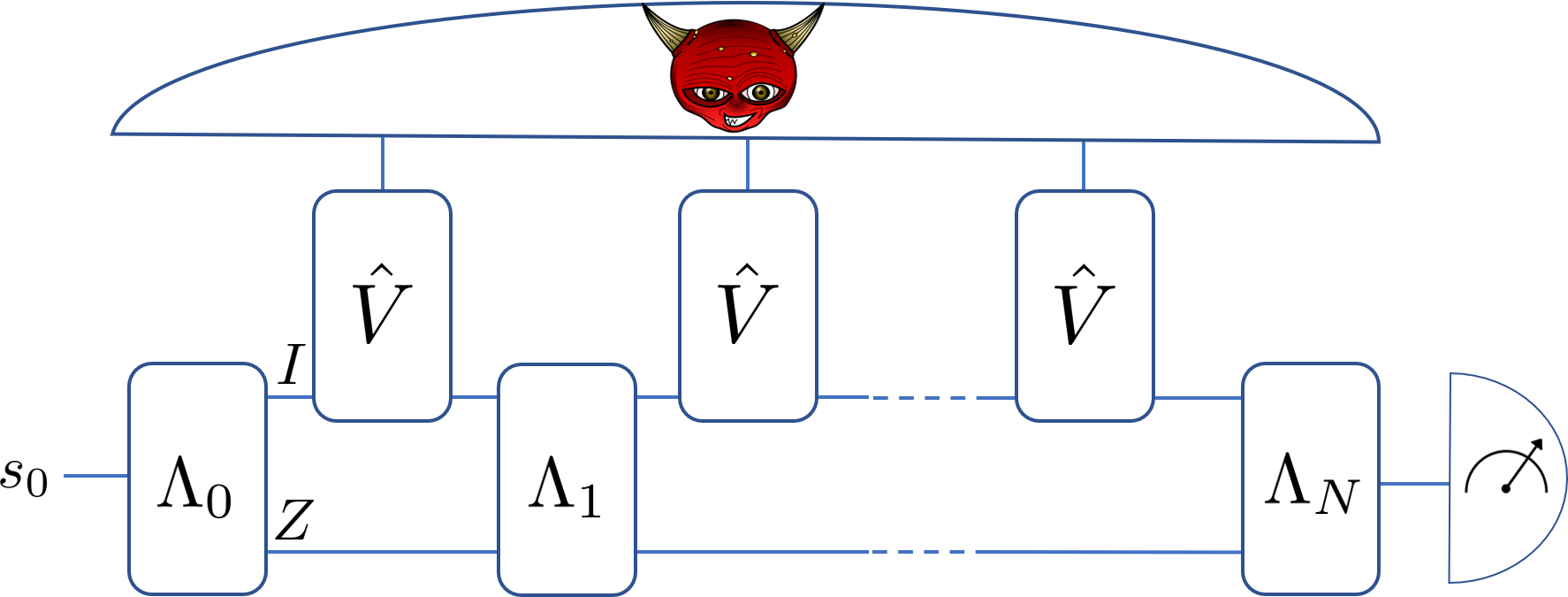}
    \caption{Scenario, when the Daemon's strategy is optimal.}
    \label{Zenon2Picture}
\end{figure}
We can simplify this expression a bit. We define $P_v^\bot := \idmat - P_v$ and note that
\begin{align*}
    \idmat^{\otimes N} - P_v^{\otimes N} &= \sum_{n = 0}^{N-1} \idmat^{\otimes N-n-1} \otimes P_v^\bot \otimes P_v^{\otimes n}\\
    P_v^\bot \otimes P_v^{\otimes K} &= \prod_{j = 0}^{K-1} P_v^\bot \otimes \idmat^{\otimes j} \otimes P_v \otimes \idmat^{\otimes  K - j - 1}.
\end{align*}
Using these two expressions and, excessively, that $\Lambda_n$ is trace-preserving, we obtain our final version for $P_I^T(D)$,
\begin{align*}
    P_I^T(D) &= \sum_{n=0}^{N-1} \tr{\idmat^{\otimes N-n-1} \otimes P_v^\bot \otimes P_v^{\otimes n} \; \mathrm{tr}_{I Z} \hat{V}_N \Lambda_{N-1} \hat{V}_{N-1} \dots  \Lambda_{1} \hat{V}_{1}(\rho_0^T)} \\
    &= \sum_{n=0}^{N-1} \tr{P_v^\bot \otimes P_v^{\otimes n} \; \mathrm{tr}_{I Z}  \hat{V}_{n+1} \Lambda_{n} \hat{V}_{n} \dots  \Lambda_{1} \hat{V}_{1}(\rho_0^T)}\\
    &= \sum_{n=0}^{N-1} \tr{P_v^\bot \mathrm{tr}_{I Z}( \hat{V}_{n+1} (  \Lambda_{n} ( \mathrm{tr}_{E_i} ( (P_v\otimes \idmat) \hat{V}_{n}( .. \mathrm{tr}_{E_1}( (P_v\otimes \idmat)\hat{V}_1(\rho_0^T) ..)} \\
    &= \sum_{n=0}^{N-1} \tr{P_v^\bot \mathrm{tr}_I \hat{V}(\mathrm{tr}_Z(\Lambda_i T^\downarrow \Lambda_{n-1} T^\downarrow \dots \Lambda_{1} T^\downarrow(\rho_0^T)))}\\ &= \sum_{n=0}^{N-1} \tr{P_v^\bot \mathrm{tr}_I \hat{V}(\ptr{Z}{\rho_n^{T^\downarrow}})}.
\end{align*}
In the second last line, we defined $T^\downarrow(\cdot) = \ptr{E}{(P_v\otimes \idmat) V \cdot V^\dagger}$ and $\rho_n^{T^\downarrow}$ is determined by the intermediate state map. We have thus succeeded in our goal to define the \sq{interaction} probability.
\begin{rem}
It is immediate from \eqref{PiOpendef} that an alternative expression for $P_I^T(D)$ is given by
\begin{align*}
    P_I^T(D) = \tr{\rho^{T^\downarrow}_N}.
\end{align*}
There are two reasons to prefer the lengthy version derived above. Firstly, it makes the connection between the \sq{interaction} model and the transmission model (defined below) explicit and thus allows us to treat these points of view on an equal footing. Secondly, it suggests to approach the problem by looking at the inputs of the individual channel uses, which turns out to be fruitful. 
\end{rem}

\subsection{The transmission model}

In our second model, we think of an interaction as something that does damage to the system in the box. As a guiding example, we think of a biological system - say a body cell. For the sake of the argument, assume that we want to use high-energetic radiation (e.g. x-ray) to resolve the inner structure of the cell. Of course, radiation might damage the cell, which is usually undesirable. A reasonable measure for how much damage has been done to a cell seems to be the number of x-ray photons that were absorbed by the cell. In other words, the damage is quantified by the amount of energy that got \textit{transmitted} from the probe system (x-ray) to the interior of the box (biological cell). Furthermore, if the cell is exposed to radiation several times, then the damage measure should be the sum of the number of photons that were absorbed each time. 
Let us now abstract away from this example. Assume that the system in the box is modeled quantum mechanically on a Hilbert space $\mathcal{H}_E$ and that the probe system is modeled on $\mathcal{H}_I$. Assume that initially the system $E$ is in the state $\rho_E \in \states(\mathcal{H}_E)$. If we probe the system with a state $\rho_I \in \states(\mathcal{H}_I)$, then the combined evolution is described by a (not necessarily unitary) channel $U : \trcl(\mathcal{H}_E\otimes\mathcal{H}_I) \rightarrow \trcl(\mathcal{H}_E\otimes\mathcal{H}_I)$. Thus, the state of the combined system after the evolution is given by
\begin{align*}
    \rho^\prime_{E I} = U(\rho_E\otimes \rho_I) 
\end{align*}
Now assume that, in analogy to the number of absorbed photons in the example above, there is some physical quantity (an observable) that got transmitted from the probe system to the interior of the box by the above process, and that this quantity is related to the damage done to the object in the box. We further assume that the process above can only cause damage and cannot repair the system in the box. Thus, the observable must be a positive semi-definite operator $\Theta$ on the Hilbert space $\mathcal{H}_E$. Hence, for a single shot experiment, the important object is the positive linear functional  $\mathfrak{t} : \trcl(\mathcal{H}_I) \rightarrow \mathbb{C}$, defined by
\begin{align*}
    \mathfrak{t}(\rho_I) = \tr{\Theta \, \ptr{I}{U(\rho_E\otimes \rho_I)}}
\end{align*}
For a general $N$-step discrimination strategy $D$ (with intermediate state map $\rho$), we assume that the transmitted quantity is extensive. Since the state of the part of the probe system that interacts with the interior of the box in the $n$th step is given by $\ptr{Z}{\rho_n^T}$ ($T$ is the channel defined by $T(\rho_I) = \ptr{E}{U(\rho_E\otimes \rho_I)}$), a good definition for the \textit{total transmission} $\mathfrak{T}_T(D)$ is
\begin{align*} 
    \mathfrak{T}_T(D) := \sum_{n = 0}^{N-1} \mathfrak{t}_T\left(\ptr{Z}{\rho_n^{T}}\right)
\end{align*}
We raise this to a principle by assuming that for every channel $T$ we have a positive linear functional $\mathfrak{t}_T$, which we call the \textit{transmission functional}, that models the damage done to the object. The total transmission then plays the same role for the transmission model as the \sq{interaction} probability does for the \sq{interaction} model. 

\subsection{Formal definition}

We cast the principles developed in the last sections into formal definitions.

\begin{defn}[\sq{Interaction} functional] 
Let $T : \trcl(\mathcal{H}) \rightarrow \trcl(\mathcal{H})$ be a channel with vacuum $v \in \mathcal{H}$ and let $V : \mathcal{H} \rightarrow \mathcal{H}_E \otimes \mathcal{H}$ be any Stinespring isometry of $T$. The positive linear functional $\mathfrak{i}_T : \trcl(\mathcal{H}) \rightarrow \mathbb{C}$, defined by
\begin{align*}
    \mathfrak{i}_T(\cdot) := \tr{P_v^\bot \ptr{\mathcal{H}}{V \cdot V^\dagger}},
\end{align*}
is called the \textit{\sq{interaction} functional} of $T$, where $P_v^\bot$ is the orthogonal projection onto the kernel of $\ptr{\mathcal{H}}{V \ket{v}\bra{v} V^\dagger}$.
\end{defn}

\begin{defn}[\sq{Interaction} probability] \label{InteractionProbabilityDef}
Let $T : \trcl(\mathcal{H}) \rightarrow \trcl(\mathcal{H})$ be a channel with vacuum $v \in \mathcal{H}$ and let $D = (\mathcal{H}, \mathcal{H}_Z, \mathcal{H}_i, \mathcal{H}_o, s_0, \Lambda)$ be an $N$-step discrimination strategy. The \textit{\sq{interaction} probability} is defined by
\begin{align*}
    P_I^T(D) := \sum_{n = 0}^{N-1} \mathfrak{i}_T\left(\ptr{Z}{\rho_n^{T^\downarrow}}\right),
\end{align*}
where the quantum operation $T^\downarrow : \trcl(\mathcal{H}) \rightarrow \trcl(\mathcal{H})$ is defined by
\begin{align}\label{DefinitionDownarrowChannel}
    T^\downarrow(\cdot) = \ptr{E}{(P_v\otimes \idmat) V \cdot V^\dagger},
\end{align}
and where $V: \mathcal{H} \rightarrow \mathcal{H}_E \otimes \mathcal{H}$ is any Stinespring isometry of $T$ and $P_v$ is the orthogonal projection onto the support of $\ptr{\mathcal{H}}{V \ket{v}\bra{v} V^\dagger}$. 
\end{defn}
\begin{defn} [\sq{Interaction-free} discrimination] \label{Interaction-FreeDiscriminationFormalDef}
Let $v \in \mathcal{H}$ and $\mathcal{C}_A, \mathcal{C}_B \subseteq \blt(\trcl(\mathcal{H}))$ be two sets of channels such that for all $T \in \mathcal{C}_A \cup \mathcal{C}_B$, $T$ is a channel with vacuum $v$.
We say that $\mathcal{C}_A$ and $\mathcal{C}_B$ can be \textit{discriminated in an \sq{interaction-free} manner}, if for every $\epsilon, \delta > 0$ there exists an $N$-step discrimination strategy $D$ and a two valued POVM $\Pi$ such that 
\begin{align*}
    P_e(D, \Pi) < \epsilon \quad\text{ and }\quad  P_I^T(D) < \delta, 
\end{align*}
for all $T \in \mathcal{C}_A \cup \mathcal{C}_B$.
\end{defn}
\begin{defn}[Channel with transmission functional] \label{DefinitionChannelWithTransmissionFunctional}
A \textit{channel with transmission functional} $\mathfrak{t}_T$ is a channel $T : \trcl(\mathcal{H}) \rightarrow \trcl(\mathcal{H})$ together with a positive linear functional $\mathfrak{t}_T \in \left(\trcl(\mathcal{H})\right)^*$. We call $\mathfrak{t}_T$ the \textit{transmission functional}.  
\end{defn}
\begin{defn} [Total transmission]
Let $T : \trcl(\mathcal{H}) \rightarrow \trcl(\mathcal{H})$ be a channel with transmission functional $\mathfrak{t}_T$. For an $N$-step discrimination strategy $D = (\mathcal{H}, \mathcal{H}_Z, \mathcal{H}_i, \mathcal{H}_o, s_0, \Lambda)$, the \textit{total transmission} is defined by
\begin{align*}
    \mathfrak{T}_T(D) := \sum_{n = 0}^{N-1} \mathfrak{t}_T\left(\ptr{Z}{\rho_n^{T}}\right).
\end{align*}
\end{defn}

\begin{defn} [Transmission-free discrimination]
Let $\mathcal{C}_A, \mathcal{C}_B \subseteq \blt(\trcl(\mathcal{H}))$ be two sets of channels such that for all $T \in \mathcal{C}_A \cup \mathcal{C}_B$, $T$ is a channel with transmission functional $\mathfrak{t}_T$.
We say that $\mathcal{C}_A$ and $\mathcal{C}_B$ can be \textit{discriminated in a transmission-free manner}, if for every $\epsilon, \delta > 0$ there exists an $N$-step discrimination strategy $D$ and a two valued POVM $\Pi$ such that 
\begin{align*}
    P_e(D, \Pi) < \epsilon \quad \text{ and }\quad \mathfrak{T}_T(D) < \delta, 
\end{align*}
for all $T \in \mathcal{C}_A \cup \mathcal{C}_B$.
\end{defn}

\subsection{Comparison of the models and elementary properties}

In this section, we clarify the relation between the transmission model and the \sq{interaction} model. As a rule of thumb, the transmission model can be thought of as a generalization of the \sq{interaction} model. Since we admit arbitrary positive linear functionals as transmission functionals, we have a much greater flexibility at modeling. For example, one could decide that out of the two objects to be discriminated, it does not matter (or is even desirable) if the second one gets destroyed. We should therefore set the transmission functional of the second channel to zero. This is something that is not possible in the \sq{interaction} model. On the other hand, the advantage of the \sq{interaction} model is that the \sq{interaction} probability has a very clear interpretation and that the \sq{interaction} functional is an intrinsic property of the channel. For the relation between these models, we note the following lemma.    

\begin{lem} \label{ComparisionIFTF}
Let $T : \trcl(\mathcal{H}) \rightarrow \trcl(\mathcal{H})$ be a channel with vacuum $v \in \mathcal{H}$ and let $\mathfrak{i}_T$ be its \sq{interaction} functional. If we interpret $T$ as a channel with transmission functional $\mathfrak{i}_{T}$, then 
\begin{align*}
    P_I^{T}(D) \leq \mathfrak{T}_{T}(D)
\end{align*}
for all $N$-step discrimination strategies $D$. 
\end{lem}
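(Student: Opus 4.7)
The plan is to exploit the operator-sum decomposition of $T$ induced by the splitting of the identity $P_v + P_v^\bot = \idmat$ on the environment. Fix a Stinespring isometry $V : \mathcal{H} \to \mathcal{H}_E \otimes \mathcal{H}$ of $T$ and define the completely positive map $T^\uparrow(\rho) := \ptr{E}{(P_v^\bot \otimes \idmat)\, V \rho V^\dagger}$, which complements $T^\downarrow$ from \eqref{DefinitionDownarrowChannel} so that $T = T^\downarrow + T^\uparrow$. Since the transmission functional is by hypothesis $\mathfrak{t}_T = \mathfrak{i}_T$, the two quantities to compare differ only in that $P_I^T(D)$ uses the intermediate state map of $T^\downarrow$ while $\mathfrak{T}_T(D)$ uses the intermediate state map of $T$. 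The strategy is therefore to show, termwise for every $n$, that the positive functional $\mathfrak{i}_T$ sees no more mass in $\ptr{Z}{\rho_n^{T^\downarrow}}$ than in $\ptr{Z}{\rho_n^T}$, and this will follow from a Löwner-order monotonicity argument propagated through the recursion.

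Concretely, I would prove by induction on $n$ that $\rho_n^T \geq \rho_n^{T^\downarrow}$ in the Löwner order. The base case $n=0$ is immediate since both equal $\Lambda_0(s_0)$. For the inductive step, $T = T^\downarrow + T^\uparrow$ combined with the linearity of the recursion yields
\begin{align*}
\rho_n^T - \rho_n^{T^\downarrow} = \Lambda_n \circ (T^\downarrow \otimes \idop)\!\left(\rho_{n-1}^T - \rho_{n-1}^{T^\downarrow}\right) + \Lambda_n \circ (T^\uparrow \otimes \idop)\!\left(\rho_{n-1}^T\right).
\end{align*}
The first summand is positive by the inductive hypothesis together with complete positivity of $\Lambda_n \circ (T^\downarrow \otimes \idop)$, and the second by complete positivity of $\Lambda_n \circ (T^\uparrow \otimes \idop)$ applied to the positive operator $\rho_{n-1}^T$.

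Having the termwise operator inequality, the remainder is routine: partial trace over $\mathcal{H}_Z$ is positive, so $\ptr{Z}{\rho_n^T} \geq \ptr{Z}{\rho_n^{T^\downarrow}}$, and a positive linear functional preserves this inequality, giving $\mathfrak{i}_T(\ptr{Z}{\rho_n^{T^\downarrow}}) \leq \mathfrak{i}_T(\ptr{Z}{\rho_n^T})$. Summing from $n=0$ to $N-1$ and comparing with the definitions of $P_I^T(D)$ and $\mathfrak{T}_T(D)$ concludes the argument. I do not foresee a genuine obstacle; the only fiddly point is to confirm that $T^\uparrow$ is truly CP, which is seen by the identity $\ptr{E}{(P_v^\bot \otimes \idmat) V \rho V^\dagger} = \ptr{E}{(P_v^\bot \otimes \idmat) V \rho V^\dagger (P_v^\bot \otimes \idmat)}$ (check by pairing against any observable $Y$ and using $P_v^\bot = (P_v^\bot)^2$), after which $\tilde V := (P_v^\bot \otimes \idmat) V$ furnishes a Kraus decomposition.
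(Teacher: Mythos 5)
Your proof is correct and follows exactly the paper's route: the paper's one-line proof ("immediate from the definition, since by induction $\rho_i^{T^\downarrow}\leq \rho_i^T$") is precisely the Löwner-order induction you carry out, with your decomposition $T=T^\downarrow+T^\uparrow$ and the complete positivity of $T^\uparrow$ supplying the details the paper leaves implicit.
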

\begin{proof}
Immediate from the definition, since (by induction) $\rho^{T^\downarrow}_i \leq \rho^{T}_i$.
\end{proof}

The insight that should be gained from this lemma is that if we want to prove that a certain discrimination task can be done in an \sq{interaction-free} or in a transmission-free manner, then it suffices to tackle the problem in the transmission model. Thus, the results in Section \ref{ConstructivePart} will be formulated in terms of the transmission model. On the other hand, if we want to prove a no-go theorem, then it is sufficient to work in the \sq{interaction} model. At this point, there is a little detail that should not be swept under the rug, which is that it is possible that certain discrimination tasks can be performed with less resources, if one works in the \sq{interaction} model and not in the transmission model. We will not investigate this possibility any further. We close this section by introducing the concept of a \textit{maximal vacuum subspace}.   

\begin{defn}[Maximal vacuum subspace] \label{MaximalVacuumSubspaceDef}
Let $T : \trcl(\mathcal{H}) \rightarrow \trcl(\mathcal{H})$ be a channel with vacuum $v \in \mathcal{H}$ and let $V : \mathcal{H} \rightarrow \mathcal{H}_E \otimes \mathcal{H}$ be any Stinespring isometry of $T$. The subspace $\mathcal{V}_T$ of $\mathcal{H}$, defined by\footnote{$V^{-1}[\cdot]$ denotes the preimage operation.} 
\begin{align*}
    \mathcal{V}_T := V^{-1} \left[ \mathrm{supp}(\ptr{\mathcal{H}}{ V \ket{v}\bra{v} V^\dagger}) \otimes \mathcal{H} \right],
\end{align*}
is called the \textit{maximal vacuum subspace} of $T$.  
\end{defn}

\begin{lem}[Properties of maximal vacuum subspaces] \label{MaximalVacuumSubspaceLemma}
For $\mathrm{dim}(\mathcal{H}) < \infty$, let $T : \trcl(\mathcal{H}) \rightarrow \trcl(\mathcal{H})$ be a channel with vacuum $v \in \mathcal{H}$. The maximal vacuum subspace $\mathcal{V}_T$ has the following properties:
\begin{enumerate}
    \item $v \in \mathcal{V}_T$. \label{prop1}
    \item $T$ is isometric on $\mathcal{V}_T$. \label{prop2}
    \item If $T$ is isometric on a subspace $\mathcal{V}^\prime \subseteq \mathcal{H}$, then either $\mathcal{V}_T \cap \mathcal{V}^\prime = \{0\}$ or $\mathcal{V}^\prime \subseteq \mathcal{V}_T$.  \label{prop3}
    \item $\mathcal{V}_T$ is the union of all subspaces that contain $v$ and on which $T$ is isometric. \label{prop4}
    \item There exists a constant $C_T > 0$ such that $\mathfrak{i}_T(\rho) \geq C_T \tr{P^\bot \rho}$ for all $\rho \geq 0$, where $P^\bot$ is the projection onto $\mathcal{V}_T^\bot$. \label{prop5}
    \item For all $\rho \geq 0$, we have $\mathfrak{i}_T(\rho) \leq \tr{P^\bot \rho}$, where $P^\bot$ is the projection onto $\mathcal{V}_T^\bot$. \label{prop6}
\end{enumerate}
\end{lem}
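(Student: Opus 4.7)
The plan is to extract the structure of $V\ket{v}$ from the vacuum hypothesis, read off properties \ref{prop1}--\ref{prop4} directly from that structure, and then analyze the operator $V^\dagger(P_v^\bot \otimes \idmat)V$ to get the two functional inequalities. I would fix any Stinespring isometry $V : \mathcal{H} \to \mathcal{H}_E \otimes \mathcal{H}$ once and for all.

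The key initial observation is that $V\ket{v}\bra{v}V^\dagger$ is pure as a state on $\mathcal{H}_E \otimes \mathcal{H}$, while its partial trace over $\mathcal{H}_E$ is $T(\ket{v}\bra{v})$, which is pure by the vacuum hypothesis. A bipartite pure state whose reduction is pure must be a product, so $Vv = \eta \otimes w$ for some unit vectors $\eta \in \mathcal{H}_E$ and $w \in \mathcal{H}$. Consequently $P_v = \ket{\eta}\bra{\eta}$ is a rank-one projection and $\mathcal{V}_T = V^{-1}[\mathbb{C}\eta \otimes \mathcal{H}]$. Property \ref{prop1} is then immediate, and for property \ref{prop2} I would define $V_0 : \mathcal{V}_T \to \mathcal{H}$ by the equation $V\psi = \eta \otimes V_0\psi$; this determines $V_0$ linearly and uniquely, and since $\norm{\eta}=1$, $V_0$ is an isometry. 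A short computation on rank-one elements $\ket{\psi}\bra{\psi'}$ with $\psi,\psi' \in \mathcal{V}_T$ yields $T(\ket{\psi}\bra{\psi'}) = V_0\ket{\psi}\bra{\psi'}V_0^\dagger$, which extends by linearity to all of $\trcl(\mathcal{V}_T)$.

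For property \ref{prop3}, assume $T$ is isometric on $\mathcal{V}'$ via some isometry $W$. For every $\psi \in \mathcal{V}'$, $T(\ket{\psi}\bra{\psi}) = \ket{W\psi}\bra{W\psi}$ is pure, so the same product-structure argument gives $V\psi = \xi(\psi) \otimes W\psi$ for some unit vector $\xi(\psi) \in \mathcal{H}_E$. Comparing the two expressions for $T(\ket{\psi}\bra{\psi'})$ then forces $\braket{\xi(\psi')}{\xi(\psi)} = 1$, hence $\xi$ is a single unit vector $\xi$ independent of $\psi$. If $\mathcal{V}_T \cap \mathcal{V}' \neq \{0\}$, picking nonzero $\psi$ in the intersection yields both $V\psi \in \mathbb{C}\eta \otimes \mathcal{H}$ and $V\psi = \xi \otimes W\psi$ with $W\psi \neq 0$, and uniqueness of the tensor decomposition of a nonzero product vector forces $\xi \in \mathbb{C}\eta$; hence $V\mathcal{V}' \subseteq \mathbb{C}\eta \otimes \mathcal{H}$ and $\mathcal{V}' \subseteq \mathcal{V}_T$. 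Property \ref{prop4} then follows immediately: $\mathcal{V}_T$ is itself in the family by \ref{prop1} and \ref{prop2}, and every other member is contained in it by \ref{prop3}.

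For properties \ref{prop5} and \ref{prop6} I would write $\mathfrak{i}_T(\rho) = \tr{A\rho}$ where $A := V^\dagger(P_v^\bot \otimes \idmat)V$ is a positive operator on $\mathcal{H}$. Using $V^\dagger V = \idmat$ and $0 \leq P_v^\bot \leq \idmat$ gives $0 \leq A \leq \idmat$; the identity $\braket{\psi}{A\psi} = \norm{(P_v^\bot \otimes \idmat)V\psi}^2$ shows $\mathrm{ker}(A) = \mathcal{V}_T$, so by self-adjointness $A = P^\bot A P^\bot$ and thus $A \leq P^\bot \idmat P^\bot = P^\bot$, which is property \ref{prop6}. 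In finite dimensions, the restriction of $A$ to the invariant subspace $\mathcal{V}_T^\bot$ is strictly positive, so $C_T := \min\sigma(A\vert_{\mathcal{V}_T^\bot}) > 0$ satisfies $A \geq C_T P^\bot$, which is property \ref{prop5}. The main technical step to watch out for is the upgrade from the pointwise product structure $V\psi = \xi(\psi) \otimes W\psi$ to a common $\xi$ throughout $\mathcal{V}'$; once that is in hand, the rest is essentially linear-algebraic bookkeeping.
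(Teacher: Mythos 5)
Your proposal is correct and follows essentially the same route as the paper's proof: the product structure of $Vv$, the factorization $V\psi = \eta\otimes V_0\psi$ on $\mathcal{V}_T$, the upgrade of the pointwise environment vector $\xi(\psi)$ to a common one on $\mathcal{V}'$, and the analysis of the positive operator $V^\dagger(P_v^\bot\otimes\idmat)V$ supported on $\mathcal{V}_T^\bot$. The only differences are cosmetic (you invoke uniqueness of the tensor decomposition and the minimal eigenvalue of $A\vert_{\mathcal{V}_T^\bot}$ where the paper uses Cauchy--Schwarz, compactness of the unit sphere, and H\"older's inequality), and all steps check out.
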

\begin{rem}
The claims 1-4 and 6 remain true if one lifts the assumption that $\mathcal{H}$ is finite-dimensional. Claim 5, however, would then be wrong. 
\end{rem}
\begin{proof}
We start with the following observation: Let $V : \mathcal{H} \rightarrow \mathcal{H}_E \otimes \mathcal{H}$ be any Stinespring isometry of $T$. Since $T(\ket{v}\bra{v})$ is pure, $Vv$ must be a tensor product. Thus there are two unit vectors $v^\prime \in \mathcal{H}$ and $e \in \mathcal{H}_E$ such that
\begin{align*}
    Vv = e\otimes v^\prime. 
\end{align*}
Hence, $\ptr{\mathcal{H}}{V\ket{v}\bra{v}V^\dagger} = \ket{e}\bra{e}$ and
\begin{align} \label{SupportEqualsE}
    \mathrm{supp}(\ptr{\mathcal{H}}{V\ket{v}\bra{v}V^\dagger}) = \mathrm{span}\{e\}.
\end{align}
\noindent
\ref{prop1}) Clearly, $V v \in \mathrm{supp}(\ptr{\mathcal{H}}{V\ket{v}\bra{v}V^\dagger}) \otimes \mathcal{H}$. Thus, $v \in V^{-1}\left[V v\right] \subseteq \mathcal{V_T}$.\\ 
\ref{prop2}) For $\phi \in \mathcal{V}_T$, we have $V\phi = e\otimes \psi_\phi$ for a uniquely defined $\psi_\phi \in \mathcal{H}$. We define $U : \mathcal{V}_T \rightarrow \mathcal{H}$ by $U\phi := \psi_\phi$. It is easy to check, that $U$ is an isometry and that $T(\ket{\phi}\bra{\phi}) = U\ket{\phi}\bra{\phi}U^\dagger$. Since this holds for all $\phi \in \mathcal{V}_T$, $T$ is isometric on  $\mathcal{V}_T$.\\
\ref{prop3}) Suppose that $T$ is isometric on $\mathcal{V}^\prime$, with isometry $U^\prime : \mathcal{V}^\prime \rightarrow \mathcal{H}$. If $\mathrm{dim}(\mathcal{V}^\prime) \leq 1$ then the claim is trivially true. So we can assume that $\mathrm{dim}(\mathcal{V}^\prime) \geq 2$. Let $v_1$ and $v_2$ be two orthogonal unit vectors in $\mathcal{V}^\prime$. By assumption,
\begin{align*}
    T(\ket{v_i}\bra{v_i}) = \ptr{E}{V\ket{v_i}\bra{v_i}V^\dagger} = U^\prime\ket{v_i}\bra{v_i}U^{\prime\dagger},
\end{align*}
for $i \in \{1, 2\}$. As $U^\prime\ket{v_i}\bra{v_i}U^{\prime\dagger}$ is pure, there exists a pair of unit vectors $e_1, e_2 \in \mathcal{H}_E$ such that $Vv_i = e_i \otimes U^\prime v_i$. By linearity, we have
\begin{align*}
    0 &= T(\ket{v_1 + v_2}\bra{v_1 + v_2}) - \ptr{E}{V\ket{v_1 + v_2}\bra{v_1 + v_2}V^{\dagger}} \\&= U^\prime\ket{v_1 + v_2}\bra{v_1 + v_2}U^{\prime\dagger} - U^\prime\ket{v_1}\bra{v_1}U^{\prime\dagger} \\&- \braket{e_2}{e_1} U^\prime\ket{v_1}\bra{v_2}U^{\prime\dagger} - \braket{e_1}{e_2} U^\prime\ket{v_2}\bra{v_1}U^{\prime\dagger} - U^\prime\ket{v_2}\bra{v_2}U^{\prime\dagger} \\
    &= (1 - \braket{e_2}{e_1}) U^\prime\ket{v_1}\bra{v_2}U^{\prime\dagger} + (1 - \braket{e_1}{e_2}) U^\prime\ket{v_2}\bra{v_1}U^{\prime\dagger}.
\end{align*}
This can only be true, if $\braket{e_1}{e_2} = 1$, which is true only if $e_1 = e_2$. Thus, by transitivity, there is a unit vector $e^\prime \in \mathcal{H}_E$ such that $V v^\prime = e^\prime \otimes U^\prime v^\prime$, for all $v^\prime \in \mathcal{V}^\prime$. With the definition of $U$ in the proof of \ref{prop2}, we also have $V\phi = e\otimes U\phi$, for all $\phi \in \mathcal{V}_T$. Assume that $\mathcal{V}_T \cap \mathcal{V}^\prime \neq \{0\}$. For a unit vector $\hat{v} \in \mathcal{V}_T \cap \mathcal{V}^\prime$, the Cauchy-Schwarz inequality yields
\begin{align*}
    1 &= \abs{\braket{\hat{v}}{\hat{v}}} = \abs{\braket{V\hat{v}}{V\hat{v}}} = \abs{\braket{e^\prime}{e}} \abs{\braket{U^\prime \hat{v}}{U \hat{v}}} \\&\leq \abs{\braket{e^\prime}{e}} \norm{U^\prime \hat{v}}\norm{U \hat{v}}
    = \abs{\braket{e^\prime}{e}} \leq \norm{e^\prime}\norm{e} = 1.
\end{align*}
Hence, the Cauchy-Schwarz inequality is satisfied with equality, which implies that the vectors $e$ and $e^\prime$ differ only by a phase factor. In particular, $\mathrm{span}\{e\} = \mathrm{span}\{e^\prime\}$. Using \eqref{SupportEqualsE}, we have for any $v^\prime \in \mathcal{V}^\prime$ that $Vv^\prime = e^\prime \otimes U^\prime v^\prime \in \mathrm{supp}(\ptr{\mathcal{H}}{ V \ket{v}\bra{v} V^\dagger}) \otimes \mathcal{H}$. Consequently, $v^\prime \in \mathcal{V}_T$. As $v^\prime$ was arbitrary, this proves $\mathcal{V}^\prime \subseteq \mathcal{V}_T$ as claimed.
\\\ref{prop4}) If an isometric subspace $\mathcal{V}^\prime$ contains $v$, then (by \ref{prop1}) the intersection with $\mathcal{V}_T$ is non-trivial. Thus (by \ref{prop3}) $\mathcal{V}^\prime$ is a subspace of $\mathcal{V}_T$. Hence $\mathcal{V}_T$ contains all isometric subspaces and the claim follows as (by \ref{prop2}) $\mathcal{V}_T$ is isometric itself.\\
The following consideration is needed in the proof of \ref{prop5} as well as in the proof of \ref{prop6}. We define the projections $\hat{P} := P_v\otimes \idmat$ and $\hat{P}^\bot := \idmat - \hat{P}$, where $P_v := \ket{v}\bra{v}$. We further denote by $P$, the orthogonal projection onto $\mathcal{V}_T$ and define $P^\bot := \idmat - P$. 
In the following let $\rho \geq 0$. By definition, we have
\begin{align*}
    \mathfrak{i}_T(\rho) &= \tr{P_v^\bot \ptr{\mathcal{H}}{V\rho V^\dagger}} \\
    &= \tr{\hat{P}^\bot V\rho V^\dagger} \\
    &= \tr{\hat{P}^\bot V P\rho P V^\dagger} + \tr{\hat{P}^\bot V P\rho P^\bot V^\dagger} 
    \\&+ \tr{\hat{P}^\bot V P^\bot\rho P V^\dagger} + \tr{\hat{P}^\bot V P^\bot \rho P^\bot V^\dagger}.
\end{align*}
By definition, if $\psi \in \mathcal{V}_T$ then $\hat{P}^\bot V \psi = 0$. Thus $\hat{P}^\bot V P = 0$ as an operator. Hence, all summands except the last one vanish. Thus, we have
\begin{align} \label{IFinequality}
    \mathfrak{i}_T(\rho) = \tr{\hat{P}^\bot V P^\bot \rho P^\bot V^\dagger} = \tr{V^\dagger\hat{P}^\bot V P^\bot \rho P^\bot}
\end{align}
We can now prove \ref{prop5}. To this end, note that if $\tr{P^\bot \rho P^\bot} = 0$, then  the claim follows trivially. Otherwise, $\frac{P^\bot \rho P^\bot }{\tr{P^\bot \rho P^\bot}}$ is a density matrix and the spectral theorem implies that
\begin{align*}
    \frac{P^\bot\rho P^\bot}{\tr{P^\bot \rho P^\bot}} = \sum_i p_i \ket{\psi^\bot_i}\bra{\psi^\bot_i},
\end{align*}
with $p_i \geq 0$, $\sum_i p_i = 1$ and $\psi_i^\bot \in \mathcal{V}_T^\bot$. By convexity, we have
\begin{align*}
    \tr{\hat{P}^\bot V P^\bot \rho P^\bot V^\dagger} &= \tr{P^\bot \rho} \tr{\hat{P}^\bot V \frac{P^\bot\rho P^\bot}{\tr{P^\bot \rho P^\bot}} V^\dagger}  \\
    &\geq \tr{P^\bot \rho} \inf_{\substack{\psi^\bot \in \mathcal{V}_T^\bot\\ \norm{\psi^\bot} = 1}} \tr{\hat{P}^\bot V \ket{\psi^\bot}\bra{\psi^\bot} V^\dagger}
\end{align*}
If the infimum is strictly positive, then this is the $C_T$, we are looking for. To see that this is indeed the case, note that the set $\{\psi^\bot \in \mathcal{V}_T^\bot \mid| \norm{\psi^\bot} = 1\}$ is compact. Thus the infimum is actually a minimum. Assume for the sake of contradiction that $\tr{\hat{P}^\bot V \ket{\psi^\bot}\bra{\psi^\bot} V^\dagger} = 0$, for some unit vector $\psi^\bot \in \mathcal{V}_T^\bot$. Then $\braket{\hat{P}^\bot V \psi^\bot}{\hat{P}^\bot V \psi^\bot} = 0$ and consequently $\hat{P}^\bot V \psi^\bot = 0$. Hence, $V\psi^\bot \in \mathrm{supp}(\ptr{\mathcal{H}}{ V \ket{v}\bra{v} V^\dagger}) \otimes \mathcal{H}$ and $\psi^\bot \in \mathcal{V}_T$. As this is a contradiction, the claim follows.\\
To prove \ref{prop6}, we use H\"older's inequality for Schatten norms. Applying this inequality to the RHS of \eqref{IFinequality} yields
\begin{align*}
    \mathfrak{i}_T(\rho) \leq \norm{V^\dagger\hat{P}^\bot V}_\infty \norm{P^\bot \rho P^\bot}_1 = \tr{P^\bot \rho}.
\end{align*}
The last equality follows, since $V^\dagger\hat{P}^\bot V$ is an orthogonal projection (and thus has norm 1) and since $P^\bot \rho P^\bot \geq 0$. This proves the claim.
\end{proof}
\begin{rem}
Since, by the previous theorem, every subspace that is isometric w.r.t. $T$ and contains the vacuum, is contained in $\mathcal{V}_T$, checking the conditions in Theorem \ref{MainResult} reduces to checking whether
\begin{align*}
    T_A\vert_{\blt(\mathcal{V}_{T_A})} \neq T_B\vert_{\blt(\mathcal{V}_{T_A})} \quad \mathrm{or} \quad T_A\vert_{\blt(\mathcal{V}_{T_B})} \neq T_B\vert_{\blt(\mathcal{V}_{T_B})}.
\end{align*}
This can be done efficiently, since $\mathcal{V}_{T_A}$ and $\mathcal{V}_{T_B}$ can be computed by simple linear algebraic methods.
\end{rem}

\newpage
\section{The discrimination protocol} \label{ConstructivePart}

The main goal of this section is to prove Theorem \ref{DiscriminationStroategyOutlineTheorem}. 
This is done in two steps. At first, we show how to discriminate between the identity channel and a compact set of channels, where some additional conditions are imposed on the channels under consideration. In particular, we obtain the following theorem.
\begin{thm} \label{MainDiscriminationTheoremFromIdentity}
For $\mathrm{dim}(\mathcal{H}) < \infty$, let $\mathcal{C} \subseteq \blt(\trcl(\mathcal{H}))$ be a closed set of channels and let $v \in \mathcal{H}$ be a unit vector such that for all $T \in \mathcal{C}$, the state $\ket{v}\bra{v}$ is the only state that is a fixed point of $T$.
Then there exists a constant $C$ and for every $N \in \mathbb{N}$ an $N$-step discrimination strategy $D$ and a two-valued POVM $\Pi$ such that 
\begin{align*}
    P_e(D, \Pi) \leq \frac{C}{N^2},
\end{align*}
where the discrimination error probability is w.r.t the sets $\{\idop\}$ and $\mathcal{C}$. \\Furthermore, if $T \in \mathcal{C}$ is a channel with transmission functional $\mathfrak{t}_{T}$ and $\mathfrak{t}_{T}(\ket{v}\bra{v}) = 0$, then the total transmission $\mathfrak{T}_T(D)$ is bounded by 
\begin{align*}
    \mathfrak{T}_T(D) \leq \frac{C\norm{\mathfrak{t}_T}}{N}.
\end{align*}
In particular, if $\mathfrak{t}_\idop = 0$ and for all $T \in \mathcal{C}$, $T$ is a channel with transmission functional $\mathfrak{t}_{T}$, with $\mathfrak{t}_{T}(\ket{v}\bra{v}) = 0$; and if $\sup_{T \in \mathcal{C}} \norm{\mathfrak{t}_T} < \infty$, then the sets $\{\idop\}$ and $\mathcal{C}$ can be discriminated in a transmission-free manner.
\end{thm}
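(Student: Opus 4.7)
The plan is to explicitly generalize the iterative Kwiat et al.\ bomb-tester protocol. If $\dim\mathcal{H} = 1$ the theorem is vacuous (the only channel is $\idop$), so assume $\dim\mathcal{H}\geq 2$, pick a unit vector $w\in\mathcal{H}$ with $\braket{v}{w} = 0$, and append one qubit ancilla $\mathcal{H}_Z = \mathbb{C}^2$ with basis $\{\ket{0},\ket{1}\}$. Let $R_N$ be the unitary on $\mathcal{H}\otimes\mathcal{H}_Z$ that rotates by angle $\theta_N := \pi/(2N)$ in the plane $\mathrm{span}\{\ket{v}\otimes\ket{1},\,\ket{w}\otimes\ket{0}\}$ and acts as the identity on its orthogonal complement, and write $\hat{R}_N(\cdot) := R_N\cdot R_N^\dagger$. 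The $N$-step strategy $D$ takes $s_0 := \ket{v}\bra{v}\otimes\ket{1}\bra{1}$ as initial state, uses $\Lambda_0 = \cdots = \Lambda_{N-1} = \hat{R}_N$ and $\Lambda_N = \idop$, and ends with the two-outcome POVM $\Pi := \{\ket{w}\bra{w}\otimes\ket{0}\bra{0},\,\idmat-\ket{w}\bra{w}\otimes\ket{0}\bra{0}\}$, declaring \sq{$\idop$} when the first outcome occurs.

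\textbf{Analysis of the two branches.} For $T = \idop$ the evolution is purely coherent with $R_N^N\ket{v}\otimes\ket{1} = \ket{w}\otimes\ket{0}$ exactly, so this branch contributes zero to $P_e(D,\Pi)$. For $T\in\mathcal{C}$ the hypothesis that $\ket{v}\bra{v}$ is the only fixed density matrix of $T$ forces $T$ to be primitive in finite dimension ($T^n(X)\to\tr{X}\ket{v}\bra{v}$ exponentially), and compactness of $\mathcal{C}$ together with continuous dependence of the spectrum on the channel yields a uniform spectral gap $\lambda := \sup_{T\in\mathcal{C}}|\lambda_2(T)| < 1$. The key step is to decompose $\mathcal{H}\otimes\mathcal{H}_Z$ into the \emph{vacuum sector} $\ket{v}\bra{v}\otimes\blt(\mathcal{H}_Z)$ (pointwise fixed by $T\otimes\idop_Z$) and its complement, and to analyze the one-step map $\Phi_T := (T\otimes\idop_Z)\hat{R}_N$ as a perturbation of $T\otimes\idop_Z$. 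A direct second-order expansion of $\hat{R}_N$ around $\idop$ shows that the leading $O(\theta_N)$ piece lives entirely in the \emph{ancilla off-diagonal} and hence gives zero contribution to the target matrix element $\bra{w}\bra{0}\cdot\ket{w}\ket{0}$ without a further rotation; the injection into the diagonal of the excited sector is therefore only $O(\theta_N^2) = O(1/N^2)$ per step. Balancing this injection against the geometric contraction by $\lambda < 1$ on the excited sector yields uniform estimates
\begin{align*}
\bra{w}\bra{0}\rho_n^T\ket{w}\ket{0} \leq \frac{C}{N^2} \quad\text{and}\quad \norm{\ptr{Z}{\rho_n^T} - \ket{v}\bra{v}}_1 \leq \frac{C}{N^2},
\end{align*}
valid for every $n\leq N$ and every $T\in\mathcal{C}$. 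Plugging the first bound into the definition of $P_e$ gives $P_e(D,\Pi)\leq C/N^2$.

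\textbf{Transmission bound and main obstacle.} For the transmission estimate, the hypothesis $\mathfrak{t}_T(\ket{v}\bra{v}) = 0$ plus linearity gives $|\mathfrak{t}_T(\ptr{Z}{\rho_n^T})| = |\mathfrak{t}_T(\ptr{Z}{\rho_n^T} - \ket{v}\bra{v})| \leq \norm{\mathfrak{t}_T}\cdot\norm{\ptr{Z}{\rho_n^T}-\ket{v}\bra{v}}_1 \leq C\norm{\mathfrak{t}_T}/N^2$, so summing the $N$ terms yields the claimed $\mathfrak{T}_T(D)\leq C\norm{\mathfrak{t}_T}/N$. The final \sq{in particular} clause then follows because $\sup_{T\in\mathcal{C}}\norm{\mathfrak{t}_T} < \infty$ and $\mathfrak{t}_\idop = 0$ make both the error and the transmission vanish as $N\to\infty$, uniformly in $T$. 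The main technical obstacle is the uniform Zeno estimate of the previous paragraph: a naive triangle bound accumulates $N$ per-step perturbations of size $O(1/N)$ into a useless $O(1)$ error, so the proof needs a Lyapunov/contraction argument that exploits both the strict spectral gap of $T$ and the structural cancellation that pushes the first-order rotation perturbation out of the target matrix-element sector. Making the constants truly uniform over the compact set $\mathcal{C}$ is a secondary but necessary ingredient and relies on continuous dependence of the second-largest eigenvalue of a finite-dimensional channel on the channel itself.
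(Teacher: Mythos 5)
Your protocol is a sensible variant of the paper's own Kwiat-type strategy (the paper rotates inside $\mathcal{H}$ via $U_{1/N}(\cdot)=e^{-iH/N}\cdot e^{iH/N}$ with $\braket{v}{e^{-iH}v}=0$, so no ancilla is needed for this particular theorem; your ancilla-tagged target $w\otimes 0$ plays the same role), and the overall shape of your argument (first-order cancellation, contraction from a spectral gap, compactness for uniformity) is the right one. The genuine gap is that the two displayed estimates, $\tr{(\ket{w}\bra{w}\otimes\ket{0}\bra{0})\rho_N^T}\le C/N^2$ and the intermediate-state bound, \emph{are} the theorem: you assert them and then acknowledge in your last paragraph that the required contraction argument is "the main technical obstacle". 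Asserting that the per-step injection into the excited sector is $O(\theta_N^2)$ is not sufficient; one must show that the $N$ injections do not accumulate in the target matrix element, i.e.\ that the excited sector of the combined one-step map (rotation followed by the channel) still contracts, uniformly over the compact set $\mathcal{C}$ and uniformly in the rotation angle. This is precisely what the paper proves via the invariant subspaces $\blt_{v\bot},\blt_{\bot v}$, a resolvent expansion around the Riesz projection of the simple eigenvalue $1$, and a spectral-perturbation argument showing the gap of $U_tT$ survives for $t\le\tau$ uniformly over $\mathcal{C}$ (Lemmas \ref{InvariantresolventLemma}--\ref{InvariantSubsapceLemma}, Theorems \ref{FundamentalTheoremPositivePart}--\ref{FiniteDimKwiat}). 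Two further points you gloss: uniqueness of the fixed state alone does not give a gap, purity of the unique fixed point is essential (ergodic plus pure fixed point implies mixing, the Burgarth--Giovannetti result the paper invokes), and the uniformity of the gap over $\mathcal{C}$ needs an argument (the paper proves upper semicontinuity of the subdominant spectral radius via resolvent bounds and rank stability of spectral projections), not just "continuous dependence of the spectrum".

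A step that would actually fail as written is your route to the transmission bound. You deduce $\mathfrak{T}_T(D)\le C\norm{\mathfrak{t}_T}/N$ from the trace-norm claim $\norm{\ptr{Z}{\rho_n^T}-\ket{v}\bra{v}}_1\le C/N^2$. But the quantity your Zeno analysis naturally controls is the excited \emph{population} $\tr{P^\bot\ptr{Z}{\rho_n^T}}$ with $P^\bot=\idmat-\ket{v}\bra{v}$; from a population bound of order $1/N^2$ the coherences between $v$ and its orthogonal complement are, by Cauchy--Schwarz on a positive matrix, only guaranteed to be $O(1/N)$, so the trace-norm distance is only $O(1/N)$ and the sum over $N$ steps gives $O(1)$, destroying the $1/N$ transmission bound. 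Whether the stronger $O(1/N^2)$ trace-norm bound holds for your specific protocol would itself require the detailed bookkeeping you deferred. The clean fix, and what the paper does, is to use positivity: write $\mathfrak{t}_T(\cdot)=\tr{\Theta_T\,\cdot\,}$ with $\Theta_T\ge 0$; then $\mathfrak{t}_T(\ket{v}\bra{v})=0$ forces $\Theta_T v=0$, hence $\mathfrak{t}_T(\rho)=\tr{\Theta_T P^\bot\rho P^\bot}\le\norm{\mathfrak{t}_T}\tr{P^\bot\rho}$ for all $\rho\ge 0$, so only the population bound enters and summing the per-step $C/N^2$ terms yields the claimed $C\norm{\mathfrak{t}_T}/N$.
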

\begin{proof} This statement is a direct consequence of Theorem \ref{FiniteDimKwiat} and the discussion in the paragraph "Description of the discrimination strategy". \end{proof}
\noindent The second step then is to show how to reduce the general case to Theorem \ref{MainDiscriminationTheoremFromIdentity}. This is the main content of Section \ref{ReductionProtocolSection}, in which we also prove Theorem \ref{DiscriminationStroategyOutlineTheorem}.

\subsection{Empty or not?}

In this part we study a special case of the general discrimination task. That is, we study the case where we want to discriminate between the identity channel (empty box) and a compact set of channels $\mathcal{C} \subseteq \blt(\trcl(\mathcal{H}))$, which does not contain the identity channel. We show that, under some conditions on the spectrum of the channels in $\mathcal{C}$ and on the transmission functionals, a Kwiat et al.-like strategy suffices to perform the task in a transmission-free manner, even if the underlying Hilbert space is infinite-dimensional. In the finite-dimensional case, our considerations reduce to Theorem \ref{MainDiscriminationTheoremFromIdentity}. Before we detail what we mean by a Kwiat et al.-like strategy, we give an overview of the additional conditions we impose on the channels in $\mathcal{C}$.
\paragraph{Outline of the assumptions}
Our first assumption is that there is a pure state $\ket{v}\bra{v} \in \states(\mathcal{H})$ (vacuum) that is a fixed point of all channels in $\mathcal{C}$ and that the transmission functionals satisfy $\mathfrak{t}_T(\ket{v}\bra{v}) = 0$ for all $T \in \mathcal{C}$. As a remark, note that if there were no state $\rho \in \states(\mathcal{H})$, with $\mathfrak{t}_T(\rho) = 0$ for all $T \in \mathcal{C}$, then, of course, the discrimination task is impossible. On the other hand, if there exists such a state $\rho$, then, by the spectral theorem and the linearity and positivity of $\mathfrak{t}_T$, there exists a pure state $\rho_v \in \states(\mathcal{H})$, with $\mathfrak{t}_T(\rho_v) = 0$ for all $T \in \mathcal{C}$. But then, if $\rho_v$ is not a fixed point of $T$, the discrimination task becomes trivial. Thus, assuming a pure fixed point for the current setting is not a strong assumption. 

Our second assumption is that all channels in $\mathcal{C}$ have a spectral gap. That is, if we exclude $1$ from the spectrum of $T$, then the remaining part must be contained in a disk of radius less than $1$ (remember that since $T$ is a channel, its spectral radius is $1$ and $1$ is part of the spectrum). In Remark \ref{SpectralConditionIsNecessary}, we show that the spectral gap assumption cannot be waived completely, if a Kwiat et al.-like protocol (defined below) should do the job. 

Our third assumption is that the spectral gap assumption is compatible in a certain sense with the discrimination strategy. Expression \eqref{SpectrumDoesNotMoveCondition} in the statement of Theorem  \ref{FundamentalTheoremPositivePart} makes this statement precise. A sufficient condition for the compatibility assumption to be fulfilled (given our second assumption) is that $1$ is a simple eigenvalue of every channel in $\mathcal{C}$. This is the content of Theorem \ref{GeneralTheoremCompactSet}. Furthermore, in the finite-dimensional case our second assumption is automatically fulfilled (given our first assumption), if $1$ is a simple eigenvalue of every channel in $\mathcal{C}$. This is the content of Theorem \ref{FiniteDimKwiat}.

Our fourth assumption concerns the relation between the channels in $\mathcal{C}$ and their associated transmission functionals. Note that the definition of a transmission functional (Definition \ref{DefinitionChannelWithTransmissionFunctional}) does not impose such a relation. For our current purpose, however, this is problematic since $\sup_{T \in \mathcal{C}} \norm{\mathfrak{t}_T}$ may be infinite. We will thus assume that $\sup_{T \in \mathcal{C}} \norm{\mathfrak{t}_T}$ is finite. This is a very mild assumption, since it is implied if $\mathfrak{t}_T$ depends continuously on $T$ (which is very reasonable on physical grounds). Furthermore, note that if $\mathfrak{t}_T$ is an \sq{interaction} functional, then, as a consequence of claim \ref{prop6} in Lemma \ref{MaximalVacuumSubspaceLemma}, we have $\sup_{T \in \mathcal{C}} \norm{\mathfrak{t}_T} \leq 1$. 

\paragraph{Description of the discrimination strategy}
The next step is to design a strategy that allows us to discriminate between the identity channel and $\mathcal{C}$. An important factor in designing a strategy is the amount of resources that are needed to implement it. To this end, we show that only a bare minimum is required. Let $H \in \blt(\mathcal{H})$ be a self-adjoint operator such that $v$ is not an eigenvector of $e^{-iH}$. In other words, we assume that $C_H := \abs{\braket{v}{e^{-iH} v}}$ is strictly less than $1$. Then our strategy is to repeat the $N$-step discrimination strategy, depicted in Figure \ref{Kwiat-likeStartegy}, a total of $K$ times.\begin{figure}[htbp]
    \centering
    \includegraphics[width=.9\textwidth]{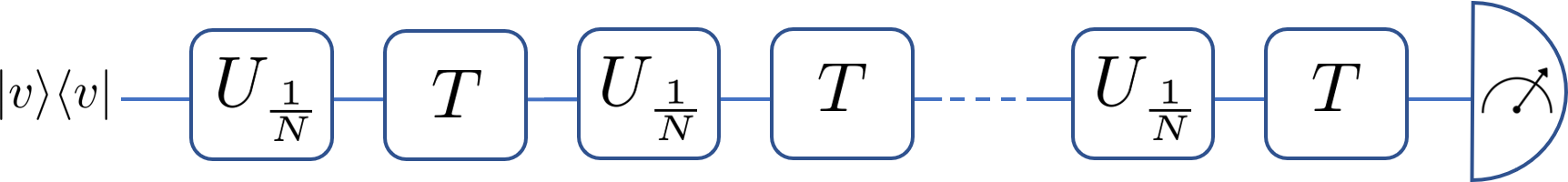}
    \caption{General form of a Kwiat et al.-like strategy} \label{Kwiat-likeStartegy}
\end{figure} More precisely, upon defining the 1-parameter family of channels $U_t : \trcl(\mathcal{H}) \rightarrow \trcl(\mathcal{H})$ by $U_t(\cdot) = e^{-iHt}\,\cdot \, e^{iHt}$, the discrimination strategy is given by the initial state $s_0 := \ket{v}\bra{v}$ and the set of channels $\Lambda$, with $\Lambda_i := U_\frac{1}{N}$ for $0 \leq i \leq N-1$ and $\Lambda_N := \idop$. After each execution of the discrimination strategy, we perform a measurement described by the two-valued POVM $\{P^\bot, \ket{v}\bra{v}\}$, where $P^\bot := \idmat - \ket{v}\bra{v}$. If all $K$ outcomes correspond to the second event, then we decide that the unknown channel is in $\mathcal{C}$ and otherwise we decide that the unknown channel is the identity. Of course, this protocol can be cast into the form of an $NK$-step discrimination strategy, by using an ancillary system and the principle of deferred measurement (see \cite{nielsen00}, p. 186). We call this strategy $D_{H,N,K}$. By Definition \ref{ErrorProbabilityDefn}, the error probability is then given by
\begin{align*}
P_e(D_{H, N, K}, \Pi) = \frac{1}{2} \left( \tr{\ket{v}\bra{v} \rho_N^\idop}^K + \sup_{T \in \mathcal{C}} \left\{ \tr{P^\bot\rho_N^T} \sum_{k = 0}^{K-1} \tr{\ket{v}\bra{v} \rho_N^T}^k \right\} \right),  
\end{align*}
where $\rho$ is the intermediate state map and where $\Pi$ denotes the measurement scheme described above. Explicitly, we have 
\begin{align*}
    \rho_N^\idop = U_{\frac{1}{N}}^N(\ket{v}\bra{v}) = e^{-iH}\ket{v}\bra{v}e^{iH} \quad \text{ and } \quad \rho_N^T = (T\circ U_{\frac{1}{N}})^N(\ket{v}\bra{v}).
\end{align*}
In general, this leads to the estimate
\begin{align} \label{EstimatedErrorProb}
    P_e(D_{H, N, K}, \Pi) \leq \frac{1}{2} \left(C_H^{2K} + K \sup_{T\in \mathcal{C}} \tr{P^\bot \rho_N^T} \right).
\end{align}
Now suppose that $P_M := \sup_{T\in \mathcal{C}} \tr{P^\bot \rho_N^T}$ approaches zero as $N \rightarrow \infty$ (we will show this below). Then, for given $\epsilon > 0$, we can choose $K := \left\lceil\frac{\ln(\epsilon)}{\ln(C_H)}\right\rceil$ and $N$ such that $K P_M < \epsilon$. It follows from \eqref{EstimatedErrorProb} that $P_e(D_{H, N, K}, \Pi) < \epsilon$. In other words, $P_e(D_{H, N, K}, \Pi)$ approaches zero if and only if $P_M$ does. Furthermore, for a channel $T \in \mathcal{C}$, the total transmission is given by
\begin{align*}
    \mathfrak{T}_T(D_{H, N, K}) = K \sum_{n = 0}^{N-1} \mathfrak{t}_T\left(\rho_n^{T}\right) = K\mathfrak{T}_T(D_{H, N, 1}).
\end{align*}
Thus, also $\mathfrak{T}_T(D_{H, N, K})$ approaches zero, if and only if $\mathfrak{T}_T(D_{H, N, 1})$ does.
In addition to that, we could always choose $H$ such that $\braket{v}{e^{-iH}v} = 0$. In that case, it suffices to set $K = 1$, which yields the simple expression
\begin{align*}
    P_e(D_{H, N, 1}, \Pi) = \frac{1}{2} \tr{P^\bot (T\circ U_{\frac{1}{N}})^N(\ket{v}\bra{v})}
\end{align*}
for the error probability. Hence, in order to find a strategy that discriminates between the identity channel and the set $\mathcal{C}$, we only need to show that the quantities $P_M$ and $\sup_{T\in \mathcal{C}}\mathfrak{T}_T(D_{H, N, 1})$ approach zero for $N \rightarrow \infty$. Moreover, since $\mathfrak{t}_{T}$ can be written in the form $\mathfrak{t}_{T}(\cdot) = \tr{\Theta_T \cdot}$, for some positive semi-definite operator $\Theta_T \in \blt(\mathcal{H})$ and since, by assumption $\mathfrak{t}_{T}(\ket{v}\bra{v}) = 0$, we can conclude that for $\rho \geq 0$, 
\begin{align*}
    \mathfrak{t}_{T}(\rho) \leq \norm{\mathfrak{t}_{T}} \tr{P^\bot \rho}.
\end{align*}
The important conclusion that we draw from the discussion above is that in order to prove Theorem \ref{MainDiscriminationTheoremFromIdentity}, it suffices to show (under the hypotheses of Theorem \ref{MainDiscriminationTheoremFromIdentity}) that for any self-adjoint $H \in \blt(\mathcal{H})$, there is a constant $C$ such that the inequalities 
\begin{align*}
    \tr{P^\bot \, (T \circ U_{\frac{1}{N}})^N(\ket{v}\bra{v})} \leq \frac{C}{N^2}\\
    \tr{P^\bot \, \sum_{n = 0}^{N-1} (U_{\frac{1}{N}} \circ T)^n(\ket{v}\bra{v})} \leq \frac{C}{N} 
\end{align*}
hold for all $N \in \mathbb{N}$. This is precisely the statement of Theorem \ref{FiniteDimKwiat}. Taking the validity of Theorem \ref{FiniteDimKwiat} for granted, we conclude that Theorem \ref{MainDiscriminationTheoremFromIdentity} holds.

\paragraph{Technical theorems} The remainder of this section is devoted to the proof of Theorem \ref{FiniteDimKwiat} and its infinite-dimensional versions. The following lemmas serve this purpose.

\begin{lem}[\cite{abramovich2002problems}, p. 202] \label{InvariantresolventLemma}
Let $T \in \blt(\mathcal{H})$, let $z \in \mathbb{C}$ be in the unbounded component of the resolvent $\rho(T)$, and let $X$ be a closed invariant subspace of $T$. Then $X$ is an invariant subspace of $(z-T)^{-1}$.
\end{lem}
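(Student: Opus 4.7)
My plan is to split the proof into two steps: first I would establish invariance under $(z-T)^{-1}$ for $z$ outside the spectral disk by a Neumann-series argument, and then I would extend the conclusion to the entire unbounded component of $\rho(T)$ via analytic continuation combined with Hahn--Banach.

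For the first step, when $\abs{z} > \norm{T}$ the Neumann series
\begin{align*}
    (z-T)^{-1} = \sum_{n=0}^\infty \frac{T^n}{z^{n+1}}
\end{align*}
converges in operator norm. Since $X$ is closed and $T$-invariant, it is invariant under every power $T^n$, hence under every finite partial sum of the series, and (by closedness of $X$) under the norm limit. Thus $(z-T)^{-1}(X) \subseteq X$ whenever $\abs{z} > \norm{T}$. The set $\set{z \in \mathbb{C}}{\abs{z} > \norm{T}}$ is open, connected, unbounded, and contained in $\rho(T)$, so it sits inside the unbounded component $U$ of $\rho(T)$.

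For the second step, I would fix $x \in X$ and let $\phi \in \mathcal{H}^*$ be any bounded linear functional annihilating $X$. The scalar function $f(z) := \phi\bigl((z-T)^{-1} x\bigr)$ is holomorphic on the open set $\rho(T)$, and by the first step it vanishes identically on $\set{z \in \mathbb{C}}{\abs{z} > \norm{T}} \subseteq U$. The identity theorem for holomorphic functions on the connected open set $U$ then forces $f \equiv 0$ on all of $U$. Since this holds for every such $\phi$, and since the closed subspace $X$ coincides with the intersection of the kernels of the continuous functionals vanishing on it (Hahn--Banach), we conclude $(z-T)^{-1} x \in X$ for every $x \in X$ and every $z \in U$, which is the claim.

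The only nontrivial point is the analytic-continuation step: one cannot in general propagate the identity $f \equiv 0$ across different connected components of $\rho(T)$, which is exactly why the hypothesis that $z$ lie in the \emph{unbounded} component (the one containing the Neumann-series region) is essential. Beyond this, the argument draws only on the standard holomorphy of the resolvent on $\rho(T)$, the identity principle, and Hahn--Banach separation, so no further obstacle is expected.
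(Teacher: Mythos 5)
Your argument is correct and complete: the Neumann-series step gives invariance for $\abs{z} > \norm{T}$, and the combination of holomorphy of the resolvent, the identity theorem on the connected unbounded component, and Hahn--Banach separation propagates it to all of that component. The paper itself states this lemma only as a citation to \cite{abramovich2002problems} without reproducing a proof, and your argument is precisely the standard one given there, so nothing further is needed.
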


\begin{lem} \label{SemisimplePeripheralSpectrum}
Let $T: \trcl(\mathcal{H}) \rightarrow \trcl(\mathcal{H})$ be a channel such that $1$ is in the discrete spectrum of $T$. Then, for any $n \in \mathbb{N}$ and any (rectifiable) path inside the resolvent set of $T$ that encloses $1$, and separates $1$ from $\sigma(T)\setminus\{1\}$, we have
\begin{align} \label{RieszProjectionIdentity}
  \frac{1}{2\pi i} \oint\limits_{\Gamma_1}\frac{z^n}{z-T}\,\mathrm{d}z = \frac{1}{2\pi i} \oint\limits_{\Gamma_1}\frac{1}{z-T}\,\mathrm{d}z, 
\end{align}
\end{lem}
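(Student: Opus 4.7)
The plan is to recognise both sides as instances of the holomorphic (Dunford) functional calculus applied to the spectral set $\{1\}$, and then to show that the nontrivial content of the identity amounts to the semisimplicity of the eigenvalue $1$. Specifically, the right-hand side is by definition the Riesz projection $P_1$ associated with $\{1\}$, which has finite rank because $1$ lies in the discrete spectrum of $T$. The left-hand side equals $T^n P_1$ via the Dunford calculus for the analytic function $f(z)=z^n$ around the spectral set $\{1\}$ (the contour $\Gamma_1$ being admissible by hypothesis). Hence the identity \eqref{RieszProjectionIdentity} is equivalent to $T^n P_1 = P_1$ for every $n\in\mathbb{N}$, and, by induction, to $TP_1=P_1$.

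Next I would exploit that $P_1$ commutes with $T$, so $\mathrm{ran}(P_1)$ is a finite-dimensional $T$-invariant subspace of $\trcl(\mathcal{H})$. On this subspace the restriction $T|_{\mathrm{ran}(P_1)}$ has a Jordan decomposition $I + N$, with $N$ nilpotent, and the goal reduces to showing $N=0$.

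For this, I would argue by contradiction in the style of a standard peripheral-spectrum argument. If $N\neq 0$, pick $x\in\mathrm{ran}(P_1)$ with $Nx\neq 0$ and $N^2 x =0$ (take $x$ of minimal Jordan depth at least $2$). Then
\begin{align*}
T^j x = (I+N)^j x = x + j\,Nx,
\end{align*}
so $\norm{T^j x}_1 \geq j\,\norm{Nx}_1 - \norm{x}_1 \to \infty$ as $j\to\infty$. This contradicts the fact that a quantum channel is a contraction in trace norm: being positive and trace-preserving, $T$ satisfies $\norm{T(\rho)}_1 \leq \norm{\rho}_1$ for all $\rho \in \trcl(\mathcal{H})$ (decompose $\rho=\rho_+-\rho_-$ into its Jordan decomposition and use $\norm{\rho}_1=\norm{\rho_+}_1+\norm{\rho_-}_1$), and hence $\norm{T^j x}_1\leq\norm{x}_1$ for every $j$. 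Therefore $N=0$, giving $TP_1=P_1$ and thus $T^n P_1 = P_1$.

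The main obstacle is not the algebra but making the two functional-calculus identifications rigorously in the Banach space $\trcl(\mathcal{H})$; I would invoke Lemma~\ref{InvariantresolventLemma} together with the standard Dunford calculus to justify that $\frac{1}{2\pi i}\oint_{\Gamma_1} f(z)(z-T)^{-1}\,\mathrm{d}z$ realises $f(T)P_1$ for analytic $f$ defined in a neighbourhood of $\{1\}$, and then apply this to $f(z)=1$ and $f(z)=z^n$. Everything else is linear algebra inside the finite-dimensional range of $P_1$.
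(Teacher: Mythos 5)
Your proof is correct and takes essentially the same route as the paper's: both reduce \eqref{RieszProjectionIdentity} to showing that the nilpotent part $N$ of $TP_1$ vanishes and obtain this from the power-boundedness of the channel in trace norm, the only difference being that you read the contradiction off the norm of a depth-two Jordan vector, whereas the paper applies an unbounded coordinate functional at the top of the Jordan chain. One minor caveat: your Jordan-decomposition justification of $\norm{T(\rho)}_1\le\norm{\rho}_1$ only covers self-adjoint $\rho$; for general trace-class inputs invoke complete positivity (Stinespring plus contractivity of the partial trace), or simply note that any uniform bound on $\norm{T^j}$ yields the same contradiction.
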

\begin{proof}
See Appendix \ref{SemisimplicityLemmaAppendix}.
\end{proof}

\begin{lem} [Invariant subspace lemma] \label{InvariantSubsapceLemma}
Let $T : \trcl(\mathcal{H}) \rightarrow \trcl(\mathcal{H})$ be a channel, where $\mathcal{H}$ can be finite or infinite dimensional. Let $v \in \mathcal{H}$ be such that $\ket{v}\bra{v}$ is a fixed point of $T$ and set $V_v := \mathrm{span}\{v\}$. Then, the subspaces
\begin{align*}
   \blt_{v \bot} &:= \set{ \ket{v}\bra{\phi} }{ \phi \in V_v^\bot}\\
   \blt_{\bot v} &:= \set{ \ket{\phi}\bra{v} }{ \phi \in V_v^\bot}
\end{align*}
are invariant under $T$.
\end{lem}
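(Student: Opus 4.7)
The plan is to work through a Stinespring dilation $V : \mathcal{H} \rightarrow \mathcal{H}_E \otimes \mathcal{H}$ of $T$ and extract a product structure from the fixed-point assumption. First I would expand $Vv = \sum_i e_i \otimes \psi_i$ in an orthonormal basis $\{e_i\}$ of $\mathcal{H}_E$, which gives
\begin{align*}
\ket{v}\bra{v} \;=\; T(\ket{v}\bra{v}) \;=\; \ptr{E}{V\ket{v}\bra{v}V^\dagger} \;=\; \sum_i \ket{\psi_i}\bra{\psi_i}.
\end{align*}
Since the left-hand side is rank one and the right-hand side is a sum of positive rank-one terms, at most one $\psi_i$ can be nonzero, and it must equal $v$ up to a phase, which I absorb into the corresponding basis vector. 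The conclusion is $Vv = e \otimes v$ for a unit vector $e \in \mathcal{H}_E$. (This mimics the first paragraph of the proof of Lemma \ref{MaximalVacuumSubspaceLemma}; there it uses purity of $T(\ket{v}\bra{v})$, here purity is automatic because $\ket{v}\bra{v}$ is itself pure and fixed.)

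Once the product structure of $Vv$ is in hand, invariance of $\blt_{v\bot}$ follows by a direct computation. For $\phi \in V_v^\bot$,
\begin{align*}
V\ket{v}\bra{\phi}V^\dagger \;=\; \ket{e\otimes v}\bra{V\phi} \;=\; \big(\ket{e}\otimes\idmat\big)\ket{v}\bra{V\phi},
\end{align*}
and tracing out $\mathcal{H}_E$ yields $T(\ket{v}\bra{\phi}) = \ket{v}\bra{\psi'}$, where $\psi' := (\bra{e}\otimes\idmat_\mathcal{H})V\phi$. It then remains to verify that $\psi' \in V_v^\bot$, which follows from the isometry property of $V$ together with $Vv = e\otimes v$:
\begin{align*}
\braket{v}{\psi'} \;=\; \bra{v}(\bra{e}\otimes\idmat_\mathcal{H})V\phi \;=\; \braket{e\otimes v}{V\phi} \;=\; \braket{Vv}{V\phi} \;=\; \braket{v}{\phi} \;=\; 0.
\end{align*}
Hence $T(\ket{v}\bra{\phi}) \in \blt_{v\bot}$ for every $\phi \in V_v^\bot$, which gives $T(\blt_{v\bot}) \subseteq \blt_{v\bot}$.

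Finally, invariance of $\blt_{\bot v}$ will be immediate by Hermitian conjugation: since $T$ is completely positive, it is Hermiticity-preserving, so $T(\ket{\phi}\bra{v}) = T(\ket{v}\bra{\phi})^\dagger = \ket{\psi'}\bra{v} \in \blt_{\bot v}$. The same argument works regardless of whether $\mathcal{H}$ is finite- or infinite-dimensional, since the Stinespring dilation and the orthonormal expansion of $Vv$ are available in both settings. The only step that requires genuine care is the extraction of $Vv = e\otimes v$ from the fixed-point condition; the rest is bookkeeping.
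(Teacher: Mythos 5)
Your proof is correct in substance, but it takes the Stinespring route where the paper argues via Kraus operators: the paper writes $\ket{v}\bra{v} = \sum_i K_i\ket{v}\bra{v}K_i^\dagger$, uses extremality of the pure state $\ket{v}\bra{v}$ in the convex set of states to conclude that $v$ is an eigenvector of every Kraus operator, computes $T(\ket{v}\bra{\phi}) = \ket{v}\bra{\phi'}$ from that, and invokes trace-preservation to get $\braket{\phi'}{v}=0$. Your version instead extracts the product form $Vv = e\otimes v$ of the Stinespring isometry (the observation that opens the paper's proof of Lemma \ref{MaximalVacuumSubspaceLemma}) and obtains the orthogonality $\braket{v}{\psi'}=0$ directly from the isometry property of $V$, so trace-preservation never has to be cited separately; the two dilation pictures are interchangeable here, and your bookkeeping is, if anything, slightly cleaner.

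One intermediate claim is false as stated, though the conclusion you draw from it survives: from $\ket{v}\bra{v} = \sum_i\ket{\psi_i}\bra{\psi_i}$ it does \emph{not} follow that at most one $\psi_i$ is nonzero (take $\psi_1=\psi_2=\tfrac{1}{\sqrt{2}}\,v$). What does follow is that $0\leq\ket{\psi_i}\bra{\psi_i}\leq\ket{v}\bra{v}$ for every $i$, hence each $\psi_i = c_i v$ is proportional to $v$; then $Vv=\bigl(\sum_i c_i e_i\bigr)\otimes v = e\otimes v$ with $\norm{e}=\norm{Vv}=1$. With that repair (or by simply quoting the standard fact that a pure reduced state forces the bipartite vector to be a product, as the paper does in Lemma \ref{MaximalVacuumSubspaceLemma}), the remaining steps — $T(\ket{v}\bra{\phi})=\ket{v}\bra{\psi'}$ with $\psi'=(\bra{e}\otimes\idmat)V\phi$, the orthogonality check, and Hermiticity-preservation for $\blt_{\bot v}$ — are correct in both finite and infinite dimensions.
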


\begin{proof}
 We prove that $\blt_{v \bot}$ is invariant. The invariance of $\blt_{\bot v}$ follows as $T$ is Hermiticity-preserving. Let $\{ K_i \}$ be a set of (non-zero) Kraus-operators of $T$. By assumption we have
\begin{align*}
\ket{v}\bra{v} &= T(\ket{v}\bra{v}) = \sum_i \tr{K_i^\dagger K_i} \frac{K_i \ket{v}\bra{v} K_i^\dagger}{\tr{K_i^\dagger K_i}},
\end{align*}
where the series converges in trace norm. 
As the pure state $\ket{v}\bra{v}$ is an extreme point of the closed and convex set of quantum states and the RHS is a convex combination of states, we must have that $K_i \ket{v}\bra{v} K_i^\dagger$ is proportional to $\ket{v}\bra{v}$. Henceforth, $v$ is an eigenvector of $K_i$ for all $i$. We denote the corresponding eigenvalue by $\lambda_i$. So for $\psi \in V_v^\bot$, we get
\begin{align*}
T(\ket{v}\bra{\psi}) = \sum_i K_i \ket{v} \bra{\psi}K_i^\dagger = \ket{v}\bra{\phi},
\end{align*}
where $\phi = \sum_i \overline{\lambda_i} K_i \,\psi$. As $T$ is trace-preserving, we have
\begin{align*}
0 = \tr{\ket{v}\bra{\phi}} = \tr{T(\ket{v}\bra{\psi})} = \tr{\ket{v}\bra{\phi}} = \braket{\phi}{v}.
\end{align*}
Hence, $\phi \in V_v^\bot$. This proves the claim.
\end{proof}
\noindent The following theorem is the main technical result. In fact, everything else in this section can (to some extent) be regarded as a corollary to this theorem.
\begin{thm} \label{FundamentalTheoremPositivePart}
Let $T: \trcl(\mathcal{H}) \rightarrow \trcl(\mathcal{H})$ be a channel such that $1$ is in the discrete spectrum of $T$, and let $v \in \mathcal{H}$ be a unit vector such that $\ket{v}\bra{v}$ is a fixed point of $T$. 
Furthermore, let $H \in \blt(\mathcal{H})$ be self-adjoint, $\tau > 0$ and $0 < \delta < 1$ such that
\begin{align} \label{SpectrumDoesNotMoveCondition}
    \sigma(U_t \circ T) \subseteq \mathbb{D}_{1-\delta}(0) \cup \{1\},
\end{align}
for $0 \leq t \leq \tau$, where $U_t:\trcl(\mathcal{H}) \rightarrow \trcl(\mathcal{H})$ is defined by $U_t(\cdot) := e^{-iHt} \cdot e^{iHt}$. Then, the inequalities
\begin{align}
    \tr{P^\bot \, (T \circ U_{\frac{1}{N}})^N(\ket{v}\bra{v})} \leq \frac{C}{N^2}\label{ProbabilityBound} \\ 
    \tr{P^\bot \, \sum_{n = 0}^{N-1} (U_{\frac{1}{N}} \circ T)^n(\ket{v}\bra{v})} \leq \frac{C}{N} \label{InfluenceBound}
\end{align}
hold for all $N \in \mathbb{N}$. Here, $P^\bot := \idmat - \ket{v}\bra{v}$ and
\begin{align*}
    C := \max\left\{\tau^{-2}, \;18 \delta^{-1} \norm{H}_{\blt(\mathcal{H})}^2 \max_{\substack{0 \leq t \leq \tau \\ z \in \Gamma}} \norm{(z-T)^{-1}} \norm{(z-U_tT)^{-1}}\right\} < \infty,
\end{align*}
where $\Gamma := \set{z \in \mathbb{C}}{ |z| = 1-\frac{\delta}{2}} \cup \set{z \in \mathbb{C}}{ |z-1| = \frac{\delta}{2}}$.
\end{thm}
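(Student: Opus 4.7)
The strategy is to apply the Riesz--Dunford functional calculus to the operators $A_t := U_t \circ T$ (for \eqref{InfluenceBound}) and, symmetrically, $B_t := T \circ U_t$ (for \eqref{ProbabilityBound}), in order to separate the contribution of the eigenvalue $1$ from the strictly contracting part of the spectrum. By assumption \eqref{SpectrumDoesNotMoveCondition}, the contour $\Gamma = \Gamma_0 \cup \Gamma_1$ lies in the resolvent set of $A_t$ uniformly for $0 \le t \le \tau$, so the Riesz projection
\[
P_1(t) := \frac{1}{2\pi i}\oint_{\Gamma_1}(z - A_t)^{-1}\,dz
\]
is analytic in $t$ (and vanishes whenever $1 \notin \sigma(A_t)$). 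Applying Lemma \ref{SemisimplePeripheralSpectrum} to the channel $A_t$ shows that $1$ is semisimple, yielding the decomposition $A_t^n = P_1(t) + S_n(t)$, where the remainder $S_n(t) := \frac{1}{2\pi i}\oint_{\Gamma_0} z^n(z - A_t)^{-1}(\idmat - P_1(t))\,dz$ satisfies $\norm{S_n(t)} \le C_0 (1-\delta/2)^n$, with $C_0$ controlled by the resolvent bounds on $\Gamma$. The analogous decomposition holds for $B_t$.

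The central algebraic observation is that, after pairing with $P^\bot$, the first-order Taylor term of $P_1(t)(\ket{v}\bra{v})$ in $t$ vanishes. Since $\ket{v}\bra{v}$ is a fixed point of $T$ we have $P_1(0)(\ket{v}\bra{v}) = \ket{v}\bra{v}$ and $(z-T)^{-1}\ket{v}\bra{v} = (z-1)^{-1}\ket{v}\bra{v}$. Writing $\ket{\psi} := (\idmat - \ket{v}\bra{v})H\ket{v}$, we get $[H, \ket{v}\bra{v}] = \ket{\psi}\bra{v} - \ket{v}\bra{\psi} \in \blt_{\bot v} \oplus \blt_{v\bot}$, and differentiation of $P_1(t)$ at $t = 0$ (using $\dot A_0 = -i[H,\,T(\cdot)]$) gives
\[
P_1'(0)(\ket{v}\bra{v}) = \frac{-i}{2\pi i}\oint_{\Gamma_1}\frac{1}{z-1}(z-T)^{-1}\bigl(\ket{\psi}\bra{v} - \ket{v}\bra{\psi}\bigr)\,dz.
\]
By Lemma \ref{InvariantSubsapceLemma} the subspaces $\blt_{v\bot}$ and $\blt_{\bot v}$ are $T$-invariant and, by Lemma \ref{InvariantresolventLemma}, therefore also $(z-T)^{-1}$-invariant for $z \in \Gamma_1$; a direct calculation shows $\tr{P^\bot X} = 0$ for every $X \in \blt_{v\bot} \oplus \blt_{\bot v}$. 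Hence $\tr{P^\bot P_1'(0)(\ket{v}\bra{v})} = 0$, and Taylor's theorem together with Cauchy's estimate for the second derivative of $P_1(t)$ over $\Gamma_1$ yields $\tr{P^\bot P_1(t)(\ket{v}\bra{v})} \le C_1 t^2$. The same argument applies to $B_t$ via $\dot B_0 = -iT([H,\cdot])$, using that $T$ itself maps $\blt_{v\bot} \oplus \blt_{\bot v}$ into itself.

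For the contractive remainder, note that $(\idmat - P_1(0))\ket{v}\bra{v} = 0$, so analyticity gives $\norm{(\idmat - P_1(t))\ket{v}\bra{v}}_1 \le C_2 t$; combining this with the uniform bound on $S_n(t)$ yields $\norm{S_n(t)(\ket{v}\bra{v})}_1 \le C_3\, t\, (1-\delta/2)^{n-1}$. Substituting $t = 1/N$: inequality \eqref{ProbabilityBound} follows because the Riesz-projection contribution is $O(1/N^2)$ and the remainder $S_N(1/N)(\ket{v}\bra{v})$ is exponentially small in $N$, while inequality \eqref{InfluenceBound} follows because the Riesz part contributes $N \cdot O(1/N^2) = O(1/N)$ and the geometric sum $\sum_{n=0}^{N-1}(1-\delta/2)^{n-1} \le 2/\delta$ contributes $O(1/N)$. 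The range $N < 1/\tau$ is absorbed into the trivial bound $\tr{P^\bot\rho} \le 1 \le \tau^{-2}/N^2$, which explains the $\tau^{-2}$ term in $C$. The main obstacle is bookkeeping the constants so as to match the explicit formula for $C$ in the statement; a subsidiary point, relevant in the infinite-dimensional case, is verifying that $P_1(t)$ has finite rank for every $t \in [0,\tau]$, which follows from analyticity of $P_1$ together with the spectral isolation supplied by \eqref{SpectrumDoesNotMoveCondition}.
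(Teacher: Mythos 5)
Your proposal is correct and follows essentially the same route as the paper: the Riesz--Dunford contour splitting combined with Lemma \ref{SemisimplePeripheralSpectrum}, the cancellation of the first-order term in $t$ via Lemmas \ref{InvariantSubsapceLemma} and \ref{InvariantresolventLemma}, and a second-order bound, summed (or evaluated at $n=N$) with $t=1/N$. The only organizational differences are that the paper derives a single bound $\tr{P^\bot (U_tT)^n(\ket{v}\bra{v})} \leq Ct^2$ uniform in $n$ by expanding the resolvent with the second resolvent identity (rather than splitting off a geometrically decaying remainder $S_n(t)$), and it obtains \eqref{ProbabilityBound} via the reduction $\tr{P^\bot T(\rho)} \leq \tr{P^\bot \rho}$ instead of rerunning the argument for $T\circ U_t$; also, your appeal to Cauchy's estimate for the second derivative of $P_1(t)$ should be replaced by a real-variable Taylor remainder (differentiating under the contour integral), since \eqref{SpectrumDoesNotMoveCondition} is only assumed for real $t\in[0,\tau]$.
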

\begin{proof}
We need to calculate the quantities \eqref{ProbabilityBound} and \eqref{InfluenceBound}. To do so, we employ the holomorphic functional calculus. For $0 \leq t \leq \tau$ and $n \in \mathbb{N}$, we have
\begin{align}
(U_tT)^n &= \frac{1}{2\pi i} \oint\limits_{|z - 1| = \frac{\delta}{2}}\frac{z^n}{z-U_tT}\,\mathrm{d}z + \frac{1}{2\pi i}\oint\limits_{|z| = 1-\frac{\delta}{2}}\frac{z^n}{z-U_tT}\,\mathrm{d}z \\
&= \frac{1}{2\pi i} \oint\limits_{|z - 1| = \frac{\delta}{2}}\frac{1}{z-U_tT}\,\mathrm{d}z + \frac{1}{2\pi i}\oint\limits_{|z| = 1-\frac{\delta}{2}}\frac{z^n}{z-U_tT}\,\mathrm{d}z, \label{RTNthPowerEq}
\end{align}
where we used Lemma \ref{SemisimplePeripheralSpectrum} to obtain the second line. Under the trace, we can (crudely) estimate this term as follows:
\begin{align}
\begin{split} \label{NThPowerUnderTheTrace}
   \abs{\tr{P^\bot (U_tT)^n(\ket{v}\bra{v})}} &\leq \frac{\delta}{2} \max_{|z-1| = \frac{\delta}{2}} \abs{\tr{P^\bot \frac{1}{z-U_tT}(\ket{v}\bra{v})}} \\&+ \left(1-\frac{\delta}{2}\right)^{n+1} \max_{|z| = 1-\frac{\delta}{2}} \abs{\tr{P^\bot \frac{1}{z-U_tT}(\ket{v}\bra{v})}} \\
   &\leq \max_{z \in \Gamma} \abs{\tr{P^\bot \frac{1}{z-U_tT}(\ket{v}\bra{v})}}
   \end{split}
\end{align}
In everything that follows, we will assume that $z \in \Gamma$. To proceed, we need two auxiliary calculations. 
First, we use the second resolvent identity (\cite{borthwick2020spectral}, p. 84) twice to obtain
\begin{align} \label{ResolventExpansion}
\begin{split}
\frac{1}{z-U_tT} &= \frac{1}{z-T} + \frac{1}{z-T}(U_t-\idop)\frac{T}{z-T} \\&+ \frac{1}{z-U_tT}(U_t-\idop)\frac{T}{z-T}(U_t-\idop)\frac{T}{z-T}.   
\end{split}
\end{align}
Second, an elementary application of Taylor's formula yields
\begin{align}
\norm{U_t - \idop} &\leq 2 \norm{H}_{\blt(\mathcal{H})} t \label{NormDUEstimate}\\
(U_t - \idop)(\rho) &= i[\rho, H]t + \mathfrak{U} t^2  \label{TaylorExp1}
\end{align}
with $||\mathfrak{U}|| \leq 2 \norm{H}_{\blt(\mathcal{H})}^2$.
When looking at \eqref{ResolventExpansion}, it is clear that the summands are of zeroth, first and second order in $t$, as $t \rightarrow 0$. The crucial step is to show that under the trace, the second term is $\mathcal{O}(t^2)$. Using \eqref{TaylorExp1}, we get
\begin{align}
\frac{1}{z-T}(U_t-\idop)\frac{T}{z-T}(\ket{v}\bra{v}) &= \frac{1}{z-1}\frac{1}{z-T}(U_t-\idop)(\ket{v}\bra{v}) \nonumber\\
\begin{split} \label{TermIsQuadraticEq2}
&= \frac{i t}{z-1} \frac{1}{z-T} (\ket{v}\bra{H v} - \ket{H v}\bra{v})
\\&+ \frac{t^2}{z-1} \frac{1}{z-T}(\mathfrak{U}(\ket{v}\bra{v})).  
\end{split}
\end{align}
It is easily verified, using the self-adjointness of $H$, that $\ket{v}\bra{H v} - \ket{H v}\bra{v} = \ket{v}\bra{\phi} - \ket{\phi}\bra{v}$, with $\phi := (H - \braket{v}{Hv})v$. Clearly, $\braket{\phi}{v} = 0$. Thus $\ket{\phi}\bra{v} \in \blt_{\bot v}$ and $\ket{v}\bra{\phi} \in \blt_{v \bot}$, where $\blt_{\bot v}$ and $\blt_{v \bot}$ are both invariant subspaces of $T$ (by Lemma \ref{InvariantSubsapceLemma}). As $z$ is in the unbounded component of the resolvent set of $T$, Lemma \ref{InvariantresolventLemma} implies that also $(z-T)^{-1}(\ket{\phi}\bra{v}) \in  \blt_{\bot v}$ and $(z-T)^{-1}(\ket{v}\bra{\phi}) \in  \blt_{v \bot}$. Thus, the first term in \eqref{TermIsQuadraticEq2} vanishes under the trace and we get
\begin{align}
    \abs{\tr{P^\bot \eqref{TermIsQuadraticEq2}}}  
    \leq t^2 \frac{2 \norm{H}_{\blt(\mathcal{H})}^2}{|z-1|} \norm{(z-T)^{-1}}. \label{EstimateResultInProof1}
\end{align}
So under the trace, this term is indeed quadratic in $t$. For the other two terms in \eqref{ResolventExpansion}, we have
\begin{align}
    \abs{\tr{P^\bot \frac{1}{z-T}(\ket{v}\bra{v})}} =  \frac{1}{|z-1|}\tr{P^\bot \ket{v}\bra{v})} = 0 \label{EstimateResultInProof2}
\end{align}
and
\begin{align}
    &\abs{\tr{P^\bot\;\frac{1}{z-U_tT}(U_t-\idop)\frac{T}{z-T}(U_t-\idop)\frac{T}{z-T}(\ket{v}\bra{v})}} \nonumber \\
    &\leq \frac{1}{|z-1|} \norm{(z-U_tT)^{-1}} \norm{U_t - \idop}^2 \norm{\frac{T}{z-T}} \nonumber \\
    &\leq t^2 \frac{4\norm{H}_{\blt(\mathcal{H})}^2}{|z-1|} \norm{(z-U_tT)^{-1}} \norm{(z-T)^{-1}}, \label{EstimateResultInProof3}
\end{align}
where we used the estimate \eqref{NormDUEstimate} and $\norm{T} = 1$ to obtain the last line.
We can now use the results obtained in \eqref{EstimateResultInProof1}, \eqref{EstimateResultInProof2} and \eqref{EstimateResultInProof3} to estimate the quantity of interest, \eqref{NThPowerUnderTheTrace}. We have
\begin{align}
    \eqref{NThPowerUnderTheTrace} &\leq 2 t^2  \norm{H}_{\blt(\mathcal{H})}^2 \max_{z \in \Gamma} \frac{\norm{(z-T)^{-1}}(1+2\norm{(z-U_tT)^{-1}})}{|z-1|}\nonumber \\
    &\leq t^2 \left(18 \delta^{-1} \norm{H}_{\blt(\mathcal{H})}^2 \max_{\substack{0 \leq t^\prime \leq \tau  \\ z \in \Gamma}} \norm{(z-T)^{-1}} \norm{(z-U_{t^\prime}T)^{-1}} \right)\nonumber \\&=: t^2 C_0, \label{BoundConstantInProof} 
\end{align}
To obtain the second estimate, we used that $\max_{z \in \Gamma} |z-1|^{-1} = 2\delta^{-1}$ and $\norm{(z-U_tT)^{-1}} \geq \norm{(z-U_tT)}^{-1} \geq (|z| + 1)^{-1} \geq \frac{2}{5}$.
Equation \eqref{BoundConstantInProof} is a bound for $t \leq \tau$. To prove the Theorem, we need a bound for all $t \geq 0$. To this end, we note that $\tr{P^\bot (U_tT)^n(\ket{v}\bra{v})} \leq 1$, since the expression represents a probability. We further define $C := \max(\tau^{-2}, C_0)$. If $t \leq \tau$, then by equation \eqref{BoundConstantInProof}, 
\begin{align*}
    \tr{P^\bot (U_tT)^n(\ket{v}\bra{v})} \leq t^2 C_0 \leq Ct^2.
\end{align*}
And if $t > \tau$, then
\begin{align*}
    \tr{P^\bot (U_tT)^n(\ket{v}\bra{v})} \leq 1 \leq \frac{t^2}{\tau^2} \leq Ct^2.
\end{align*}
Hence, 
\begin{align*}
    \tr{P^\bot (U_tT)^n(\ket{v}\bra{v})} \leq C t^2, 
\end{align*}
for all $t \geq 0$.
This is a bound independent of $n$. Inequality \eqref{InfluenceBound} is then easily obtained by setting $t := \frac{1}{N}$ and summing over all $n$, which yields an additional factor $N$. It remains to show inequality \eqref{ProbabilityBound}, in which $U_t$ and $T$ have switched order. Since $\ket{v}\bra{v}$ is a fixed point of $T$, we have $\tr{P^\bot (TU_t)^N(\ket{v}\bra{v})} = \tr{P^\bot T(U_tT)^N(\ket{v}\bra{v})}$. We set $\rho := (U_tT)^N(\ket{v}\bra{v})$ and $\phi := P^\bot \rho v$ and write
\begin{align*}
    \rho = \braket{v}{\rho v} \ket{v}\bra{v} + \ket{v}\bra{\phi} + \ket{\phi}\bra{v} + P^\bot \rho P^\bot.
\end{align*}
Clearly, $\ket{v}\bra{\phi} \in \blt_{v \bot}$ and $\ket{\phi}\bra{v} \in \blt_{\bot v}$. Hence, by Lemma \ref{InvariantSubsapceLemma}, we have $T(\ket{v}\bra{\phi}) \in \blt_{v \bot}$ and $T(\ket{\phi}\bra{v}) \in \blt_{\bot v}$. Thus,
\begin{align*}
    \tr{P^\bot T(\rho)} = \tr{P^\bot T(P^\bot \rho P^\bot)} \leq \tr{T(P^\bot \rho P^\bot)} = \tr{P^\bot \rho}.
\end{align*}
Hence,
\begin{align*}
    \tr{P^\bot (TU_t)^N(\ket{v}\bra{v})} \leq \frac{C}{N^2}.
\end{align*}
This finishes the proof.
\end{proof}

\begin{thm} \label{GeneralTheoremCompactSet}
Let $\mathcal{C} \subseteq \blt(\trcl(\mathcal{H}))$ be a compact set of channels, and let $v \in \mathcal{H}$ be a unit vector. Assume that
\begin{enumerate} 
    \item \label{deltaCondition} For all $T \in \mathcal{C}$, the quantity
\begin{align*}
    r_T := \sup_{z \in \sigma(T) \setminus \{1\}} |z|
\end{align*}
is strictly less than $1$. In other words, the spectral gap is non-zero.
\item For each $T \in \mathcal{C}$, the state $\ket{v}\bra{v}$ is a fixed point of $T$.
\item \label{SimplicityAssumption}For all $T \in \mathcal{C}$, the algebraic multiplicity\footnote{For an isolated point $\lambda \in \sigma(T)$, the algebraic multiplicity is the dimension of the range of the spectral projection.} of the isolated point $1 \in \sigma(T)$, is $1$. In other words, $1$ is a simple eigenvalue.
\end{enumerate}
Furthermore, let $H \in \blt(\mathcal{H})$ be self-adjoint and $U_t: \trcl(\mathcal{H}) \rightarrow \trcl(\mathcal{H})$ be defined by $U_t(\cdot) = e^{-iHt}\cdot e^{iHt}$. Then there exists a constant $C_\mathcal{C} < \infty$, such that 
\begin{align*}
    \tr{P^\bot \, (T \circ U_{\frac{1}{N}})^N(\ket{v}\bra{v})} &\leq \frac{C_\mathcal{C} \norm{H}^2_{\blt(\mathcal{H})}}{N^2} \\ 
    \tr{P^\bot \, \sum_{n = 0}^{N-1} (U_{\frac{1}{N}} \circ T)^n(\ket{v}\bra{v})} &\leq \frac{C_\mathcal{C} \norm{H}^2_{\blt(\mathcal{H})} }{N},
\end{align*}
for all $N \in \mathbb{N}$, where $P^\bot := \idmat - \ket{v}\bra{v}$.
\end{thm}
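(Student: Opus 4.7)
My plan is to apply Theorem \ref{FundamentalTheoremPositivePart} uniformly to each $T \in \mathcal{C}$, upgrading the per-$T$ hypotheses of Theorem \ref{GeneralTheoremCompactSet} to uniform statements over the compact set $\mathcal{C}$. The principal obstacle is verifying condition \eqref{SpectrumDoesNotMoveCondition} uniformly: I must rule out eigenvalues of $U_tT$ in the annulus $\{1-\delta < |z| < 1\}$ distinct from $1$, even though $\ket{v}\bra{v}$ is, in general, not a fixed point of $U_tT$.

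\emph{Uniform spectral gap on $\mathcal{C}$.} For each $T_0 \in \mathcal{C}$, hypotheses \ref{deltaCondition} and \ref{SimplicityAssumption} make the contour $\Gamma_{T_0} := \{z \in \mathbb{C} : |z-1| = (1-r_{T_0})/4\}$ lie in $\rho(T_0)$ and enclose only the simple eigenvalue $1$. The Riesz projection $P_T := \frac{1}{2\pi i} \oint_{\Gamma_{T_0}}(z-T)^{-1}\,\mathrm{d}z$ is norm-continuous in $T$, and projections have locally constant rank, so there is an open neighborhood $\mathcal{U}_{T_0} \subseteq \mathcal{C}$ on which $\Gamma_{T_0} \subset \rho(T)$ and $\mathrm{rank}(P_T) = 1$. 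Because every $T \in \mathcal{U}_{T_0}$ fixes $\ket{v}\bra{v}$, we have $1 \in \sigma(T)$, and the rank-one Riesz projection forces $1$ to be the unique spectral point inside $\Gamma_{T_0}$. Upper semi-continuity of the spectrum, after possibly shrinking $\mathcal{U}_{T_0}$, also gives $\sigma(T) \subseteq \mathbb{D}_{1-\delta_{T_0}}(0) \cup \{1\}$ there for some $\delta_{T_0}>0$. Extracting a finite subcover of $\mathcal{C}$ produces a uniform $\delta_0 > 0$ with $\sigma(T) \subseteq \mathbb{D}_{1-\delta_0}(0) \cup \{1\}$ for all $T \in \mathcal{C}$.

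\emph{Uniform spectral gap for $U_tT$.} From $\norm{U_t - \idop} \leq 2\norm{H}_{\blt(\mathcal{H})} t$ we get $\norm{U_tT - T} \leq 2\norm{H}_{\blt(\mathcal{H})} t$ uniformly in $T \in \mathcal{C}$. Setting $\tau := \tau_0/\norm{H}_{\blt(\mathcal{H})}$ with $\tau_0 > 0$ small enough that this bound stays strictly below the uniform perturbation threshold for the Riesz projection on $\{|z-1| = \delta_0/2\}$ (the threshold depending only on the uniform resolvent bound from Step 1), norm-continuity of Riesz projections yields $\mathrm{rank}(P_{U_tT}) = 1$ on the disc $\{|z-1| \leq \delta_0/2\}$ for all $T \in \mathcal{C}$ and $0 \leq t \leq \tau$. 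The channel structure enters decisively here: the Banach-space dual satisfies $(U_tT)^*(\idmat) = T^*(U_t^*(\idmat)) = T^*(\idmat) = \idmat$, because $T^*$ and $U_t^*$ are both unital, so $1$ is an eigenvalue of $(U_tT)^*$ and hence $1 \in \sigma((U_tT)^*) = \sigma(U_tT)$. Combined with the rank-one projection, $1$ is the unique spectral point of $U_tT$ in $\{|z-1| \leq \delta_0/2\}$; upper semi-continuity outside this disc then gives
\[
\sigma(U_tT) \subseteq \mathbb{D}_{1-\delta_0/2}(0) \cup \{1\} \quad \text{for all } T \in \mathcal{C},\ 0 \leq t \leq \tau,
\]
which is exactly \eqref{SpectrumDoesNotMoveCondition} with $\delta := \delta_0/2$.

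\emph{Uniform resolvent bound and conclusion.} With $\Gamma := \{|z|=1-\delta/2\} \cup \{|z-1|=\delta/2\}$, the previous two steps place $\Gamma$ at a uniform positive distance from $\sigma(T) \cup \sigma(U_tT)$, so $(T,t,z) \mapsto \norm{(z-T)^{-1}}\norm{(z-U_tT)^{-1}}$ is continuous on the compact set $\mathcal{C} \times [0,\tau] \times \Gamma$ and attains a finite maximum $M$. Applying Theorem \ref{FundamentalTheoremPositivePart} to each $T \in \mathcal{C}$ with the uniform data $\tau$, $\delta$, $H$ produces a constant bounded by $\max(\tau^{-2}, 18\delta^{-1}\norm{H}_{\blt(\mathcal{H})}^2 M) = \norm{H}_{\blt(\mathcal{H})}^2 \max(\tau_0^{-2}, 36\delta_0^{-1} M)$. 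Setting $C_\mathcal{C} := \max(\tau_0^{-2}, 36\delta_0^{-1} M)$ gives both claimed inequalities; the degenerate case $\norm{H}_{\blt(\mathcal{H})} = 0$ is trivial since then $U_t = \idop$ and $(T\circ U_{1/N})^N\ket{v}\bra{v} = \ket{v}\bra{v}$, making both bounds vanish.
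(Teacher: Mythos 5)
Your proposal is correct and follows essentially the same route as the paper's own proof: a uniform spectral gap over the compact set $\mathcal{C}$ via upper semi-continuity of the spectrum and rank-stability of Riesz projections, then verification of condition \eqref{SpectrumDoesNotMoveCondition} for $U_tT$ with $\tau \propto \norm{H}_{\blt(\mathcal{H})}^{-1}$ (using that $U_tT$ is a channel, so $1\in\sigma(U_tT)$), followed by an application of Theorem \ref{FundamentalTheoremPositivePart}. The only difference worth noting is that the paper bounds $\norm{(z-U_tT)^{-1}}\leq 3\norm{(z-T)^{-1}}$ explicitly by a Neumann series, which makes it manifest that your maximum $M$ — and hence $C_\mathcal{C}$ — can be chosen independently of $H$, a point your compactness maximum over $\mathcal{C}\times[0,\tau]\times\Gamma$ (with $\tau$ and $U_t$ depending on $H$) leaves slightly implicit, though it is immediately recoverable from your own smallness condition on $2\tau_0$.
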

\begin{proof}
The basic strategy is to reduce the claim to an application of Theorem \ref{FundamentalTheoremPositivePart}. To this end, we basically need to show that conditions 1-3 imply that condition \eqref{SpectrumDoesNotMoveCondition} can be satisfied uniformly, i.e. that there exist $\tau > 0$ and $0 < \delta < 1$ such that \eqref{SpectrumDoesNotMoveCondition} is satisfied for all $T \in \mathcal{C}$.
The main tool to show this is the upper semi-continuity of the spectrum. To use that property, we import the following two theorems.
\begin{thm}[\cite{Kato}, p. 208]\label{SemicontinuitySpectrum}
For a Banach space $\mathcal{X}$, let $T, S \in \blt(\mathcal{X})$, and let $\Gamma$ be a compact subset of the resolvent set $\rho(T)$. \\If $\norm{T-S} < \min_{z \in \Gamma} \norm{(z-T)^{-1}}^{-1}$, then $\Gamma \subseteq \rho(S)$. Furthermore, for any open set $V \subseteq \mathbb{C}$, with $\sigma(T) \subset V$, there exists $\gamma > 0$, such that $\sigma(S) \subseteq V$ whenever $\norm{S-T} < \gamma$. 
\end{thm}
\begin{thm} [\cite{simon2015operator}, p. 67]\label{ProjectionTheorem}
For a Banach space $\mathcal{X}$, let $P, Q \in \blt(\mathcal{X})$ be bounded projections with $\norm{P - Q} < 1$. Then there exists an invertible operator $A \in \blt(\mathcal{X})$, such that $Q = APA^{-1}$. In particular $\mathrm{ran}(P)$ and $\mathrm{ran}(Q)$ are isomorphic.  
\end{thm}
\noindent 
To start, we show that not only $r_T < 1$ for all $T \in \mathcal{C}$, but that $\sup_{T\in \mathcal{C}} r_T < 1$.  
To this end, we show that the function $r : \mathcal{C} \rightarrow \mathbb{R}, \; T \mapsto r_T$ is upper semi-continuous. That is, we need to show that for every $T \in \mathcal{C}$ and every $\epsilon > 0$, there is a set $U \subseteq \mathcal{C}$, which is open in the relative topology on $\mathcal{C}$, such that $r_S \leq r_T + \epsilon$ for all $S \in U$. For fixed $T$ and $\epsilon > 0$, define $\epsilon^\prime := \min(\epsilon, \frac{1-r_{T}}{3})$ and the open set $V_{\epsilon^\prime} := B_{r_{T}+\epsilon^\prime}(0) \cup B_{\epsilon^\prime}(1) \subseteq \mathbb{C}$. By construction, $\sigma(T) \subseteq V_{\epsilon^\prime}$. Thus, Theorem \ref{SemicontinuitySpectrum} implies that there exists $\gamma > 0$ such that $\sigma(S) \subseteq V_{\epsilon^\prime}$, for all $S \in B_\gamma(T)$. Thus, for $S \in B_\gamma(T)$, the projection $P_S$ onto the spectral subspace associated with the spectral subset $\sigma(S) \cap B_{\epsilon^\prime}(1)$ is given by
\begin{align*}
    P_S := \frac{1}{2\pi i} \oint\limits_{|z - 1| = \frac{1-r_T}{2}}\frac{1}{z-T_n}\,\mathrm{d}z = P_T + \frac{1}{2\pi i} \oint\limits_{|z - 1| = \frac{1-r_T}{2}}\frac{1}{z-S}(S-T)\frac{1}{z-T}\,\mathrm{d}z, 
\end{align*}
where we used the second resolvent identity to obtain the last equation. A standard estimate yields
\begin{align*}
    \norm{P_S - P_T} \leq \frac{1-r_T}{2} \norm{S-T} \max_{|z-1| = \frac{1-r_T}{2}} \left\{\norm{(z-S)^{-1}}  \norm{(z-T)^{-1}} \right\}
\end{align*}
Since the set $S_0 := \overline{B_{\frac{\gamma}{2}}(T)} \cap \mathcal{C}$ is compact, the constant
\begin{align*}
    C_0 :=\max_{\substack{ |z-1| = \frac{1-r_T}{2}\\S \in S_0}} \left\{\norm{(z-S)^{-1}}  \norm{(z-T)^{-1}} \right\}
\end{align*}
is finite. We set $\gamma^\prime := \min\{\frac{\gamma}{2}, \frac{1}{(1-r_T) C_0}\}$ and $U := B_{\gamma^\prime}(T) \cap \mathcal{C}$. By construction, $U$ is open in the relative topology on $\mathcal{C}$ and we have $\sigma(S) \subseteq V_{\epsilon^\prime}$ and $\norm{P_S - P_T} \leq \frac{1}{2} < 1$, for all $S \in U$.
By assumption $\ref{SimplicityAssumption}$, $\mathrm{ran}(P_T)$ is 1-dimensional. Thus, by Theorem \ref{ProjectionTheorem}, also $\mathrm{ran}(P_S)$ is one-dimensional, for $S \in U$. Thus, there can be only one point in $\sigma(S) \cap B_{\epsilon^\prime}(1)$ and this point must be $1$, as $1$ is in the spectrum of every channel. Hence, for $S \in U$, we have $\sigma(S)\setminus\{ 1 \} \subseteq B_{r_T+\epsilon^\prime}(0)$. So $r(S) = r_{S} \leq r_{T} + \epsilon^\prime \leq r_{T} + \epsilon = r(T) + \epsilon$. In other words, $r$ is upper semi-continuous. 
The upper semi-continuous function $r$ assumes its maximum on the compact set $\mathcal{C}$. This maximum cannot be equal to $1$, as this would contradict assumption \ref{deltaCondition}. Thus $\max_{T\in \mathcal{C}} r_T  < 1$, as claimed. \\ In preparation for the application of Theorem \ref{FundamentalTheoremPositivePart}, we define the joint spectral gap
\begin{align}
    \delta_J := 1 - \max_{T \in \mathcal{C}} r(T).
\end{align}
We have $0 < \delta_J < 1$ and
\begin{align*}
    \sigma(T) \subseteq \mathbb{D}_{1-\delta_J}(0) \cup \{1\},
\end{align*}
for all $T \in \mathcal{C}$.
We define $\Gamma := \mathbb{D}_{1+\frac{\delta_J}{3}}(0)\setminus(B_{\frac{\delta_J}{3}}(1) \cup B_{1-\frac{2\delta_J}{3}}(0))$, which is a compact subset of $\rho(T)$ for all $T\in \mathcal{C}$, and we set
\begin{align*}
    \tau := \frac{1}{7\norm{H}_{\blt(\mathcal{H})}} \min_{\substack{T \in \mathcal{C} \\z\in \Gamma}} \norm{(z-T)^{-1}}^{-2},
\end{align*}
which is non-zero, as the minimization is over a strictly positive function on a compact set. For this particular choice of $\tau$, we will show that 
\begin{align*}
    \sigma(U_tT) \subseteq D_{1-\frac{2\delta_J}{3}}(0) \cup \{1\}
\end{align*}
for $0 \leq t \leq \tau$ and then use Theorem \ref{FundamentalTheoremPositivePart}. From now on, let $0 \leq t \leq \tau$ and $T\in \mathcal{C}$. Using the Taylor estimate \eqref{NormDUEstimate} and the definition of $\tau$, yields 
\begin{align}\label{T-UtEstimate}
    \norm{T - U_tT} &\leq \norm{U_t-\idop} \norm{T} \leq 2\norm{H}_{\blt(\mathcal{H})}t \nonumber\\
    &\leq \frac{2}{7} \min_{\substack{T \in \mathcal{C} \\z\in \Gamma}} \norm{(z-T)^{-1}}^{-2}.
\end{align}
This inequality has two important implications. 
First, for $z \in \Gamma$ we have $\norm{(z-T)^{-1}}^{-1} \leq \norm{z-T} \leq |z| + 1 \leq \frac{7}{3}$. Hence $\eqref{T-UtEstimate} < \min_{\substack{T \in \mathcal{C} \\z\in \Gamma}} \norm{(z-T)^{-1}}^{-1}$ and we can apply Theorem \ref{SemicontinuitySpectrum}, which tells us that $\Gamma \subseteq \rho(U_tT)$ for all $T\in \mathcal{C}$ and $0 \leq t \leq \tau$. Equivalently, 
\begin{align*}
    \sigma(U_tT) \subseteq \mathbb{D}_{1-\frac{2\delta_J}{3}}(0) \cup \mathbb{D}_{\frac{\delta_J}{3}}(1). 
\end{align*}
Thus we only have to show that $\sigma(U_tT) \cap \mathbb{D}_{\frac{\delta_J}{3}}(1) = \{1\}$.\\
Second, $\norm{(U_tT - T)(z-T)^{-1}} \leq \frac{2}{7}\min_{\substack{T \in \mathcal{C} \\z\in \Gamma}} \norm{(z-T)^{-1}}^{-1} \leq \frac{2}{3}$. Thus, the series
\begin{align*}
    \frac{1}{z-T} \sum_{k = 0}^\infty \left[(U_tT-T)(z-T)^{-1}\right]^k = (z-U_tT)^{-1}
\end{align*}
converges. A term by term estimate yields 
\begin{align} \label{UxEstimate}
    \norm{(z-U_tT)^{-1}} \leq 3\norm{(z-T)^{-1}}
\end{align}
Let $P_t := \frac{1}{2\pi i} \oint\limits_{|z - 1| = \frac{\delta_J}{3}}\frac{1}{z-U_tT} \,\mathrm{d}z$ be the spectral projection, then 
\begin{align*}
    \norm{P_t - P_0} &= \norm{ \frac{1}{2\pi i} \oint\limits_{|z - 1| = \frac{\delta_J}{3}}\frac{1}{z-U_tT} - \frac{1}{z-T}\,\mathrm{d}z } 
    \\ &\leq \frac{\delta_J}{3} \max_{|z - 1| = \frac{\delta_J}{3}} \norm{(z-U_tT)^{-1} - (z-T)^{-1}}
    \\ &= \frac{\delta_J}{3} \max_{|z - 1| = \frac{\delta_J}{3}} \norm{(z-U_tT)^{-1} (U_tT-T) (z-T)^{-1}}
    \\ &\leq \delta_J \norm{U_tT-T} \max_{z \in \Gamma} \norm{(z-T)^{-1}}^{2}
    \\ &\leq \frac{2\delta_J}{7} < 1, 
\end{align*}
where we used the second resolvent identity to obtain the third line, \eqref{UxEstimate} for the forth line and \eqref{T-UtEstimate} for the fifth line.
Hence, by Theorem \ref{ProjectionTheorem}, the dimension of $\mathrm{ran}(P_t)$ equals the dimension of $\mathrm{ran}(P_0)$, for all $0 \leq t \leq \tau$ and the latter dimension is $1$. Thus $\sigma(U_tT) \cap \mathbb{D}_{\frac{\delta_J}{3}}(1)$ contains exactly one point, which must be $1$, as $U_tT$ is a channel. In conclusion, we have
\begin{align*}
    \sigma(U_tT) \subseteq \mathbb{D}_{1 - \delta}(0) \cup \{1\},
\end{align*}
for all $T\in \mathcal{C}$ and $0 \leq t \leq \tau$, with $\delta := \frac{2\delta_J}{3}$. 
Finally, a direct application of Theorem \ref{FundamentalTheoremPositivePart} proves the claim. We can also get an explicit bound for $C_\mathcal{C}$. To this end, we need to bound the constant that appears in Theorem \ref{FundamentalTheoremPositivePart}. We have 
\begin{align*}
    \tau^{-2} = 49 \norm{H}_{\blt(\mathcal{H})}^2 \max_{\substack{T \in \mathcal{C}\\z\in \Gamma}} \norm{(z-T)^{-1}}^4
\end{align*}
and, by \eqref{UxEstimate}, the second term can be bounded by
\begin{align} \label{EstimateSecondConstantTerm}
 36 \delta_J^{-1} \norm{H}_{\blt(\mathcal{H})}^2 \max_{\substack{T \in \mathcal{C}\\ z \in \Gamma}} \norm{(z-T)^{-1}}^2.
\end{align}
Furthermore, by the spectral mapping theorem, the spectral radius of $(z-T)^{-1}$ is given by $(\inf_{s \in \sigma(T)} \norm{z-s})^{-1} = (\mathrm{dist}(z, \sigma(T)))^{-1}$. Since the norm of any operator is an upper bound for the spectral radius, we have
\begin{align*}
    \max_{\substack{T \in \mathcal{C}\\z\in \Gamma}} \norm{(z-T)^{-1}} \geq \max_{\substack{T \in \mathcal{C}\\z\in \Gamma}} \left\{ \mathrm{dist}(z, \sigma(T))^{-1} \right\} \geq 3\delta_J^{-1} \geq 3.
\end{align*}
By applying this bound to \eqref{EstimateSecondConstantTerm}, we see that $\tau^{-2} \geq \eqref{EstimateSecondConstantTerm}$. Thus, we can choose
\begin{align*}
    C_\mathcal{C} := 49 \max_{\substack{T\in \mathcal{C} \\ z \in \Gamma}} \norm{(z-T)^{-1}}^4 < \infty.
\end{align*}
\end{proof}
\begin{thm} \label{FiniteDimKwiat}
For $\mathrm{dim}(\mathcal{H}) < \infty$, let $\mathcal{C}$ be a closed set of channels $T : \trcl(\mathcal{H}) \rightarrow \trcl(\mathcal{H})$ and let $v \in \mathcal{H}$ be a unit vector such that for every $T \in \mathcal{C}$, the state $\ket{v}\bra{v}$ is the only state that is a fixed point of $T$.\\
Furthermore, let $H \in \blt(\mathcal{H})$ be self-adjoint and $U_t: \trcl(\mathcal{H}) \rightarrow \trcl(\mathcal{H})$ be defined by $U_t(\cdot) = e^{-iHt}\cdot e^{iHt}$. Then there exists a constant $C_\mathcal{C} < \infty$, such that for all $N \in \mathbb{N}$,
\begin{align}
    \tr{P^\bot \, (T \circ U_{\frac{1}{N}})^N(\ket{v}\bra{v})} &\leq \frac{C_\mathcal{C} \norm{H}^2_{\blt(\mathcal{H})} }{N^2} \\ 
    \tr{P^\bot \, \sum_{n = 0}^{N-1} (U_{\frac{1}{N}} \circ T)^n(\ket{v}\bra{v})} &\leq \frac{C_\mathcal{C} \norm{H}^2_{\blt(\mathcal{H})} }{N}, \label{FiniteInfucenceCound}
\end{align}
where $P^\bot := \idmat - \ket{v}\bra{v}$.
\end{thm}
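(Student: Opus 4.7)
The plan is to deduce Theorem \ref{FiniteDimKwiat} as a direct corollary of Theorem \ref{GeneralTheoremCompactSet} by verifying, under the hypotheses of Theorem \ref{FiniteDimKwiat}, the three numbered conditions of Theorem \ref{GeneralTheoremCompactSet} together with the compactness of $\mathcal{C}$.

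The easy reductions come first. In finite dimension the set of all channels on $\trcl(\mathcal{H})$ is closed and norm-bounded in $\blt(\trcl(\mathcal{H}))$, hence compact; so a closed subset $\mathcal{C}$ is itself compact. Condition~2 of Theorem \ref{GeneralTheoremCompactSet} is exactly the assumption that $\ket{v}\bra{v}$ is a fixed state of every $T \in \mathcal{C}$. For condition~3 (simplicity of the eigenvalue $1$) I would use the following Ces\`aro argument: for any finite-dimensional CPTP map $T$, the Ces\`aro mean $E := \lim_{N \to \infty} \frac{1}{N}\sum_{n=0}^{N-1} T^n$ exists ($T$ is power-bounded and the eigenvalue~$1$ is semisimple for every finite-dimensional CPTP map) and is a CPTP projection onto the fixed-point eigenspace of $T$. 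Since $E(\rho)$ is a fixed state whenever $\rho$ is, the uniqueness hypothesis forces $E(\rho) = \ket{v}\bra{v}$ on every state, so by linearity $E(\cdot) = \ket{v}\bra{v}\,\tr{\cdot}$. The range of $E$ is therefore one-dimensional, and $1$ is a simple eigenvalue of $T$.

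The main obstacle is condition~1, the spectral gap $r_T < 1$ for every $T \in \mathcal{C}$. I would argue by contradiction and assume the existence of $\mu \in \sigma(T)$ with $|\mu|=1$ and $\mu \neq 1$. Invoking the standard structure theorem for the peripheral spectrum of a finite-dimensional CPTP map with a unique fixed state (the peripheral eigenvalues form a finite cyclic subgroup of the unit circle, and are thus roots of unity), there exists an integer $p \geq 2$ with $\mu^p = 1$. Consequently $T^p$ has $1$ as an eigenvalue of algebraic multiplicity at least two, since it inherits eigenvectors from both the $1$- and the $\mu$-eigenspace of $T$. On the other hand, $T^p$ still has $\ket{v}\bra{v}$ as its unique fixed state: for any $T^p$-fixed state $\rho$, the average $\frac{1}{p}\sum_{k=0}^{p-1} T^k(\rho)$ is $T$-fixed, hence equals $\ket{v}\bra{v}$; by extremality of the pure state $\ket{v}\bra{v}$ in the convex state set, each $T^k(\rho)$ individually equals $\ket{v}\bra{v}$, and in particular $\rho = \ket{v}\bra{v}$. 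Re-running the Ces\`aro argument of the previous paragraph for $T^p$ then shows that $1$ is simple for $T^p$, contradicting the multiplicity bound.

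With all three hypotheses verified on the compact set $\mathcal{C}$, a direct application of Theorem \ref{GeneralTheoremCompactSet} produces a constant $C_\mathcal{C} < \infty$ together with the two inequalities claimed in Theorem \ref{FiniteDimKwiat}, with the advertised $\norm{H}_{\blt(\mathcal{H})}^2$-dependence and the rates $N^{-2}$ and $N^{-1}$.
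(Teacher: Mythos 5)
Your proposal is correct and has the same overall skeleton as the paper's proof — check compactness and the three hypotheses of Theorem \ref{GeneralTheoremCompactSet} and then apply it — but you verify the spectral hypotheses by a genuinely different route. The paper disposes of the gap and the simplicity of the eigenvalue $1$ with a single citation: Theorems 7 and 8 of Burgarth and Giovannetti, which state that an ergodic channel whose unique invariant state is pure is mixing. You instead argue almost from scratch: the Ces\`aro-mean argument for simplicity is complete (power-boundedness of a CPTP map forces the whole peripheral spectrum to be semisimple, so the Ces\`aro limit exists and is a CPTP projection onto the fixed space, which uniqueness of the fixed state identifies with $\ket{v}\bra{v}\tr{\cdot}$), and your gap argument is valid: if $\mu\neq 1$ were a peripheral eigenvalue, the cyclic-group structure of the peripheral spectrum of an ergodic channel gives $\mu^p=1$, so $T^p$ has $1$ with multiplicity at least two, while the extremality of the pure state $\ket{v}\bra{v}$ shows $T^p$ is still ergodic with the same pure fixed state, so the Ces\`aro argument applied to $T^p$ yields the contradiction — purity enters exactly where it must. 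What each approach buys: the paper's citation is shorter; your argument is closer to self-contained, but the one external ingredient you do invoke (peripheral eigenvalues of an ergodic finite-dimensional channel form a finite cyclic group of roots of unity) is itself a nontrivial structure theorem of weight comparable to the cited result, and is stronger than needed — the standard fact that peripheral eigenvectors are supported inside the support of a maximal invariant state gives the gap in one line once that support is the one-dimensional $\mathrm{span}\{v\}$, which is essentially the content of the Burgarth--Giovannetti theorem the paper uses. One cosmetic point: for the existence of the Ces\`aro limit you should appeal to semisimplicity of \emph{all} peripheral eigenvalues (which power-boundedness supplies), not only of the eigenvalue $1$.
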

\begin{proof}
The claim follows from Theorem \ref{GeneralTheoremCompactSet} and from results by Burgarth and Giovannetti \cite{Burgarth_2007}. In particular, in their terminology, a channel $T$ is called $\textit{ergodic}$, if there is a unique state that is a fixed point of $T$. And (according to Theorem 7 in \cite{Burgarth_2007}), $T$ is called \textit{mixing}, if $1$ is the only eigenvalue with modulus $1$ and the eigenvalue $1$ is simple. Thus, in particular, if $T$ is mixing, then the spectral gap is non-zero. Theorem 8 in \cite{Burgarth_2007} says that ergodic channels are mixing, if the unique state that is a fixed point is pure. By assumption, every $T \in \mathcal{C}$ is ergodic and the only state that is a fixed point is the pure state $\ket{v}\bra{v}$. Thus, all $T \in \mathcal{C}$ are mixing and the conditions in Theorem \ref{GeneralTheoremCompactSet} are automatically satisfied. This proves the claim.
\end{proof}
\begin{rem} \label{SpectralConditionIsNecessary}
In the previous theorem, it is important that $\ket{v}\bra{v}$ is the only state that is a fixed point. To demonstrate this, we define the Hamiltonian on a qubit system, $\mathcal{H}_Q := \mathrm{span}\{v, q_1\}$, as $H := \frac{\pi}{2} \sigma_y$, where $\sigma_y$ is the Pauli matrix\footnote{In coordinates, $\sigma_y := \begin{pmatrix}
0 & -i \\
i & 0
\end{pmatrix}$ and $e^{-i H t} = \begin{pmatrix}
\cos(\theta) & -\sin(\theta) \\
\sin(\theta) & \cos(\theta)
\end{pmatrix}$, with $\theta := \frac{\pi t}{2}$}. So, $U_t(\cdot) := e^{-iHt} \cdot e^{iHt}$. The channel 
$T : \trcl(\mathcal{H}_Q) \rightarrow \trcl(\mathcal{H}_Q)$ is then defined by
\begin{align*}
    T(\cdot) := \tr{\ket{v}\bra{v}\, \cdot\,}\ket{v}\bra{v} + \tr{\ket{q_1}\bra{q_1}\, \cdot\,}\ket{q_1}\bra{q_1},
\end{align*}
It is not hard to verify by induction that 
\begin{align*}
   (U_{\frac{1}{N}}\circ T)^n = U_{\frac{1}{N}}\left( \frac{1}{2}(1+\cos^n(2\theta)) \ket{v}\bra{v} + \frac{1}{2}(1-\cos^n(2\theta)) \ket{q_1}\bra{q_1}\right),  
\end{align*}
where $\theta := \frac{\pi}{2N}$.
The formula for the sum of the geometric progression yields
\begin{align*}
    \sum_{n = 0}^{N-1} (U_{\frac{1}{N}} \circ T)^n(\ket{v}\bra{v}) = U_{\frac{1}{N}}\left(\frac{1}{2}\left(N + \lambda\right)\ket{v}\bra{v} + \frac{1}{2}\left(N - \lambda\right) \ket{q_1}\bra{q_1} \right),
\end{align*}
with $\lambda := \frac{1-\cos^N(2\theta)}{2\sin^2(\theta)}$. It is an exercise in elementary calculus (or a query in your favourite computer algebra system) to show that 
\begin{align*}
    \lim_{N\rightarrow\infty}(N-\lambda) = \frac{\pi^2}{4}
\end{align*}
Since $U_{\frac{1}{N}} \rightarrow \idop$, when $N \rightarrow \infty$, it follows that the quantity on the RHS of \eqref{FiniteInfucenceCound} does not vanish as $N \rightarrow \infty$. In particular, our example shows that the Kwiat et al.-like protocol cannot be applied naively. Thus, the reduction process described in the next section is needed in some cases. 
\end{rem}
\begin{rem}
If the channel in Theorem \ref{FiniteDimKwiat} is a qubit channel ($\mathcal{H} = \mathrm{span}\{v, p\}$), then one can determine the precise asymptotics in a rather tedious calculation. We only state the result, which is that if $H := \frac{\pi}{2} \sigma_y$, then
\begin{align*}
    \lim_{N\rightarrow \infty} N^2 \tr{P^\bot \, (T \circ U_{\frac{1}{N}})^N(\ket{v}\bra{v})} &= \lim_{N\rightarrow \infty} N \tr{P^\bot \, \sum_{n = 0}^{N-1} (U_{\frac{1}{N}} \circ T)^n(\ket{v}\bra{v})} \\ &= \frac{\pi^2}{4}\frac{1- \abs{\tau_0}^2}{(1-\tau) \abs{1-\tau_0}^2},
\end{align*} 
where $\tau := \tr{P^\bot T(P^\bot)}$ and $\tau_0 := \tr{\ket{p}\bra{v} \,T(\ket{v}\bra{p})}$. \\This result contains as a special case the result for semi-transparent objects \cite{PhysRevA.74.054301, PhysRevA.96.062129}. 
\end{rem}
\begin{rem} It is a direct consequence of the results in the next section that the $N^{-1}$ form of the bound is optimal.  
\end{rem}

\subsection{The reduction protocol} \label{ReductionProtocolSection}
In this section, in which we will assume that all Hilbert spaces are finite-dimensional, we want to transform our given channel in such a way that the Kwiat et al.-like strategy, which was described in the previous section, can be applied. The general idea is that instead of inserting the unknown channel directly into the circuit of Figure \ref{Kwiat-likeStartegy}, we preprocess and postprocess the states that go in and out of the channel. 
\begin{figure}[htbp]
    \centering
    \includegraphics[width=.9\textwidth]{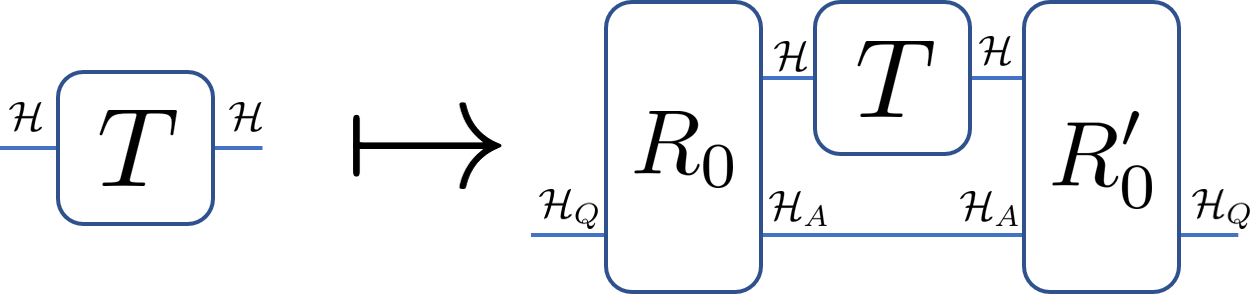}
    \caption{General transformation scheme: a superchannel} \label{SuperchannelFigure}
\end{figure}
In other words, we replace the channel $T$ in Figure \ref{Kwiat-likeStartegy} by the construction that is depicted on the RHS of Figure \ref{SuperchannelFigure}. In Figure \ref{SuperchannelFigure}, $\mathcal{H}_Q$ and $\mathcal{H}_A$ are Hilbert spaces and $R_0 : \trcl(\mathcal{H}_Q) \rightarrow \trcl(\mathcal{H}\otimes \mathcal{H}_A)$ and $R_0^\prime : \trcl(\mathcal{H}\otimes \mathcal{H}_A) \rightarrow \trcl(\mathcal{H}_Q)$ are channels. The resulting transformation can be viewed as a map $R : \blt(\trcl(\mathcal{H})) \rightarrow \blt(\trcl(\mathcal{H}_Q))$, defined by $R(T) := R_0^\prime (T\otimes \idop) R_0$. Maps of this kind are usually called superchannels \cite{Chiribella_2008}. Clearly, if $T$ is a channel with transmission functional $\mathfrak{t}_T$, then $R(T)$ is a channel with transmission functional $\mathfrak{t}_{R(T)} := \mathfrak{t}_{T} \circ \mathrm{tr}_A \circ R_0$. We say that the superchannel $R$ transforms the transmission functional $\mathfrak{t}_T$ to $\mathfrak{t}_{R(T)}$. For consistency reasons, we also remark the following: As is shown in \cite{Chiribella_2008}, for any superchannel $S : \blt(\trcl(\mathcal{H})) \rightarrow \blt(\trcl(\mathcal{H}_Q))$, there exists a Hilbert space $\mathcal{H}_{A^\prime}$ and channels $S_0 : \trcl(\mathcal{H}_Q) \rightarrow \trcl(\mathcal{H}\otimes \mathcal{H}_{A^\prime})$ and $S_0^\prime : \trcl(\mathcal{H}\otimes \mathcal{H}_{A^\prime}) \rightarrow \trcl(\mathcal{H}_Q)$ such that $S(T) = S_0^\prime (T\otimes \idop) S_0$, for all $T \in \blt(\trcl(\mathcal{H}))$. Of course, the choice of $\mathcal{H}_{A^\prime}$, $S_0$ and $S_0^\prime$ is not unique. The transformation of the transmission functional, however, is unique. To see this, assume that we apply $S$ to the map $T_B$, defined by $T_B(\cdot) = \tr{B\,\cdot\,} \rho_0$, where $\rho_0 \in \states(\mathcal{H})$ and $B\in \blt(\mathcal{H})$ are arbitrary. Since $S_0^\prime$ is trace-preserving, we have for $\sigma \in \blt(\mathcal{H}_Q)$, that $\tr{S(T)(\sigma)} = \tr{(T\otimes \idop)S_0(\sigma)} = \tr{B \ptr{A^\prime}{S_0(\sigma)}}$. Since $B$ and $\sigma$ were arbitrary, it follows that $\mathrm{tr}_{A^\prime}\circ S_0$ is independent of the choice of $\mathcal{H}_{A^\prime}$, $S_0$ and $S_0^\prime$. Hence the transformation of the transmission functional is independent of the particular implementation of a superchannel. Formally the replacement described above yields a transformation of the discrimination strategy. That is, given a discrimination strategy $D = (\mathcal{H}_Q, \mathcal{H}_Z, \mathcal{H}_i, \mathcal{H}_o, s_0, \Lambda)$, with $\Lambda = \{\Lambda_1, \Lambda_2, \dots, \Lambda_N\}$, then we obtain the transformed discrimination strategy $D^R := (\mathcal{H}, \mathcal{H}_A\otimes\mathcal{H}_Z, \mathcal{H}_i, \mathcal{H}_o, s_0, \Lambda_R)$, with $\Lambda^R_0 := (R_0 \otimes \idop_Z)\Lambda_0$, $\Lambda^R_N := \Lambda_N(R_0^\prime \otimes \idop_Z)$ and $\Lambda^R_n := (R_0 \otimes \idop_Z)\Lambda_n(R_0^\prime \otimes \idop_Z)$, for $1 \leq n \leq N-1$. 

The task of this section is to show the existence of a superchannel such that the general discrimination task reduces to the one described in the last section. It will be evident from the proof of the following theorem that such a superchannel can be implemented by using only one ancillary qubit and classical resources.  Furthermore, we will show in Remark \ref{OneNeedsAncillaryRemark} that in general the implementation of such a superchannel is impossible without using an ancillary qubit.
\begin{thm}[Reduction superchannel] \label{ReductionSuperchannelTheorem}
For $\mathrm{dim}(\mathcal{H}) < \infty$, let $T : \trcl(\mathcal{H}) \rightarrow \trcl(\mathcal{H})$ be a channel and let $\mathcal{V} \subseteq \mathcal{H}$ be a subspace such that $T$ is isometric on $\mathcal{V}$. Furthermore, let $v \in \mathcal{V}$ be a unit vector. Then there exists a 2-dimensional Hilbert space $\mathcal{H}_Q$, with orthonormal basis $\{q_0, q_1\}$ and a superchannel $R : \blt(\trcl(\mathcal{H})) \rightarrow  \blt(\trcl(\mathcal{H}_Q))$ with the following properties: 
\begin{enumerate}
    \item If $T^\prime \in \blt(\trcl(\mathcal{H}))$ satisfies $T\vert_{\trcl(\mathcal{V})} = T^\prime\vert_{\trcl(\mathcal{V})}$, then $R(T^\prime) = \idop$.
    \item If $T^\prime \in \blt(\trcl(\mathcal{H}))$ is a channel such that $T\vert_{\trcl(\mathcal{V})} \neq T^\prime\vert_{\trcl(\mathcal{V})}$, then the only state that is a fixed point of $R(T^\prime)$, is $\ket{q_0}\bra{q_0}$.
    \item If $T^\prime \in \blt(\trcl(\mathcal{H}))$ is a channel with transmission functional $\mathfrak{t}_{T^\prime}$ and $\mathfrak{t}_{T}(\ket{v}\bra{v}) = 0$, then the transformed transmission functional $\mathfrak{t}_{R(T^\prime)}$ is given by
    \begin{align*}
     \mathfrak{t}_{R(T^\prime)}(\cdot) =  \bigg\{\begin{array}{lr}
         \frac{1}{2} \mathfrak{t}_{T^\prime}(\frac{P^\bot}{d-1}) \tr{\ket{q_1}\bra{q_1}\,\cdot\,}  & \text{if } d > 1\\
       0 & \text{if } d = 1
        \end{array},
    \end{align*}
    where $d := \mathrm{dim}(\mathcal{V})$ and where $P^\bot$ denotes the orthogonal projection onto $\set{\psi \in \mathcal{V}}{\braket{\psi}{v} = 0}$.
\end{enumerate}
\end{thm}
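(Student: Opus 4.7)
My plan is to realize $R$ as an isometric encoding of $\mathcal{H}_Q$ into $\mathcal{V}\otimes\mathcal{H}_A$ followed by an ``invert on its image, reset to $q_0$ off the image'' decoding. Let $V:\mathcal{V}\to\mathcal{H}$ be the isometry with $T|_{\trcl(\mathcal{V})}(\cdot)=V\cdot V^\dagger$, fix an orthonormal basis $v=v_1,v_2,\dots,v_d$ of $\mathcal{V}$, and take an ancilla $\mathcal{H}_A$ with orthonormal basis $a_0,a_1,\dots,a_d$. Define the isometry $K:\mathcal{H}_Q\to\mathcal{H}\otimes\mathcal{H}_A$ by $Kq_0:=v\otimes a_0$ together with $Kq_1:=\tfrac{1}{\sqrt{2}}\,v\otimes a_1+\sum_{j=2}^{d}\tfrac{1}{\sqrt{2(d-1)}}\,v_j\otimes a_j$ when $d>1$ (and $Kq_1:=v\otimes a_1$ when $d=1$), set $R_0(\sigma):=K\sigma K^\dagger$, and put $W:=(V\otimes\idmat_A)K$; since $\mathrm{ran}(K)\subseteq\mathcal{V}\otimes\mathcal{H}_A$, $W$ is again an isometry. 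For the decoder take
\begin{align*}
R_0'(\rho):=W^\dagger\rho W+\tr{(\idmat-WW^\dagger)\rho}\ket{q_0}\bra{q_0},
\end{align*}
which is CPTP with Kraus operators $W^\dagger$ and $\{\ket{q_0}\bra{\psi_i}\}_i$ for any orthonormal basis $\{\psi_i\}$ of $\mathrm{ran}(\idmat-WW^\dagger)$.

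Claims~1 and~3 then reduce to direct calculations. For Claim~1, whenever $T'|_{\trcl(\mathcal{V})}=T|_{\trcl(\mathcal{V})}$ the support of $K\sigma K^\dagger$ lies in $\mathcal{V}\otimes\mathcal{H}_A$, so $(T'\otimes\idop_A)(K\sigma K^\dagger)=W\sigma W^\dagger\in\mathrm{ran}(WW^\dagger)$, whence $R_0'$ returns $W^\dagger W\sigma W^\dagger W=\sigma$ and $R(T')=\idop$. For Claim~3, orthonormality of the $a_j$'s gives $\ptr{A}{R_0(\ket{q_0}\bra{q_0})}=\ket{v}\bra{v}$, $\ptr{A}{R_0(\ket{q_0}\bra{q_1})}=0$ and, for $d>1$, $\ptr{A}{R_0(\ket{q_1}\bra{q_1})}=\tfrac{1}{2}\ket{v}\bra{v}+\tfrac{1}{2(d-1)}P^\bot$; substituting these into $\mathfrak{t}_{R(T')}(\sigma)=\mathfrak{t}_{T'}(\ptr{A}{R_0(\sigma)})$ and using $\mathfrak{t}_{T'}(\ket{v}\bra{v})=0$ delivers the stated formula (with $\mathfrak{t}_{R(T')}\equiv 0$ in the $d=1$ case).

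The substance is Claim~2. The strategy is to expand $T'$ in Kraus form $T'(\cdot)=\sum_k K_k\cdot K_k^\dagger$ and to study the operators $E_k:=W^\dagger(K_k\otimes\idmat_A)K$ on $\mathcal{H}_Q$. Using the orthogonality of the $a_j$'s I expect to verify that each $E_k$ is diagonal in the $\{q_0,q_1\}$-basis,
\begin{align*}
E_k=\alpha_k\ket{q_0}\bra{q_0}+\beta_k\ket{q_1}\bra{q_1},\quad\alpha_k:=\bra{Vv}K_k\ket{v},\quad\beta_k:=\tfrac{1}{2}\alpha_k+\tfrac{1}{2(d-1)}\sum_{j=2}^d\bra{Vv_j}K_k\ket{v_j},
\end{align*}
so that
\begin{align*}
R(T')(\sigma)=\sum_k E_k\sigma E_k^\dagger+\Bigl(1-\sum_k\tr{E_k^\dagger E_k\sigma}\Bigr)\ket{q_0}\bra{q_0}.
\end{align*}
A short matrix-entry computation then forces any fixed-point density matrix $\sigma$ to satisfy $(1-B)\sigma_{11}=0$ with $B:=\sum_k|\beta_k|^2$, together with positivity, so $\sigma=\ket{q_0}\bra{q_0}$ is the unique fixed point as soon as $B<1$.

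The heart of the matter is therefore the inequality $B\le 1$, with equality iff $T'|_{\trcl(\mathcal{V})}=T|_{\trcl(\mathcal{V})}$. Two applications of Cauchy--Schwarz (namely $|a+b|^2\le 2|a|^2+2|b|^2$ applied to $\beta_k$, and $\abs{\sum_j x_j}^2\le(d-1)\sum_j|x_j|^2$ applied to the second sum in $\beta_k$) combined with the channel identity $\sum_k\norm{K_k\psi}^2=\norm{\psi}^2$ should give the chain $B\le\tfrac{1}{2}A+\tfrac{1}{2}\le 1$, where $A:=\sum_k|\alpha_k|^2\le 1$. The hard part will be extracting the equality cases: tracking them through both Cauchy--Schwarz steps should force $K_kv_j=\alpha_kVv_j$ for every $k$ and every $j\in\{1,\dots,d\}$ with $\alpha_k$ independent of $j$, which is exactly $T'|_{\trcl(\mathcal{V})}=T|_{\trcl(\mathcal{V})}$. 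The degenerate $d=1$ case, where $\beta_k=\alpha_k$ and $B=A$, needs a separate but strictly simpler check.
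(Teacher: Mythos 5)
Your construction is correct, and for the crucial Claim~2 it takes a genuinely different route from the paper. The paper keeps the ancilla two-dimensional and obtains Claim~2 by twirling over the group of unitaries on $\mathcal{V}$ that fix $v$ (Lemma \ref{TwirlingLemma}), which requires Schur's lemma to identify the commutant, a complete-positivity argument to pin down the coefficients $\alpha_i$, and extremality arguments to descend from $S(T)=\idop$ to $T'\vert_{\trcl(\mathcal{V})}=T\vert_{\trcl(\mathcal{V})}$. You instead replace the incoherent average by a coherent one: the $(d+1)$-dimensional ancilla tags each basis vector $v_j$ with an orthogonal flag $a_j$, which kills exactly the off-diagonal terms $\bra{Vv_j}K_k\ket{v_{j'}}$, $j\neq j'$, that the twirl would have averaged away. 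This makes every $E_k$ diagonal, reduces the fixed-point question to the single scalar $B=\sum_k\lvert\beta_k\rvert^2$, and turns Claim~2 into an elementary Cauchy--Schwarz equality analysis; the termwise extraction of the equality cases (each summand in $k$ and $j$ must be tight, forcing $K_kv_j=\alpha_kVv_j$ with $\alpha_k$ independent of $j$, hence $K_k\vert_{\mathcal V}=\alpha_kV$ by linearity) goes through exactly as you anticipate, in both the $d>1$ and $d=1$ cases. Your verifications of Claims~1 and~3 are also correct, using that $(T'\otimes\idop_A)\vert_{\trcl(\mathcal{V}\otimes\mathcal{H}_A)}$ is determined by $T'\vert_{\trcl(\mathcal{V})}$ and that $\mathrm{tr}_A\circ R_0$ reproduces exactly the states $\ket{v}\bra{v}$ and $\tfrac12\ket{v}\bra{v}+\tfrac{1}{2(d-1)}P^\bot$ appearing in the stated formula. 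What your version buys is elementarity (no Haar measure, no representation theory) and an explicit spectral gap $1-B$ for $R(T')$; what it gives up is resource economy: the paper's twirl can be realized with a unitary design, i.e.\ classical randomness plus a single ancillary qubit, whereas your ancilla is a genuine $(d+1)$-dimensional quantum system, so the remarks surrounding Theorems \ref{DiscriminationStroategyOutlineTheorem} and \ref{ReductionSuperchannelTheorem} about one-qubit implementability would not follow from your construction. The theorem as stated, which only asserts existence of some superchannel $R$, is fully established by your argument once the computations you sketch are written out.
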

\noindent Before we prove the theorem, let us explore its consequences. First, we establish the analog of Theorem \ref{DiscriminationStroategyOutlineTheorem} for the transmission functional model.
\begin{cor} \label{transmissionFreeDiscriminationCor}
For $\mathrm{dim}(\mathcal{H}) < \infty$, let $\mathcal{C}_A, \mathcal{C}_B \subseteq \blt(\trcl(\mathcal{H}))$ be two closed sets of channels. Furthermore, let $\mathcal{V}$ be a subspace of $\mathcal{H}$ and let $v \in \mathcal{V}$ be a unit vector such that
\begin{enumerate}
    \item For all $T \in \mathcal{C}_A \cup \mathcal{C}_B$, $T$ is a channel with transmission functional $\mathfrak{t}_{T}$.
    \item For all $T \in \mathcal{C}_A$, $T$ is isometric on $\mathcal{V}$.
    \item For all $T \in \mathcal{C}_A$, $\mathfrak{t}_{T}\vert_{\trcl(\mathcal{V})} = 0$.
    \item For all $T \in \mathcal{C}_B$, $\mathfrak{t}_{T}(\ket{v}\bra{v}) = 0$.
    \item $\sup_{T \in \mathcal{C}_B} \norm{\mathfrak{t}_T\vert_{\trcl(\mathcal{V})}} < \infty$
    \item The set $\mathcal{C}_A\vert_{\trcl(\mathcal{V})} := \set{T\vert_{\trcl(\mathcal{V})}}{T \in \mathcal{C}_A}$ contains exactly one element.
    \item $\mathcal{C}_A\vert_{\trcl(\mathcal{V})}$ and $\mathcal{C}_B\vert_{\trcl(\mathcal{V})} := \set{T\vert_{\trcl(\mathcal{V})}}{T \in \mathcal{C}_B}$ are disjoint.
\end{enumerate}
Then there exist a constant $C$ and for every $N \in \mathbb{N}$, an $N$-step discrimination strategy $D$ and a two-valued POVM $\Pi$ such that 
\begin{gather*}
    P_e(D, \Pi) \leq \frac{C}{N^2}\\
    \mathfrak{T}_{T_A}(D) = 0 \quad \text{ and } \quad \mathfrak{T}_{T_B}(D) \leq \frac{C}{N} 
\end{gather*}
for all $T_A \in \mathcal{C}_A$ and all $T_B \in \mathcal{C}_B$, where the discrimination error probability is w.r.t. the sets $\mathcal{C}_A$ and $\mathcal{C}_B$. Hence, the sets $\mathcal{C}_A$ and $\mathcal{C}_B$ can be discriminated in a transmission-free manner.
\end{cor}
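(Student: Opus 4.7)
The plan is to reduce the corollary to Theorem \ref{MainDiscriminationTheoremFromIdentity} via the reduction superchannel of Theorem \ref{ReductionSuperchannelTheorem}. First I would pick any $T_0 \in \mathcal{C}_A$; by hypothesis~6, $T_0\vert_{\trcl(\mathcal{V})}$ is then the unique element of $\mathcal{C}_A\vert_{\trcl(\mathcal{V})}$. Applying Theorem \ref{ReductionSuperchannelTheorem} to $T_0$, the subspace $\mathcal{V}$, and the vacuum $v$, I obtain a qubit $\mathcal{H}_Q$ with orthonormal basis $\{q_0,q_1\}$ and a superchannel $R : \blt(\trcl(\mathcal{H})) \to \blt(\trcl(\mathcal{H}_Q))$. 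Property~1 of that theorem, combined with hypothesis~6, gives $R(T_A) = \idop$ for every $T_A \in \mathcal{C}_A$. Property~2, combined with hypothesis~7, gives that $\ket{q_0}\bra{q_0}$ is the unique fixed state of $R(T_B)$ for every $T_B \in \mathcal{C}_B$; in particular $\idop \notin R(\mathcal{C}_B)$.

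Next I would verify the hypotheses of Theorem \ref{MainDiscriminationTheoremFromIdentity} for the sets $\{\idop\}$ and $R(\mathcal{C}_B)$ on $\mathcal{H}_Q$ with vacuum $q_0$. Hypotheses~3 and~4 of the corollary both entail $\mathfrak{t}_T(\ket{v}\bra{v}) = 0$, so property~3 of Theorem \ref{ReductionSuperchannelTheorem} applies to every $T \in \mathcal{C}_A \cup \mathcal{C}_B$. For $T_A \in \mathcal{C}_A$, hypothesis~3 yields $\mathfrak{t}_{T_A}(P^\bot/(d-1)) = 0$ (and the case $d = 1$ is trivial), hence $\mathfrak{t}_{R(T_A)} = 0$. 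For $T_B \in \mathcal{C}_B$, property~3 gives $\mathfrak{t}_{R(T_B)}(\ket{q_0}\bra{q_0}) = 0$ together with the estimate $\norm{\mathfrak{t}_{R(T_B)}} \leq \tfrac{1}{2}\norm{\mathfrak{t}_{T_B}\vert_{\trcl(\mathcal{V})}}$, so $\sup_{T \in \mathcal{C}_B}\norm{\mathfrak{t}_{R(T)}} < \infty$ by hypothesis~5. Since $\mathcal{C}_B$ is closed and bounded in the finite-dimensional space $\blt(\trcl(\mathcal{H}))$, it is compact, and continuity of $R$ implies that $R(\mathcal{C}_B)$ is closed. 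Theorem \ref{MainDiscriminationTheoremFromIdentity} therefore produces an $N$-step discrimination strategy $D'$ on $\mathcal{H}_Q$ and a two-valued POVM $\Pi$ with $P_e(D',\Pi) \leq C_0/N^2$ and $\mathfrak{T}_{R(T_B)}(D') \leq C_0 \norm{\mathfrak{t}_{R(T_B)}}/N$ for all $T_B \in \mathcal{C}_B$.

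To finish, I would pull $D'$ back through the superchannel, defining $D := (D')^R$ on $\mathcal{H}$ via the construction recalled just before Theorem \ref{ReductionSuperchannelTheorem}. A short induction shows that the intermediate states satisfy $\tilde\rho_n^T = (R_0 \otimes \idop_Z)(\rho_n^{R(T)})$ for $0 \leq n \leq N-1$ and $\tilde\rho_N^T = \rho_N^{R(T)}$. The matching final states give $P_e(D,\Pi) = P_e(D',\Pi)$, while the defining transformation $\mathfrak{t}_{R(T)}(\cdot) = \mathfrak{t}_T(\ptr{A}{R_0(\cdot)})$ gives $\mathfrak{T}_T(D) = \mathfrak{T}_{R(T)}(D')$ term by term, yielding $\mathfrak{T}_{T_A}(D) = 0$ and $\mathfrak{T}_{T_B}(D) \leq \tfrac{1}{2}C_0\norm{\mathfrak{t}_{T_B}\vert_{\trcl(\mathcal{V})}}/N$. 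Choosing $C := \max\{C_0,\tfrac{1}{2}C_0 \sup_{T\in\mathcal{C}_B}\norm{\mathfrak{t}_T\vert_{\trcl(\mathcal{V})}}\}$ then delivers every claim. I expect no real obstacle: all the non-trivial work has been carried out in Theorems \ref{MainDiscriminationTheoremFromIdentity} and \ref{ReductionSuperchannelTheorem}, and the corollary amounts to their clean composition, the only care being the bookkeeping that verifies invariance of error probability and total transmission under the pullback $D' \mapsto (D')^R$.
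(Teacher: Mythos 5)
Your proposal is correct and follows essentially the same route as the paper's own proof: fix a representative of $\mathcal{C}_A$, apply Theorem \ref{ReductionSuperchannelTheorem} to build $R$, check via properties (1)--(3) together with hypotheses 3--7 that $R(\mathcal{C}_A)=\{\idop\}$ and that $R(\mathcal{C}_B)$ is a closed set satisfying the hypotheses of Theorem \ref{MainDiscriminationTheoremFromIdentity} with vacuum $q_0$, then pull the resulting strategy back through the superchannel and bound the constant by $\max\bigl\{C_0,\tfrac{1}{2}C_0\sup_{T\in\mathcal{C}_B}\norm{\mathfrak{t}_T\vert_{\trcl(\mathcal{V})}}\bigr\}$. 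Your explicit bookkeeping of the intermediate states under the pullback $D'\mapsto (D')^R$ only spells out what the paper asserts directly, so there is no substantive difference.
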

\begin{proof}
We combine Theorem \ref{MainDiscriminationTheoremFromIdentity} and Theorem \ref{ReductionSuperchannelTheorem}. Fix some $T_A \in \mathcal{C}_A$. From Theorem \ref{ReductionSuperchannelTheorem} (with $T = T_A$), we obtain the map $R$, with the properties (1), (2) and (3). We want to apply Theorem \ref{MainDiscriminationTheoremFromIdentity} with $\mathcal{C} := R(\mathcal{C}_B)$. Since $\mathcal{C}_B$ is (as a closed subset of the compact set of channels) compact and $R$ is continuous, $\mathcal{C}$ is compact and hence closed. Furthermore, since by assumption 7, the sets $\mathcal{C}_A\vert_{\trcl(\mathcal{V})}$ and $\mathcal{C}_B\vert_{\trcl(\mathcal{V})}$ are disjoint, we have $T^\prime\vert_{\trcl(\mathcal{V)}} \neq T_A\vert_{\trcl(\mathcal{V})}$, for all $T^\prime \in \mathcal{C}_B$. Hence, property (2) implies that for all $T \in \mathcal{C}$, the state $\ket{q_0}\bra{q_0}$ is the only state that is a fixed point of $T$. In particular, $\idop \notin \mathcal{C}$. Furthermore, assumption 6 implies that $T^\prime\vert_{\trcl(\mathcal{V)}} = T_A\vert_{\trcl(\mathcal{V})}$, for all $T^\prime \in \mathcal{C}_A$. Hence, by property (1), $R(\mathcal{C_A}) = \{\idop\}$. Thus, Theorem \ref{MainDiscriminationTheoremFromIdentity} yields a discrimination strategy $\tilde{D}$ and a two-valued POVM such that $P_e(\tilde{D}, \Pi) \leq \tilde{C}N^{-2}$, for some constant $\tilde{C}$. By construction, $P_e(\tilde{D}, \Pi)$ is the discrimination probability w.r.t. the sets $\mathcal{C}$ and $\{\idop\}$, but since we have for $T^\prime \in \mathcal{C}_A \cup \mathcal{C}_B$ that $R(T^\prime) \in \{\idop\}$ iff $T^\prime \in \mathcal{C}_A$ and $R(T^\prime) \in \mathcal{C}$ iff $R(T^\prime) \in \mathcal{C}_B$, it follows that $P_e(\tilde{D}^R, \Pi) = P_e(\tilde{D}, \Pi)$, where $\tilde{D}^R$ is the transformed discrimination strategy, as defined in the main text. For $T^\prime \in \mathcal{C}_A$, condition 3 and property (3) imply that the transformed transmission functional $\mathfrak{t}_{R(T^\prime)} = 0$. Thus $\mathfrak{T}_{T^\prime}(\tilde{D}^R) = 0$. Furthermore, for $T^\prime \in \mathcal{C}_B$ with transmission functional $\mathfrak{t}_{T^\prime}$, property (3) implies that the norm of the transformed transmission functional satisfies $\norm{\mathfrak{t}_{R(T^\prime)}} = \frac{1}{2} \mathfrak{t}_{T^\prime}\left(\frac{P^\bot}{d-1}\right) \leq \frac{1}{2}\norm{\mathfrak{t}_{T^\prime}\vert_{\trcl(\mathcal{V})}}$. Since we have $\mathfrak{T}_{T^\prime}(\tilde{D}^R) = \mathfrak{T}_{R(T^\prime)}(\tilde{D})$, Theorem \ref{MainDiscriminationTheoremFromIdentity} implies that $\mathfrak{T}_{T^\prime}(D^R) \leq  \frac{\tilde{C}\norm{\mathfrak{t}_{T^\prime}\vert_{\trcl(\mathcal{V})}}}{2N}$. We finish the proof by identifying $D$ with $\tilde{D}^R$ and defining 
\begin{align*}
    C := \max\left[\tilde{C}, \frac{\tilde{C}}{2}\sup_{T^\prime \in \mathcal{C}_B}\norm{\mathfrak{t}_{T^\prime}\vert_{\trcl(\mathcal{V})}}  \right] < \infty.
\end{align*}
\end{proof}

As a direct consequence of the previous result, we get the validity of Theorem \ref{DiscriminationStroategyOutlineTheorem}. 

\begin{proof} \textit{(Theorem \ref{DiscriminationStroategyOutlineTheorem})}
We interpret every channel $T$ with \sq{interaction} functional $\mathfrak{i}_{T}$ as channel with transmission functional $\mathfrak{i}_{T}$. By Lemma \ref{ComparisionIFTF}, it suffices to check conditions 1-7 of Corollary \ref{transmissionFreeDiscriminationCor}. 1, 2, 6 and 7 follow by assumption and 3, 4 and 5 follow directly from Lemma \ref{MaximalVacuumSubspaceLemma} (\ref{prop6}).
\end{proof}

The remainder of this section is devoted to the proof of Theorem \ref{ReductionSuperchannelTheorem}. 
We will show that the transformation depicted in Figure \ref{ReductionSuperchannelFigure} has the desired properties. We will define this superchannel precisely in the proof of Theorem \ref{ReductionSuperchannelTheorem}.  An important part is the so called twirling operation, which we study here for a special group. 

\begin{figure}[htbp]
\centering
\includegraphics[width=.95\textwidth]{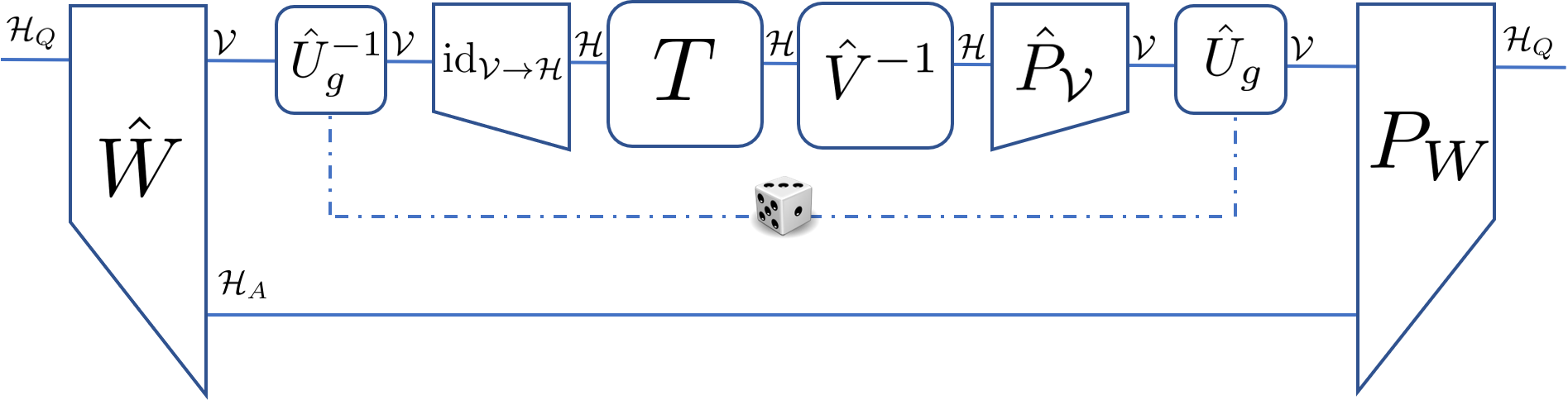}
 \caption{The reduction superchannel for $\mathrm{dim}(\mathcal{V}) > 1$.} \label{ReductionSuperchannelFigure}
\end{figure}

\begin{lem}[Twirling] \label{TwirlingLemma}
For $2 \leq d := \mathrm{dim}(\mathcal{H}) < \infty$, let $v \in \mathcal{H}$ be a unit vector and set $V_v := \mathrm{span}\{v\}$. We define the group
\begin{align} \label{TwirlingGroup}
    G := \set{g =  \idmat_{V_v} \oplus U_g \in \blt(V_v\oplus V_v^\bot)}{U_g \in \blt(V_v^\bot) \text{ is unitary}}
\end{align}
and the twirling superchannel $S : \blt(\trcl(\mathcal{H})) \rightarrow \blt(\trcl(\mathcal{H}))$ by
\begin{align*}
S(T) = \int \hat{U}_g \circ T \circ \hat{U}_g^{-1} \,\mathrm{d}\mu_G(g),
\end{align*}
where $\mu_G$ is the Haar measure on $G$ and $\hat{U}_g : \trcl(\mathcal{H}) \rightarrow \trcl(\mathcal{H})$ is the quantum channel obtained by conjugation with the group element $g \in G$, i.e. $\hat{U}_g(\cdot) = g \cdot g^{-1}$. Then the following statements hold.
\begin{itemize}
\item Let $\psi \in V_v^\bot$ be any unit vector and $\phi := \frac{1}{\sqrt{2}}\left( v + \psi \right)$. If $T : \trcl(\mathcal{H}) \rightarrow \trcl(\mathcal{H})$ is a channel and $\ket{\phi}\bra{\phi}$ is a fixed point of $S(T)$ then $T = \idop$. Conversely, $S(\idop) = \idop$ and thus $\ket{\phi}\bra{\phi}$ is a fixed point of $S(\idop)$. 
\item For a functional $\mathfrak{t} : \trcl(\mathcal{H}) \rightarrow \mathbb{C}$, we have 
\begin{align} \label{FunctionalTwirlingExpression}
    \int \mathfrak{t} \circ \hat{U}_g^{-1} \,\mathrm{d}\mu_G(g) = \mathfrak{t}\left(\frac{P^\bot}{d-1}\right)\, \tr{P^\bot \,\cdot\, } + \mathfrak{t}(\ket{v}\bra{v}) \,\tr{\ket{v}\bra{v}\,\cdot\,}.
\end{align}
\end{itemize}
\end{lem}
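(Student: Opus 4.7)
\emph{Bullet 2.} Since $\mathfrak{t}$ is a continuous linear functional and $G$ is compact, the task reduces to computing the superoperator $\rho \mapsto \int_G \hat{U}_g^{-1}(\rho)\, d\mu_G(g)$ explicitly and then applying $\mathfrak{t}$. I would decompose $\rho$ into its four blocks with respect to $\mathcal{H} = V_v \oplus V_v^\bot$. The $(V_v, V_v)$-block $\braket{v}{\rho v}\ket{v}\bra{v}$ is already $G$-invariant. The two off-diagonal blocks transform by the standard and dual representations of $\mathcal{U}(V_v^\bot)$, both irreducible and non-trivial, and hence average to zero by Schur's lemma. The $(V_v^\bot, V_v^\bot)$-block $A = P^\bot \rho P^\bot$ transforms by conjugation $A \mapsto UAU^{-1}$; its Haar average is the projection onto the trivial subrepresentation of $\blt(V_v^\bot)$, namely $\tr{A}\,P^\bot/(d-1) = \tr{P^\bot\rho}\,P^\bot/(d-1)$. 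Summing the three contributions and applying $\mathfrak{t}$ yields the claimed formula.

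\emph{Bullet 1.} The forward direction is immediate, since $\hat{U}_g \circ \idop \circ \hat{U}_g^{-1} = \idop$ gives $S(\idop) = \idop$. For the converse, I would proceed in two steps. In step A, I show that $S(T) = \idop$. Haar invariance under $g \mapsto hg$ makes $S(T)$ $G$-covariant, so the entire $G$-orbit $\{\ket{\phi_\chi}\bra{\phi_\chi} : \chi \in V_v^\bot,\ \|\chi\| = 1\}$ is fixed by $S(T)$. Decomposing $\blt(\mathcal{H})$ into the isotypic components of the conjugation action of $G$ --- the two-dimensional trivial one (spanned by $\ket{v}\bra{v}$ and $P^\bot$), the standard $\pi$ on $\{\ket{\chi}\bra{v}\}$, its dual $\bar{\pi}$ on $\{\ket{v}\bra{\chi}\}$, and (for $d \geq 3$) the adjoint on the traceless part of $\blt(V_v^\bot)$ --- Schur's lemma forces $S(T)$ to act as a scalar on each non-trivial irreducible type and as a $2 \times 2$ matrix on the trivial isotypic. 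The isotypic decomposition of $\ket{\phi}\bra{\phi}$ pins every non-trivial scalar to $1$, and trace preservation together with the fixed-point constraint cuts the $2 \times 2$ block down to a one-parameter family indexed by $b \in \mathbb{R}$, with $S(T)(\ket{v}\bra{v}) = (1 - b(d-1))\ket{v}\bra{v} + b\,P^\bot$. Positivity of this image forces $b \geq 0$. The opposite inequality comes from the principal $2 \times 2$ submatrix of the Choi matrix of $S(T)$ indexed by $\ket{vv}$ and $\ket{\chi\chi}$ for any unit $\chi \in V_v^\bot$; this submatrix works out to $\bigl(\begin{smallmatrix} 1 - b(d-1) & 1 \\ 1 & 1 - b \end{smallmatrix}\bigr)$, whose determinant $b((d-1)b - d)$ is non-negative only when $b \leq 0$. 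Hence $b = 0$ and $S(T) = \idop$. In step B, I invoke the extremality of $\idop$ in the convex set of quantum channels (it is of Kraus rank one). The identity $\idop = \int_G \hat{U}_g T \hat{U}_g^{-1}\, d\mu_G(g)$ expresses $\idop$ as a Haar barycenter of channels depending continuously on $g$; extremality forces the integrand to equal $\idop$ for $\mu_G$-almost every $g$, and continuity together with the full support of Haar measure upgrades this to every $g \in G$. Taking $g = e$ gives $T = \idop$.

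The main obstacle will be step A, and specifically ruling out $b > 0$ in a way that is uniform in $d$. The quicker route --- positivity of $S(T)(\ket{\chi}\bra{\chi})$ on an eigendirection $\chi' \perp v, \chi$ in $V_v^\bot$ --- only yields $b = 0$ when $d \geq 3$. The qubit case $d = 2$ requires the Choi-submatrix argument outlined above, which is valid for all $d \geq 2$ and therefore settles the question uniformly.
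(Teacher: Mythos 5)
Your proposal is correct and follows essentially the same route as the paper: both rest on the Schur/commutant structure of the conjugation action of $G$ (the paper's seven-operator spanning set is exactly your isotypic decomposition), both use the fixed point $\ket{\phi}\bra{\phi}$ together with trace preservation to reduce $S(T)$ to a one-parameter family, both invoke complete positivity through a two-dimensional Choi-type test (your principal submatrix versus the paper's evaluation of $\braket{\Omega^-}{(\idop\otimes S(T))(\ket{\Omega^+}\bra{\Omega^+})\,\Omega^-}\geq 0$) to eliminate the remaining parameter, and both finish with extremality of $\idop$ in the convex set of channels; your computation for the functional is just the state-side dual of the paper's averaging of the observable defining $\mathfrak{t}$. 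One small repair: the determinant $b\left((d-1)b-d\right)$ is non-negative not only for $b\leq 0$ but also for $b\geq d/(d-1)$, so to conclude $b=0$ you must additionally invoke a constraint you already have in hand, e.g. $1-(d-1)b\geq 0$ from positivity of $S(T)(\ket{v}\bra{v})$ or $1-b\geq 0$ from the diagonal of the same PSD submatrix, which excludes the second branch.
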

\begin{rem}
The integration over the Haar measure can be replaced by a unitary t-design \cite{PhysRevA.80.012304}. We can thus implement the superchannel $S$ without using an ancillary quantum system.
\end{rem}
\begin{proof}
We start by showing that the range of $S$ is spanned by the following seven operators.
\begin{equation} \label{TwirlingSpan}
\begin{gathered} 
\begin{aligned}
   &\tr{\ket{v}\bra{v} \,\cdot\, } \ket{v}\bra{v} \qquad&& \tr{P^\bot \,\cdot\, }\ket{v}\bra{v} \qquad&& \tr{\ket{v}\bra{v} \,\cdot\,} \frac{P^\bot}{d-1} \\ 
   &\tr{P^\bot \,\cdot\, } \frac{P^\bot}{d-1} && P^\bot \cdot \ket{v}\bra{v} && \ket{v}\bra{v} \cdot P^\bot
   \end{aligned}\\
   P^\bot \cdot P^\bot - \tr{P^\bot \,\cdot \,}\frac{P^\bot}{d-1}.
\end{gathered}
\end{equation}
Using the definition of the Haar measure, we obtain that the range of $S$ consists precisely those operators $T : \trcl(\mathcal{H}) \rightarrow \trcl(\mathcal{H})$ that commute with $\hat{U}_g$ for all $g \in G$. We calculate the commutant on the level of Choi matrices. To this end, we identify $\trcl(\mathcal{H})$ with $\mathcal{H} \otimes \mathcal{H}$ via the Choi isomorphism ($\ket{h_i}\bra{h_j} \leftrightarrow h_i \otimes h_j$), where $h_0, h_1, \dots h_{d-1}$ is an orthonormal basis of $\mathcal{H}$ such that $h_0 = v$. The operator corresponding to $\hat{U}_g$ is $g \otimes \overline{g}$, where the complex conjugation is w.r.t the aforementioned basis. We can rewrite this operator as
\begin{align*} 
    g\otimes\overline{g} &= (\idmat_{V_v} \oplus U_g) \otimes  (\idmat_{V_v} \oplus \overline{U}_g) \\&= (\idmat_{V_v} \otimes \idmat_{V_v}) \oplus (\idmat_{V_v} \otimes \overline{U}_g) \oplus (U_g \otimes \idmat_{V_v}) \oplus (U_g\otimes \overline{U}_g)
\end{align*}
The maps $g \mapsto \idmat_{V_v} \otimes \idmat_{V_v}$, $g \mapsto \idmat_{V_v} \oplus \overline{U}_g$ and $g \mapsto U_g \otimes \idmat_{V_v}$ are inequivalent irreducible representations of $G$. If $d = 2$, the representation $g \mapsto (U_g\otimes \overline{U}_g)$ is the trivial 1-dimensional representation. A simple consequence of Schur's lemma is that the commutant then is $2^2 + 1^2 + 1^2 = 6$ dimensional (see \cite{sternberg1995group}, p. 60 for the dimension formula). For $d=2$, also the span of the operators in \eqref{TwirlingSpan} is $6$-dimensional ($P^\bot \cdot P^\bot - \tr{P^\bot \cdot}\frac{P^\bot}{d-1} = 0$). So in this case, we have proven the claim. If $d \geq 3$, then the representation  $g \mapsto (U_g\otimes \overline{U}_g)$ is the direct sum of the trivial 1-dimensional representation and an irreducible $((d-1)^2 - 1)$-dimensional representation (see \cite{VolWernerEntanglementMeasures}). Hence, the dimension of the commutant is $2^2+1^2+1^2+1^2 = 7$. Also the dimension of the span of the operators in \eqref{TwirlingSpan} is $7$-dimensional. 
This proves that the range of $S$ is indeed given by the span of the operators in \eqref{TwirlingSpan}.

For our first claim, we clearly have $S(\idop) = \idop$. 
Conversely, let $T$ be a channel  such that $\ket{\phi}\bra{\phi}$ is a fixed point of $S(T)$. Let $\alpha_1, \alpha_2, \dots ,\alpha_7$ be the coefficients of an expansion of $S(T)$ in terms of the operators in \eqref{TwirlingSpan}. Note that for $d = 2$, this expansion is not unique but can be made that way by demanding $\alpha_7 := 1$. As $\ket{\phi}\bra{\phi}$ is a fixed point of $S(T)$, we have
\begin{align*}
\begin{split}
    \ket{\phi}\bra{\phi} &= \frac{1}{2}\left( \ket{v}\bra{v} + \ket{v}\bra{\psi} + \ket{\psi}\bra{v} + \ket{\psi}\bra{\psi} \right)\\
    &= S(T)(\ket{\phi}\bra{\phi})\\
    &\begin{aligned} \;=\frac{1}{2} \bigg( (\alpha_1 + \alpha_2)\ket{v}\bra{v} + (\alpha_3 + \alpha_4 - \alpha_7)\frac{P^\bot}{d-1} + \alpha_5 \ket{\psi}\bra{v} &+ \alpha_6 \ket{v}\bra{\psi} \\&+ \alpha_7 \ket{\psi}\bra{\psi} \bigg).
    \end{aligned}
\end{split}
\end{align*}
By comparing the second and the last expression, it follows that $\alpha_1 + \alpha_2 = 1$ and $\alpha_5 = \alpha_6 = 1$. If $d = 2$, then $P^\bot = \ket{\psi}\bra{\psi}$ and $\alpha_3 + \alpha_4 = 1$. Otherwise, we have $\alpha_7 = 1$ and $\alpha_3 + \alpha_4 - \alpha_7 = 0$, hence also $\alpha_3 + \alpha_4 = 1$.   Furthermore, 
\begin{align} \label{TraceEquation}
    S(T)(\ket{v}\bra{v}) &= \alpha_1 \ket{v}\bra{v} + \alpha_3 \frac{P^\bot}{d-1}\\
    S(T)(\ket{\psi}\bra{\psi}) &= \alpha_2 \ket{v}\bra{v} + (\alpha_4 - \alpha_7) \frac{P^\bot}{d-1} + \alpha_7 \ket{\psi}\bra{\psi}. \nonumber
\end{align}
As $S(T)$ is trace-preserving, we obtain $\alpha_1 + \alpha_3 = 1$ and $\alpha_3 + \alpha_4 = 1$. Our equations imply that $\alpha_2 = 1-\alpha_1, \alpha_3 = 1-\alpha_1$ and $\alpha_4 = \alpha_1$. Positivity of $S(T)$ in \eqref{TraceEquation} implies that $\alpha_1 \geq 0$ and $\alpha_3 \geq 0$. Thus $0\leq \alpha_1 \leq 1$. We want to show that complete positivity of $S(T)$ even implies $\alpha_1 = 1$. To this end, we define $\mathcal{H}_A := \mathrm{span}\{v, \psi\}$ and $\Omega^+, \Omega^- \in \mathcal{H}_A \otimes \mathcal{H}$ by
\begin{align*}
    \Omega^+ := v\otimes v + \psi \otimes \psi \qquad \Omega^- := v\otimes v - \psi \otimes \psi.
\end{align*}
As $S(T)$ is completely positive, we have
\begin{align*}
    0 &\leq \braket{\Omega^-}{(\idop_A\otimes S(T))(\ket{\Omega^+}\bra{\Omega^+})\; \Omega^-}\\
    &= \braket{\Omega^-}{\left(\ket{v}\bra{v}\otimes \left(\alpha_1 \ket{v}\bra{v} + (1-\alpha_1)\frac{P^\bot}{d-1} \right)\right)\;\Omega^-} 
    \\&+ \braket{\Omega^-}{\left(\ket{\psi}\bra{v}\otimes \ket{\psi}\bra{v}\right)\;\Omega^-} + \braket{\Omega^-}{\left(\ket{v}\bra{\psi}\otimes \ket{v}\bra{\psi}\right)\;\Omega^-}
    \\ &+\braket{\Omega^-}{\left(\ket{\psi}\bra{\psi}\otimes \left((1-\alpha_1) \ket{v}\bra{v} + \alpha_1\frac{P^\bot}{d-1} + \ket{\psi}\bra{\psi} - \frac{P^\bot}{d-1} \right)\right)\;\Omega^-} 
    \\&= \alpha_1 - 2 + \frac{\alpha_1 - 1}{d-1} + 1
    \\&= d \frac{\alpha_1 - 1}{d-1}.
\end{align*}
Thus, $\alpha_1 \geq 1$. This further implies that $\alpha_1 = 1, \alpha_2 = 0, \alpha_3 = 0$ and $\alpha_4 = 1$. Together with the earlier result that $\alpha_5= \alpha_6 = \alpha_7 = 1$, we obtain
\begin{align*}
    S(T) &= \tr{\ket{v}\bra{v} \,\cdot\, }\ket{v}\bra{v} + \tr{P^\bot \,\cdot\, } \frac{P^\bot}{d-1}+ P^\bot \cdot \ket{v}\bra{v}+\ket{v}\bra{v} \cdot P^\bot\\&+P^\bot \cdot P^\bot - \tr{P^\bot \,\cdot \,}\frac{P^\bot}{d-1}
    \\&= \idop. 
\end{align*}
Thus we have shown that if $\ket{\phi}\bra{\phi}$ is a fixed point of $S(T)$, then $S(T) = \idop$. To see that this also implies that $T = \idop$, we note that $S(T)$ is a convex combination of the channels $\hat{U}_g \circ T \circ \hat{U}^{-1}_g$. But as the identity is an extremal element of the convex set of quantum channels, $\hat{U}_g \circ T \circ \hat{U}^{-1}_g$ must be proportional to the identity $\mu_G$-almost everywhere. In particular, $\hat{U}_g \circ T \circ \hat{U}^{-1}_g = \idop$, for some $g \in G$. Thus $T = \idop$. This proves the first claim.

It remains to prove the second claim. For $\mathfrak{t}(\cdot) = \tr{L\, \cdot\,}$ and $\rho \in \trcl(\mathcal{H})$, we have
\begin{align*}
    S^\prime(\mathfrak{t})(\rho) := \int \mathfrak{t} \circ \hat{U}_g^{-1}(\rho) \,\mathrm{d}\mu_G(g) = \tr{\int g L g^{-1} \,\mathrm{d}\mu_G(g) \; \rho}.
\end{align*}
By the definition of the Haar measure, the integral must commute with all $g \in G$. The representation $g \mapsto \idmat_{V_v} \oplus U_g$ is the sum of two inequivalent irreducible representations of $G$. Thus the commutant is $2$-dimensional. It is easy to check that $P^\bot$ and $\ket{v}\bra{v}$ are in the commutant. Thus 
\begin{align*} 
    \int g L g^{-1} \,\mathrm{d}\mu_G(g) = \lambda_1 P^\bot + \lambda_2 \ket{v}\bra{v}, 
\end{align*}
for some $\lambda_1, \lambda_2 \in \mathbb{C}$. Therefore, we can write 
\begin{align*} \label{CombinationFunctional}
    S^\prime(\mathfrak{t})(\rho) = \lambda_1 \tr{P^\bot \rho} + \lambda_2 \tr{\ket{v}\bra{v} \rho}.
\end{align*}
Substituting $P^\bot$ and $\ket{v}\bra{v}$ for $\rho$, yields $\lambda_1 = (d-1)^{-1} \,S^\prime(\mathfrak{t})(P^\bot)$ and $\lambda_2 = S^\prime(\mathfrak{t})(\ket{v}\bra{v})$. As $P^\bot$ and $\ket{v}\bra{v}$ commute with all $g \in G$, we have
\begin{align*}
    S^\prime(\mathfrak{t})(P^\bot) &= \mathfrak{t} \left( \int g^{-1} P^\bot g \,\mathrm{d}\mu_G(g) \right) = \mathfrak{t}(P^\bot)\\
    S^\prime(\mathfrak{t})(\ket{v}\bra{v}) &= \mathfrak{t} \left( \int g^{-1} \ket{v}\bra{v} g \,\mathrm{d}\mu_G(g) \right) = \mathfrak{t}(\ket{v}\bra{v})
\end{align*}
We plug this into \eqref{CombinationFunctional}, and obtain the desired result, equation \eqref{FunctionalTwirlingExpression}. Thus we have proven our last claim. 
\end{proof}

\noindent We are now ready to prove Theorem \ref{ReductionSuperchannelTheorem}. 

\begin{proof}
As already mentioned, the proof consists of an explicit construction of the superchannel $R$. The construction is depicted in Fig. \ref{ReductionSuperchannelFigure}. We start by defining the components of this circuit from left to right.
For the definition of the first component, we define $\mathcal{H}_A$ to be a two-dimensional Hilbert space with orthonormal basis $\{a_0, a_1 \}$. The channel $\hat{W} : \trcl(\mathcal{H}_Q) \rightarrow \trcl(\mathcal{V} \otimes \mathcal{H}_A)$ is defined by $\hat{W}(\cdot) = W\cdot W^\dagger$, with isometry $W : \mathcal{H}_Q \rightarrow \mathcal{V} \otimes \mathcal{H}_A$ defined by
\begin{align*}
    W q_0 &= v\otimes a_0\\
    W q_1 &= \bigg\{\begin{array}{lr}
        \frac{1}{\sqrt{2}}\left( v + \psi \right) \otimes a_1,  & \text{if } \mathrm{dim}(\mathcal{V}) > 1\\
        v\otimes a_1, & \text{if } \mathrm{dim}(\mathcal{V}) = 1
        \end{array},
\end{align*}
where $\psi \in \mathcal{V}$ is any unit vector that is orthogonal to $v$. This channel is designed in order to exhibit the second conclusion of Lemma \ref{TwirlingLemma}. 

The second component is the twirling operation $S : \blt(\trcl(\mathcal{V})) \rightarrow \blt(\trcl(\mathcal{V}))$, which is a superchannel on its own and which we only define for $\mathrm{dim}(\mathcal{V}) > 1$. This operation, is depicted by the two unitary channels $\hat{U}_g$ and $\hat{U}_g^{-1}$ connected by a dashed line and acts as 
\begin{align}
S(\cdot) := \int \hat{U}_g \circ (\cdot) \circ \hat{U}_g^{-1} \,\mathrm{d}\mu_G(g), 
\end{align}
where $\mu_G$ is the Haar measure on the compact group $G$, defined by (cf. Lemma \ref{TwirlingLemma})
\begin{align*}
    G := \set{g =  \idmat_{V_v} \oplus U_g \in \blt(V_v\oplus V_v^\bot)}{U_g \in \blt(V_v^\bot) \text{ is unitary}},
\end{align*}
with $V_v := \mathrm{span}\{v\}$. The channels $\hat{U}_g, \hat{U}^{-1}_g : \trcl(\mathcal{V}) \rightarrow \trcl(\mathcal{V})$ are defined by 
\begin{align*}
    \hat{U}_g(\cdot) &:= (\idmat_{V_v} \oplus U_g) (\cdot) (\idmat_{V_v} \oplus U_g^\dagger) &&\text{and}&& \hat{U}_g^{-1}(\cdot) := (\idmat_{V_v} \oplus U_g^\dagger) (\cdot) (\idmat_{V_v} \oplus U_g).
\end{align*}
The channel $\idop_{\mathcal{V} \rightarrow \mathcal{H}} : \trcl(\mathcal{V}) \rightarrow \trcl(\mathcal{H}), \rho \mapsto \rho$ embeds $\trcl(\mathcal{V})$ into $\trcl(\mathcal{H})$.

To define the channel $\hat{V}^{-1} : \trcl(\mathcal{H}) \rightarrow \trcl(\mathcal{H})$, we use that by assumption, $T$ is isometric on $\mathcal{V}$. This means that there exists an isometry $\tilde{V} : \mathcal{V} \rightarrow \mathcal{H}$ such that $T\vert_{\trcl(\mathcal{V})}(\cdot) = \tilde{V} \cdot \tilde{V}^\dagger$. This isometry can be extended (in a non-unique way) to a unitary and therefore invertible operation $V : \mathcal{H} \rightarrow \mathcal{H}$. We then define 
\begin{align*}
    \hat{V}^{-1}(\cdot) := V^\dagger \cdot V.
\end{align*}

We define the channel $\hat{P}_\mathcal{V} : \trcl(\mathcal{H}) \rightarrow \trcl(\mathcal{V})$ by 
\begin{align*}
    \hat{P}_\mathcal{V}(\cdot) := P_\mathcal{V} \cdot P_\mathcal{V}^\dagger + \tr{(\idmat-P_\mathcal{V}^\dagger P_\mathcal{V}) (\cdot)} \ket{v}\bra{v},
\end{align*}
where $P_\mathcal{V} : \mathcal{H} \rightarrow \mathcal{V}$ is the orthogonal projection onto $\mathcal{V}$.
To finish the channel definitions, we define the channel $P_W : \trcl(\mathcal{V}\otimes\mathcal{H}_A) \rightarrow \trcl(\mathcal{H}_Q)$ by
\begin{align*} 
    P_W(\cdot) := W^\dagger \cdot W + \tr{(\idmat - W W^\dagger) (\cdot)} \ket{q_0}\bra{q_0}.
\end{align*}
We can now define the superchannel $R$. If $\mathrm{dim}(\mathcal{V}) > 1$, we define 
\begin{align}
R(\cdot) := P_W\circ \left(\left[\int \hat{U}_g \circ \hat{P}_\mathcal{V} \circ \hat{V}^{-1} \circ (\cdot) \circ \idop_{\mathcal{V}\rightarrow \mathcal{H}} \circ \hat{U}_g^{-1} \,\mathrm{d}\mu_G(g)\right] \otimes \idop_A \right) \circ \hat{W},  
\end{align}
and if $\mathrm{dim}(\mathcal{V}) = 1$, we define 
\begin{align} \label{RForDim1}
  R(\cdot) := P_W \circ \hat{V}^{-1} \circ (\cdot) \circ \idop_{\mathcal{V}\rightarrow \mathcal{H}} \circ \hat{W} 
\end{align}

With the definition in palace, it only remains to show that the superchannel $R$ has the claimed properties.  
To prove the first claim, let $T^\prime \in \blt(\trcl(\mathcal{H}))$ such that $T\vert_{\trcl(\mathcal{V})} = T^\prime\vert_{\trcl(\mathcal{V})}$. For $\mathrm{dim}(\mathcal{V}) > 1$, we use that by construction $\hat{V}^{-1}\circ T^\prime|_{\trcl(\mathcal{V})} = \idop_\mathcal{V}$ and that operators in $\trcl(\mathcal{V})$ are fixed points of $P_\mathcal{V}$. We get 
\begin{align*}
    R(T^\prime) &= P_W\circ \left(\left[\int \hat{U}_g \circ \idop_\mathcal{V} \circ \hat{U}_g^{-1} \,\mathrm{d}\mu_G(g)\right] \otimes \idop_A \right) \circ \hat{W},  \\
    &= P_W \circ \hat{W}\\
    &= \idop_Q.
\end{align*}
By means of a similar argument, it follows that the claim also holds for $\mathrm{dim}(\mathcal{V}) = 1$.
To prove the second claim, we start by showing that $\ket{q_0}\bra{q_0}$ is a fixed point of $R(T^\prime)$, for every channel $T^\prime$. For $\mathrm{dim}(\mathcal{V}) > 1$, we have
\begin{align*}
    R(T^\prime)(\ket{q_0}\bra{q_0}) &= P_W\circ \left(S(\hat{P}_\mathcal{V} \circ \hat{V}^{-1} \circ T^\prime \circ \idop_{\mathcal{V} \rightarrow \mathcal{H}}) \otimes \idop_A\right) \circ W(\ket{q_0}\bra{q_0}) \\
    &= P_W \left(S(\hat{P}_\mathcal{V} \circ \hat{V}^{-1} \circ T^\prime \circ \idop_{\mathcal{V} \rightarrow \mathcal{H}})(\ket{v}\bra{v}) \otimes \ket{a_0}\bra{a_0} \right)\\
    &= \ket{q_0}\bra{q_0},
\end{align*}
where the last line follows as $P_W$ maps every state of the form $\sigma \otimes \ket{a_0}\bra{a_0}$ to $\ket{q_0}\bra{q_0}$. 
An analogous argument yields that $\ket{q_0}\bra{q_0}$ is also a fixed point of $R(T^\prime)$, if $\mathrm{dim}(\mathcal{V}) = 1$. Conversely, assume that $T^\prime \in \blt(\trcl(\mathcal{H}))$ is a channel such that $T\vert_{\trcl(\mathcal{V})} \neq T^\prime\vert_{\trcl(\mathcal{V})}$ and $\rho \in \states(\mathcal{H}_Q)$ is a fixed point of $R(T^\prime)$. We will prove that $\rho = \ket{q_0}\bra{q_0}$. We do so by first showing that if $\rho \neq \ket{q_0}\bra{q_0}$, then $\ket{q_1}\bra{q_1}$ is also a fixed point of $R(T^\prime)$, which will lead to a contradiction. By part $1$ of the theorem, $\ket{q_0}\bra{q_0}$ is a fixed point of $R(T^\prime)$. Hence, Lemma \ref{InvariantSubsapceLemma} implies that $\mathrm{span}\{ \ket{q_0}\bra{q_1} \}$ and $\mathrm{span}\{ \ket{q_1}\bra{q_0} \}$ are invariant subspaces of $R(T^\prime)$. Thus,
\begin{align*}
    \braket{q_0}{R(T^\prime)(\ket{q_0}\bra{q_1})\,q_0} = \braket{q_0}{R(T^\prime)(\ket{q_1}\bra{q_0})\,q_0} = 0.
\end{align*}
We then have
\begin{align*}
    \braket{q_0}{\rho \,q_0} &= \braket{q_0}{R(T^\prime)(\rho)\, q_0}\\
    &= \sum_{i,j = 0}^1 \braket{q_i}{\rho\,q_j} \braket{q_0}{R(T^\prime)(\ket{q_i}\bra{q_j})\, q_0} \\
    &= \sum_{i = 0}^1 \braket{q_i}{\rho\,q_i} \braket{q_0}{R(T^\prime)(\ket{q_i}\bra{q_i})\, q_0} \\
    &= \braket{q_0}{\rho \,q_0} + \braket{q_1}{\rho\,q_1} \braket{q_0}{R(T^\prime)(\ket{q_1}\bra{q_1})\, q_0}.
\end{align*}
Hence,
\begin{align*}
    \braket{q_1}{\rho\,q_1} \braket{q_0}{R(T^\prime)(\ket{q_1}\bra{q_1})\, q_0} = 0.
\end{align*}
If $\braket{q_1}{\rho\, q_1} = 0$, then positivity of $\rho$ implies that $\rho = \ket{q_0}\bra{q_0}$, which contradicts the assumption that $\rho \neq \ket{q_0}\bra{q_0}$. It follows that
\begin{align*}
    \braket{q_0}{R(T^\prime)(\ket{q_1}\bra{q_1})\, q_0} = 0.
\end{align*}
Positivity of $R(T^\prime)(\rho)$ yields $R(T^\prime)(\ket{q_1}\bra{q_1}) = \ket{q_1}\bra{q_1}$, which shows that $\ket{q_1}\bra{q_1}$ is a fixed point of $R(T^\prime)$. We will now show that this leads to a contradiction. With the abbreviations $\tilde{S} := S(\hat{P}_\mathcal{V} \circ \hat{V}^{-1} \circ T^\prime \circ \idop_{\mathcal{V} \rightarrow \mathcal{H}})$ and $\phi := \frac{1}{\sqrt{2}}(v + \psi)$, we get
\begin{align*}
    \ket{q_1}\bra{q_1} &= R(T^\prime)(\ket{q_1}\bra{q_1}) \\
    &= P_W \left( \tilde{S}(\ket{\phi}\bra{\phi}) \otimes \ket{a_1}\bra{a_1} \right) \\
    &= \tr{\ket{\phi}\bra{\phi} \,\tilde{S}(\ket{\phi}\bra{\phi}) } \ket{q_1}\bra{q_1} + \tr{(\idmat - W W^\dagger)\, \tilde{S}(\ket{\phi}\bra{\phi})} \ket{q_0}\bra{q_0}.
\end{align*}
Comparing the last with the first line implies that $\tr{\ket{\phi}\bra{\phi} \,\tilde{S}(\ket{\phi}\bra{\phi}) } = 1$. We observe the the latter equation says that the Cauchy-Schwarz inequality (w.r.t. the Hilbert-Schmidt inner product) is satisfied with equality. Thus $\tilde{S}(\ket{\phi}\bra{\phi}) = \ket{\phi}\bra{\phi}$. Lemma \ref{TwirlingLemma} then implies 
\begin{align*}
    \hat{P}_\mathcal{V} \circ \hat{V}^{-1} \circ T^\prime \circ \idop_{\mathcal{V} \rightarrow \mathcal{H}} = \idop_\mathcal{V}.
\end{align*}
Note that $P_\mathcal{V}$ is the sum of the two completely positive trace non-increasing maps, $P_1(\cdot) := P_\mathcal{V} \cdot P_\mathcal{V}$ and $P_2(\cdot) := \tr{(\idmat-P_\mathcal{V}^\dagger P_\mathcal{V}) (\cdot)} \ket{v}\bra{v}$. Thus, with the appropriate normalization, the extremal point of the convex set of completely positive maps, $\idop_\mathcal{V}$ can be written as a convex combination of, $P_i \circ \hat{V}^{-1} \circ T^\prime \circ \idop_{\mathcal{V} \rightarrow \mathcal{H}}$. Thus,
\begin{align} \label{PreContradictionIdentity}
    \hat{V}^{-1} \circ T^\prime \circ \idop_{\mathcal{V} \rightarrow \mathcal{H}} = \idop_{\mathcal{V} \rightarrow \mathcal{H}}
\end{align}
As $\hat{V}^{-1}$ is invertible and $T^\prime \circ \idop_{\mathcal{V} \rightarrow \mathcal{H}}$, identity \eqref{PreContradictionIdentity} is equivalent to
\begin{align*}
    T^\prime\vert_{\trcl(\mathcal{V})} = \hat{V}\vert_{\trcl(\mathcal{V})}.
\end{align*}
By construction of $\hat{V}$, the RHS equals $T\vert_{\trcl(\mathcal{V})}$. But this contradicts the assumption that $T\vert_{\trcl(\mathcal{V})} \neq T^\prime\vert_{\trcl(\mathcal{V})}$. Thus $\ket{q_1}\bra{q_1}$ cannot be a fixed point of $R(T^\prime)$. Consequently, $\rho = \ket{q_0}\bra{q_0}$, which proves that $\ket{q_0}\bra{q_0}$ is the only state that is a fixed point of $R(T^\prime)$. This proves the second claim. 
To prove the third claim, we must calculate how our protocol transforms the transmission functional. For $\mathrm{dim}(\mathcal{V}) = 1$, we get directly from the definition \eqref{RForDim1} that $\mathfrak{t}_{R(T)}(\cdot) = \tr{\cdot} \mathfrak{t}_T(\ket{v}\bra{v}) = 0$.
For $\mathrm{dim}(\mathcal{V}) > 1$, the transmission functional $\mathfrak{t}_T$ transforms to $\mathfrak{t}_{R(T)}$, given by
\begin{align} \label{DefTransformFunct}
    \mathfrak{t}_{R(T)} := \int \mathfrak{t}_T \circ \idop_{\mathcal{V} \rightarrow \mathcal{H}} \circ \hat{U}_g^{-1} \circ \mathrm{tr}_A \circ \hat{W} \,\mathrm{d}\mu_G(g)
\end{align}
To evaluate \eqref{DefTransformFunct}, we use \eqref{FunctionalTwirlingExpression} and get
\begin{align*}
    \mathfrak{t}_{R(T)}(\cdot) = \mathfrak{t}_T\circ \idop_{\mathcal{V}\rightarrow \mathcal{H}}\left(\frac{P^\bot}{d-1}\right) \tr{P^\bot \ptr{A}{\hat{W}(\cdot)}} 
\end{align*}
A direct calculation then yields the claim. 

\end{proof}
\begin{rem} \label{OneNeedsAncillaryRemark}
With our protocol, we achieved a transformation from channels on $\mathcal{H}$ to qubit channels with certain properties. This was achieved by using classical communication and one ancillary qubit. To demonstrate that our implementation of this transformation uses the quantum resources in the most economic way possible, we show that in general one cannot use only classical communication to implement a transformation which has the desired properties. To this end, we consider the following procedure. First, we use an instrument to transform the state and to obtain classical information. Then we apply the channel, which should be transformed. Afterwards, we apply some quantum channel, where the choice of the channel may depend on the classical information that we obtained in the first step. Our instrument described by a collection of non-zero quantum operations $I_1, I_2, \dots, I_N$, such that $\sum_i I_i$ is trace-preserving. We denote the associated channels that are applied in the last step by $\Lambda_1, \Lambda_2, \dots, \Lambda_N$. Our protocol then implements the following transformation
\begin{align*}
    T \mapsto \sum_i \Lambda_i \circ T \circ I_i.
\end{align*}
Assume that the channel $T$ of the Theorem \ref{ReductionSuperchannelTheorem} is the identity and $\mathrm{dim}(\mathcal{H}) = \mathrm{dim}(\mathcal{V}) = 2$. Our first requirement is that $\idop \mapsto \idop$. Thus
\begin{align*}
    \idop = \sum_i \Lambda_i\circ I_i
\end{align*}
Since $\idop$ is an extreme point of the convex set of quantum operations, there must be non-negative coefficients $p_1, p_2, \dots, p_N$,  such that 
\begin{align*}
    \Lambda_i\circ I_i = p_i\cdot \idop, \text{ for } i = 1, 2, \dots, N. 
\end{align*}
This implies that $\Lambda_i$ and $I_i$ must be proportional to a unitary conjugation, i.e. $\Lambda_i(\cdot) = U_i^\dagger \cdot U_i$ and $I_i(\cdot) = p_i U_i \cdot U_i^\dagger$, for some unitary operator $U_i$. Our second requirement is that (since $\mathcal{V} = \mathcal{H}$), every channel except $\idop$ must be transformed to a state whose only fixed point is $\ket{q_0}\bra{q_0} =: P_0$. In particular, for the pinching channel, defined by $T_P(\cdot) = P_0\cdot P_0 + P_1\cdot P_1$, with $P_1 := \idmat - P_0$, we have
\begin{align*}
    P_0 = \sum_i \sum_{j=0}^1 p_i (U_i^\dagger P_j U_i) P_0 (U_i^\dagger P_j U_i).
\end{align*}
Since $P_0$ is an extremal point of the convex set $\set{\rho \geq 0}{\tr{\rho} \leq 1}$, we get that 
\begin{align*}
    (U_i^\dagger P_j U_i) P_0 (U_i^\dagger P_j U_i) = \lambda_{ij} P_0,
\end{align*}
for some $\lambda_{ij} \geq 0$. From this, we conclude that either $U_i^\dagger P_j U_i = P_0$ or $U_i^\dagger P_j U_i = P_1$. But then the application of the transformed channel to $P_1$ yields
\begin{align*}
    \sum_i \sum_{j=0}^1 p_i (U_i^\dagger P_j U_i) P_1 (U_i^\dagger P_j U_i) = P_1.
\end{align*}
Thus, $P_0$ is not the only state that is a fixed point of the transformed channel. Hence, to achieve our transformation, an ancillary system is needed.  
\end{rem}

\newpage

\section{No-go results} \label{NOGOSection}

In this section, we consider the case for which we claimed in our main theorem that it is impossible to discriminate two channels in an \sq{interaction-free} manner. There are two major results in this section: Theorem \ref{QuantitativeNoGo} which claims an inequality between the error probability and the \sq{interaction} probability; and Theorem \ref{RateLimitTheorem}, which claims that, under a certain condition, the best achievable rate (in terms of the number of channel uses, $N$) for the \sq{interaction} probability is proportional to $N^{-1}$. Both theorems will turn out to be consequences of our main technical results: Proposition \ref{No-GoInequality} and Proposition \ref{NoGoTheorem}. The proof techniques for these results are inspired by the techniques used in two papers by Mitchison, Massar and Pironio \cite{PhysRevA.63.032105, PhysRevA.64.062303}, who proved an analogous no-go result for the special case of a semitransparent object. Before we state the first proposition, we define a quantity that will appear as proportionality constant in the results of this section. As this may seem complicated, we want to stress that in all relevant cases, $C_{\mathcal{V}, \mathcal{W}}^{(T_A^\downarrow, T_B^\downarrow)}$ can be bounded by $2$. 

\begin{defn} \label{ProportionalityConstantDef}
For $\mathrm{dim}(\mathcal{H}) < \infty$, let $T_A^\downarrow, T_B^\downarrow: \trcl(\mathcal{H}) \rightarrow \trcl(\mathcal{H})$ be two linear maps, let $\mathcal{V}$ be a linear subspace of $\mathcal{H}$ and let $\mathcal{W} = \{\mathcal{W}_1, \mathcal{W}_2, \dots, \mathcal{W}_K\}$ be a collection of mutually orthogonal subspaces of $\mathcal{V}^\bot$ with the property that $\mathcal{V}^\bot = \mathcal{W}_1 \oplus \mathcal{W}_2 \oplus \dots \oplus \mathcal{W}_K$. Furthermore, let $P$ and $P_1, P_2, \dots, P_K$ be the orthogonal projections onto $\mathcal{V}$ and $\mathcal{W}_1, \mathcal{W}_2, \dots \mathcal{W}_K$. \\We define the quantity $C_{\mathcal{V}, \mathcal{W}}^{(T_A^\downarrow, T_B^\downarrow)}$ to be the infimum of the (possibly empty) set of real numbers $r$ with the property that there exists a finite-dimensional Hilbert space $\mathcal{H}_E$, isometries $V_A, V_B : \mathcal{H} \rightarrow \mathcal{H}_E\otimes\mathcal{H}$ and orthogonal projections $P_A, P_B : \mathcal{H}_E \rightarrow \mathcal{H}_E$ such that\footnote{ $\norm{\cdot}$ is the operator norm on $\blt(\mathcal{H})$.} 
\begin{subequations}
\begin{align}
    r &= \max_{1\leq k \leq K} \norm{P_k ({V_A}^\dagger (P_A P_B \otimes \idmat) V_B - \idmat) P_k} \label{DefCabFactor}\\
    V_A P &= V_B P \label{StinespringEqualityCondition}\\
    T_X^\downarrow(\cdot) &= \ptr{E}{(P_X\otimes\idmat)V_X \cdot V_X^\dagger} \label{StinespringDilationCondition}, 
\end{align}
\end{subequations}
for $X \in \{A, B\}$.
\end{defn}

We are now ready to state the first important proposition, which establishes, for a single channel use, an uncertainty relation between the \sq{information-gain} (RHS of \eqref{InformationInequality}) about the identity of the channel (is it $T_A$ or $T_B$?) and a quantity that depends on the probability that if we would measure the input states, we would find that they are supported in the orthogonal complement of a subspace $\mathcal{V}$. This subspace will later on be chosen to be a maximum vacuum subspace.

\begin{prop}[Information-interaction tradeoff] \label{No-GoInequality}
For $\mathrm{dim}(\mathcal{H}) < \infty$, let $T_A^\downarrow, T_B^\downarrow : \trcl(\mathcal{H}) \rightarrow \trcl(\mathcal{H})$ be quantum operations and let $\mathcal{V}$ be a subspace of $\mathcal{H}$ such that $T_A^\downarrow\vert_{\trcl(\mathcal{V)}}$ is trace-preserving and $T_A^\downarrow\vert_{\trcl(\mathcal{V)}} = T_B^\downarrow\vert_{\trcl(\mathcal{V)}}$. Let $\mathcal{W} = \{\mathcal{W}_1, \mathcal{W}_2, \dots, \mathcal{W}_K\}$ be a collection of mutually orthogonal subspaces of $\mathcal{V}^\bot$, such that $\mathcal{V}^\bot = \mathcal{W}_1 \oplus \mathcal{W}_2 \oplus \dots \oplus \mathcal{W}_K$. Denote the orthogonal projections onto these subspaces by $P_1, P_2, \dots, P_K$. Then $C_{\mathcal{V}, \mathcal{W}}^{(T_A^\downarrow, T_B^\downarrow)} \leq 2$ and
\begin{align} \label{InformationInequality}
\sqrt{F}(\rho, \sigma) - \sqrt{F}(T^\downarrow_A(\rho), T^\downarrow_B(\sigma)) &\leq C_{\mathcal{V}, \mathcal{W}}^{(T_A^\downarrow, T_B^\downarrow)} \sum_{k=1}^K \sqrt{\tr{P_k \rho} \tr{P_k \sigma}},
\end{align}
for all $\rho, \sigma \geq 0$.
\end{prop}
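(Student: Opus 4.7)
The plan is to lower bound $\sqrt{F}(T_A^\downarrow(\rho), T_B^\downarrow(\sigma))$ by an overlap of concrete purifications and then trade that overlap for $\sqrt{F}(\rho,\sigma)$ plus an error supported on $\mathcal{V}^\bot$. Concretely, fix Stinespring data $V_A, V_B, P_A, P_B$ satisfying conditions (b) and (c) of Definition~\ref{ProportionalityConstantDef}. Such data exist because the hypothesis together with the operator inequality $T_X^\downarrow(\rho) \leq T_X(\rho)$ (which follows from $T_X(\rho) - T_X^\downarrow(\rho) = \ptr{E}{(P_X^\bot \otimes \idmat) V_X \rho V_X^\dagger (P_X^\bot \otimes \idmat)} \geq 0$) and the trace-preservation of $T_X^\downarrow|_\mathcal{V}$ force $T_A|_{\trcl(\mathcal{V})} = T_B|_{\trcl(\mathcal{V})}$, so a common Stinespring dilation on $\mathcal{V}$ extends to dilations of $T_A$ and $T_B$ satisfying $V_A P = V_B P$. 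Set $W_X := (P_X \otimes \idmat) V_X$ and $M := W_A^\dagger W_B = V_A^\dagger (P_A P_B \otimes \idmat) V_B$, and pick Uhlmann purifications $\ket{\rho}, \ket{\sigma} \in \mathcal{H} \otimes \mathcal{K}$ with $\braket{\rho}{\sigma} = \sqrt{F}(\rho,\sigma)$. The vectors $(W_X \otimes \idmat_K)\ket{\rho}$ and $(W_X \otimes \idmat_K)\ket{\sigma}$ are subnormalized purifications of $T_X^\downarrow(\rho)$ and $T_X^\downarrow(\sigma)$, so Uhlmann's theorem yields
\[
\sqrt{F}\bigl(T_A^\downarrow(\rho), T_B^\downarrow(\sigma)\bigr) \geq \bigl|\bra{\rho}(M \otimes \idmat_K)\ket{\sigma}\bigr|.
\]

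Trace-preservation of $T_A^\downarrow|_{\trcl(\mathcal{V})}$ combined with $V_A P = V_B P$ implies $MP = PM = P$; indeed, $V_X P$ has range in $(P_X \otimes \idmat)(\mathcal{H}_E \otimes \mathcal{H})$, so $(P_A P_B \otimes \idmat) V_B P = V_B P = V_A P$, and then $MP = V_A^\dagger V_A P = P$, and symmetrically for $PM$. Hence $M - \idmat = P^\bot (M-\idmat) P^\bot$, and applying $M = \idmat + (M-\idmat)$ together with the reverse triangle inequality yields
\[
\sqrt{F}(\rho,\sigma) - \sqrt{F}\bigl(T_A^\downarrow(\rho), T_B^\downarrow(\sigma)\bigr) \leq \bigl|\bra{\rho}((M-\idmat)\otimes\idmat)\ket{\sigma}\bigr|.
\]
Decomposing $P^\bot = \sum_k P_k$ gives $M - \idmat = \sum_{l,k} P_l(M-\idmat)P_k$, and Cauchy-Schwarz block-by-block gives
\[
\bigl|\bra{\rho}\bigl(P_l(M-\idmat)P_k \otimes \idmat\bigr)\ket{\sigma}\bigr| \leq \norm{P_l(M-\idmat)P_k}_\infty \sqrt{\tr{P_l \rho}\tr{P_k \sigma}}.
\]

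The main obstacle, and the hardest technical step I anticipate, is ensuring that the off-diagonal blocks with $l \neq k$ can be arranged to vanish by a suitable choice of dilation satisfying (b)--(c), reducing the double sum to the diagonal terms and hence to $r \sum_k \sqrt{\tr{P_k\rho}\tr{P_k\sigma}}$ with $r := \max_k \norm{P_k(M-\idmat)P_k}_\infty$. My strategy is to enlarge $\mathcal{H}_E$ and exploit the unitary freedom on the environment to place the projected images $W_A \mathcal{W}_l$ and $W_B \mathcal{W}_k$ into orthogonal subspaces of $\mathcal{H}_E \otimes \mathcal{H}$ whenever $l \neq k$, while respecting $V_A P = V_B P$ and still reproducing the cross-terms of $T_A^\downarrow, T_B^\downarrow$ between $\mathcal{V}$ and the $\mathcal{W}_k$ through auxiliary environment registers that carry the cross-term information. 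Once such a dilation is in hand the off-diagonal contributions drop out, and taking the infimum over admissible dilations promotes $r$ to $C_{\mathcal{V},\mathcal{W}}^{(T_A^\downarrow, T_B^\downarrow)}$, giving the claimed inequality. Finally, the universal bound $C_{\mathcal{V},\mathcal{W}}^{(T_A^\downarrow, T_B^\downarrow)} \leq 2$ is immediate from the contractivity $\norm{M}_\infty \leq 1$ (as $M$ is a product of a projection with two isometries), which forces $\norm{P_k(M-\idmat)P_k}_\infty \leq 2$ for every $k$ in every admissible dilation.
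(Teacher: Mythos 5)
Your setup reproduces the paper's route faithfully up to the decisive step: like the paper, you construct admissible dilation data by exploiting that trace preservation of $T_X^\downarrow\vert_{\trcl(\mathcal{V})}$ forces $(P_X\otimes\idmat)V_X\vert_{\mathcal{V}}=V_X\vert_{\mathcal{V}}$ and that the Stinespring freedom lets you arrange $V_AP=V_BP$; you then use Uhlmann purifications, show $MP=PM=P$, hence $M-\idmat=P^\bot(M-\idmat)P^\bot$, and apply a blockwise Cauchy--Schwarz bound. All of this is correct (your derivation of the existence of the dilation data detours through channels $T_A,T_B$ that are not in the hypotheses of the proposition, but this is harmless since one may take trace-preserving completions, and the paper's own argument via the measurement-model dilation of quantum operations is the cleaner version of the same idea). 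The bound $C_{\mathcal{V},\mathcal{W}}^{(T_A^\downarrow,T_B^\downarrow)}\leq 2$ is also handled the same way as in the paper.

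The genuine gap is the step you yourself flag: the passage from the double sum $\sum_{l,k}P_l(M-\idmat)P_k$ to the diagonal sum. You do not prove it; you only sketch a strategy of enlarging $\mathcal{H}_E$ so that $W_A\mathcal{W}_l\perp W_B\mathcal{W}_k$ for $l\neq k$, and this strategy is doubtful as described. The dilations $W_X=(P_X\otimes\idmat)V_X$ of the fixed operations $T_X^\downarrow$ are unique up to an isometry acting on the environment, and conditions \eqref{StinespringEqualityCondition}--\eqref{StinespringDilationCondition} force $W_AP=W_BP$, so the two environments are anchored to each other on the image of $\mathcal{V}$; the residual freedom in $M=W_A^\dagger W_B$ is only a contraction on the environment acting as the identity on the environment support of $W_AP$, while the cross terms $T_X^\downarrow(\ket{w}\bra{v'})$ with $v'\in\mathcal{V}$, $w\in\mathcal{W}_k$ pin $W_X$ on $\mathcal{W}_k$ relative to $W_X\vert_{\mathcal{V}}$. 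A naive tagging of the blocks by orthogonal environment registers changes these cross terms and hence violates \eqref{StinespringDilationCondition}, and it is not shown (nor clear) that the admissible freedom suffices to annihilate all off-diagonal blocks, let alone while keeping the diagonal block norms at the infimum defining $C_{\mathcal{V},\mathcal{W}}^{(T_A^\downarrow,T_B^\downarrow)}$. Note that the paper does not perform any such dilation surgery: at the corresponding point it inserts the $P_k$ symmetrically and bounds the error term directly by the diagonal-block sum, i.e.\ exactly the reduction you could not justify is taken there in one line. So as submitted your argument is incomplete precisely at the inequality's crux, and your proposed repair needs either a proof or a different idea.
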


Before proving the proposition, let us remark that Proposition \ref{FidelityInequality}, is a direct consequence thereof. 

\begin{proof} \textit{(Proposition \ref{FidelityInequality})}
This follows directly from the fact that the fidelity can be characterized in terms of the minimum over measurements of expressions of the form given on the RHS of \eqref{InformationInequality} (see \cite{nielsen00}, p. 412).
\end{proof}

\begin{proof} \textit{(Proposition \ref{No-GoInequality})}
We first establish that $C_{\mathcal{V}, \mathcal{W}}^{(T_A^\downarrow, T_B^\downarrow)} \leq 2$. Let $P$, $P^\bot$ be the orthogonal projections onto $\mathcal{V}$ and $\mathcal{V}^\bot$. By applying the triangular inequality and the sub-multiplicativity of the operator norm to the definition of $C_{\mathcal{V}, \mathcal{W}}^{(T_A^\downarrow, T_B^\downarrow)}$, it follows that if there exist $\mathcal{H}_E, V_A, V_B, P_A$ and $P_B$ with the properties of Definition \ref{ProportionalityConstantDef}, then $C_{\mathcal{V}, \mathcal{W}}^{(T_A^\downarrow, T_B^\downarrow)} \leq 2$. Therefore, we start our proof by showing the existence of the aforementioned quantities.
It is a basic property of completely positive trace non-increasing maps (see \cite{nielsen00}, p. 365) that there exist finite-dimensional Hilbert spaces $\mathcal{H}_{E_A}$ and $\mathcal{H}_{E_B}$, isometries $\tilde{V}_A : \mathcal{H} \rightarrow \mathcal{H}_{E_A} \otimes \mathcal{H}$ and $\tilde{V}_B : \mathcal{H} \rightarrow \mathcal{H}_{E_B} \otimes \mathcal{H}$ and orthogonal projections $\tilde{P}_A : \mathcal{H}_{E_A} \rightarrow \mathcal{H}_{E_A}$ and $\tilde{P}_B : \mathcal{H}_{E_B} \rightarrow \mathcal{H}_{E_B}$, such that $T_A^\downarrow(\cdot) = \ptr{E_A}{(\tilde{P}_A \otimes \idmat) \tilde{V}_A \cdot \tilde{V}_A^\dagger}$ and $T_B^\downarrow(\cdot) = \ptr{E_B}{(\tilde{P}_B \otimes \idmat) \tilde{V}_B \cdot \tilde{V}_B^\dagger}$. By enlarging the smaller of the two ancillary Hilbert spaces and identifying two orthonormal basis, we can achieve that $\mathcal{H}_{E_A}$ and $\mathcal{H}_{E_B}$ are the same space, $\mathcal{H}_E$. By assumption, $T_A\vert_{\trcl(\mathcal{V})}$ and $T_B\vert_{\trcl(\mathcal{V})}$ are trace-preserving. It follows that $(\tilde{P}_A\otimes \idmat) \tilde{V}_A|_\mathcal{V}$ and $(\tilde{P}_A\otimes \idmat) \tilde{V}_B|_\mathcal{V}$ are isometries and thus $(\tilde{P}_A\otimes \idmat) \tilde{V}_A|_\mathcal{V} = \tilde{V}_A|_\mathcal{V}$ and $(\tilde{P}_B\otimes \idmat) \tilde{V}_B|_\mathcal{V} = \tilde{V}_B|_\mathcal{V}$. Hence, $\tilde{V}_A|_\mathcal{V}$ and $\tilde{V}_B|_\mathcal{V}$ are Stinespring isometries of the same channel and thus are related by a unitary operator on $\mathcal{H}_E$. Precisely, there exists a unitary operator $W : \mathcal{H}_E \rightarrow \mathcal{H}_E$ such that $\tilde{V}_B|_\mathcal{V} = (W\otimes \idmat)\tilde{V}_A|_\mathcal{V}$. Equivalently, $\tilde{V}_BP = (W\otimes \idmat)\tilde{V}_A P$. It is then easy to verify that the operators $V_A := (W\otimes \idmat)\tilde{V}_A$, $V_B := \tilde{V}_B$ and $P_A := W \tilde{P}_A W^{-1}, P_B := \tilde{P}_B$ satisfy the requirements \eqref{StinespringDilationCondition} and \eqref{StinespringEqualityCondition}. In particular, we have 
\begin{align} \label{NoGoEquivalent}
    (P_A\otimes \idmat)V_AP &= V_AP = V_BP = (P_B\otimes\idmat)V_BP.
\end{align}
This finishes the proof of the first part of the proposition. For the second part, we fix $V_A, V_B, P_A$ and $P_B$ such that the conditions \eqref{StinespringDilationCondition} and \eqref{StinespringEqualityCondition} are satisfied. In particular, this implies that \eqref{NoGoEquivalent} holds. To prove the inequality, we proceed as follows: for two positive operators $\rho, \sigma \geq 0$, Uhlmann's theorem implies that there exists a finite-dimensional Hilbert space $\mathcal{H}_Q$ and two vectors $\psi, \phi \in \mathcal{H}_Q \otimes \mathcal{H}$ (purifications) such that $\ptr{Q}{\ket{\psi}\bra{\psi}} = \rho$ and $\ptr{Q}{\ket{\phi}\bra{\phi}} = \sigma$ and $\sqrt{F}(\rho, \sigma) = |\braket{\psi}{\phi}|$. We further note that $(\idmat_Q \otimes (P_A\otimes\idmat) V_A)\ket{\psi}$ and $(\idmat_Q \otimes (P_B\otimes\idmat) V_B)\ket{\phi}$ are purifications of $T_A^\downarrow(\rho)$ and $T_B^\downarrow(\sigma)$. Hence, Uhlmann's theorem implies that
\begin{align}
    \sqrt{F}(T_A^\downarrow(\rho), T_B^\downarrow(\sigma)) \geq \abs{\braket{(\idmat_\mathcal{Q}\otimes (P_A\otimes \idmat)V_A)\psi}{(\idmat_\mathcal{Q}\otimes (P_B\otimes \idmat)V_B)\phi}} \label{ApplicationOfUhlmannsTheorem}
\end{align}
By inserting $\idmat_\mathcal{Q}\otimes P + \idmat_\mathcal{Q}\otimes P^\bot$ (which is equal to the identity) and expanding the scalar product, we obtain
\begin{subequations} \label{ExpansionAll}
\begin{align}
\text{RHS of \eqref{ApplicationOfUhlmannsTheorem}} &= |\braket{(\idmat_{Q}\otimes (P_A\otimes \idmat)V_AP) \psi}{(\idmat_Q\otimes (P_B\otimes \idmat)V_BP) \phi}  \\&+ 
\braket{(\idmat_Q\otimes (P_A\otimes \idmat)V_AP^\bot) \psi}{(\idmat_Q\otimes (P_B\otimes \idmat)V_BP) \phi} \label{Expansionb} \\&+
\braket{(\idmat_Q\otimes (P_A\otimes \idmat)V_AP) \psi}{(\idmat_Q\otimes (P_B\otimes \idmat)V_BP^\bot) \phi} \label{Expansionc}\\&+
\braket{(\idmat_Q\otimes (P_A\otimes \idmat)V_AP^\bot) \psi}{(\idmat_Q\otimes (P_B\otimes \idmat)V_BP^\bot) \phi} |.
\end{align}
\end{subequations}
It is not hard to see from \eqref{NoGoEquivalent} that the terms \eqref{Expansionb} and \eqref{Expansionc} vanish. Explicitly, we have
\begin{align*}
\eqref{Expansionb} &= \braket{(\idmat_Q\otimes (P_A\otimes \idmat)V_AP^\bot) \psi}{(\idmat_Q\otimes (P_B\otimes \idmat)V_BP) \phi} \\ &=\braket{(\idmat_Q\otimes (P_A\otimes \idmat)V_AP^\bot) \psi}{(\idmat_Q\otimes (P_A\otimes \idmat)V_AP) \phi} \\&= \braket{(\idmat_Q\otimes V_A P^\bot) \psi}{(\idmat_Q\otimes (P_A\otimes \idmat)V_AP) \phi} \\&= \braket{(\idmat_Q\otimes V_AP^\bot) \psi}{(\idmat_Q\otimes V_AP) \phi} \\&= \braket{\psi}{(\idmat_Q\otimes P^\bot P) \phi} \\&= 0
\end{align*}
and similarly for \eqref{Expansionc}. Adding and subtracting $\braket{(\idmat_Q\otimes P^\bot)\psi}{(\idmat_Q\otimes P^\bot)\phi}$ and using the inverse triangular inequality, yields
\begin{align} \label{NoGoSecondEstimate}
\begin{split}
\text{\eqref{ExpansionAll}} &\geq |\braket{(\idmat_Q\otimes P)\psi}{(\idmat_Q\otimes P)\phi} + \braket{(\idmat_Q\otimes P^\bot)\psi}{(\idmat_Q\otimes P^\bot)\phi}| \\&- |\braket{(\idmat_Q\otimes (P_A\otimes \idmat)V_AP^\bot) \psi}{(\idmat_Q\otimes (P_B\otimes \idmat)V_BP^\bot) \phi}\\&\quad-\braket{(\idmat_Q\otimes P^\bot)\psi}{(\idmat_Q\otimes P^\bot)\phi}|.
\end{split}
\end{align}
We further use that $P^\bot P = 0$ (thus $\sqrt{F}(\rho, \sigma) =  |\braket{(\idmat_Q\otimes P)\psi}{(\idmat_Q\otimes P)\phi} + \braket{(\idmat_Q\otimes P^\bot)\psi}{(\idmat_Q\otimes P^\bot)\phi}|$) and some rearrangement to arrive at
\begin{align} \label{NextEstimateNoGo}
\eqref{NoGoSecondEstimate} &= \sqrt{F}(\rho, \sigma) - |\braket{(\idmat_Q\otimes P^\bot)\psi}{(\idmat_Q\otimes P^\bot(V_A^\dagger (P_A P_B\otimes \idmat)V_B
- \idmat) P^\bot)\phi}|
\end{align}
As by assumption, $P^\bot = \sum_k P_k$ and $P_kP_l = 0$ for $k\neq l$, we get
\begin{align} \label{LastineqProofNoGo}
    \eqref{NextEstimateNoGo} &\geq \sqrt{F}(\rho, \sigma) - \sum_{k=1}^K |\braket{(\idmat_Q\otimes P_k)\psi}{(\idmat_Q\otimes P_k(V_A^\dagger (P_A P_B\otimes \idmat)V_B
- \idmat) P_k)\phi}| \nonumber
\\&\begin{aligned}
\geq  \sqrt{F}(\rho, \sigma) - \sum_{k=1}^K \bigg\{ \norm{P_k(V_A^\dagger (P_A P_B\otimes \idmat)V_B - \idmat) P_k} & \\  \norm{(\idmat_Q\otimes P_k)\psi}& \norm{(\idmat_Q\otimes P_k)\phi} \bigg\} \end{aligned}\nonumber \\
&= \sqrt{F}(\rho, \sigma) - \sum_{k=1}^K \norm{P_k(V_A^\dagger (P_A P_B\otimes \idmat)V_B - \idmat) P_k}  \sqrt{\tr{P_k\rho}\tr{P_k\sigma}},
\end{align}
where we used the Cauchy-Schwarz inequality and the sub-multiplicativity of the matrix norm to get from the first to the second line. For the last line, we used that  
\begin{align*}
\norm{\idmat_Q\otimes P_k\psi}^2 &= \braket{\psi}{(\idmat_Q\otimes P_k) \psi} = \tr{(\idmat_Q\otimes P_k) \ket{\psi}\bra{\psi}} = \tr{P_k \ptr{Q}{\ket{\psi}\bra{\psi}}} \\&= \tr{P_k\rho}.
\end{align*}
As the only constraints that $V_A, V_B, P_A, P_B$ and $\mathcal{H}_E$ have to satisfy are the ones of Definition \ref{ProportionalityConstantDef}, we conclude that 
\begin{align*}
    \eqref{LastineqProofNoGo} \geq \sqrt{F}(\rho, \sigma) - C_{\mathcal{V}, \mathcal{W}}^{(T_A^\downarrow, T_B^\downarrow)} \sum_{k=1}^K \sqrt{\tr{P_k\rho}\tr{P_k\sigma}}
\end{align*}
This proves the claim.
\end{proof}
Proposition \ref{No-GoInequality} does not allow for ancillary systems. In the following proposition, which is an iterated refinement of the preceding one, we show that this problem can be solved by applying Proposition \ref{No-GoInequality} to $T^\downarrow \otimes \idop$.

\begin{prop}[Technical no-go theorem] \label{NoGoTheorem}
For $\mathrm{dim}(\mathcal{H}) < \infty$, let $T_A^\downarrow, T_B^\downarrow : \trcl(\mathcal{H}) \rightarrow \trcl(\mathcal{H})$ be two completely positive trace non-increasing maps. Let $\mathcal{V}$ be a subspace of $\mathcal{H}$ such that $T_A^\downarrow\vert_{\trcl(\mathcal{V)}}$ is trace-preserving and $T_A^\downarrow\vert_{\trcl(\mathcal{V)}} = T_B^\downarrow\vert_{\trcl(\mathcal{V)}}$. Let $\mathcal{W} = \{\mathcal{W}_1, \mathcal{W}_2, \dots, \mathcal{W}_K\}$ be a collection of mutually orthogonal subspaces of $\mathcal{V}^\bot$, such that $\mathcal{V}^\bot = \mathcal{W}_1 \oplus \mathcal{W}_2 \oplus \dots \oplus \mathcal{W}_K$. Denote the orthogonal projections onto these subspaces by $P_1, P_2, \dots, P_K$.
Furthermore, let $T_A, T_B : \trcl(\mathcal{H}) \rightarrow \trcl(\mathcal{H})$ be completely positive maps such that $T_A - T_A^\downarrow$ and $T_B - T_B^\downarrow$ are also completely positive. Then for every finite-dimensional $N$-step discrimination strategy $D = (\mathcal{H}, \mathcal{H}_Z, s_0, \Lambda)$, we have
\begin{align} \label{NoGoFidelityInequality}
   1 - \sqrt{F}(\rho_N^{T_A}, \rho_N^{T_B})  \leq C_{\mathcal{V}, \mathcal{W}}^{(T_A^\downarrow, T_B^\downarrow)} \sum_{i = 0}^{N-1}\sum_{k = 1}^{K} \sqrt{\tr{P_k \ptr{Z}{\rho_i^{T_A^\downarrow}}}\cdot\tr{P_k \ptr{Z}{\rho_i^{T_B^\downarrow}}}},
\end{align}
where $\rho$ is the intermediate state map of $D$. Furthermore,  $C_{\mathcal{V}, \mathcal{W}}^{(T_A^\downarrow, T_B^\downarrow)} \leq 2$.
\end{prop}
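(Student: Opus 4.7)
The plan is to iterate Proposition \ref{No-GoInequality} after first reducing the left-hand side to a statement about the $\downarrow$-dynamics.

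First I would pass from $T_X$ to $T_X^\downarrow$. A straightforward induction on $n$, using that $T_X - T_X^\downarrow$ and $\Lambda_{n+1}$ are completely positive, gives $\rho_n^{T_X^\downarrow} \leq \rho_n^{T_X}$ for all $n$. Combined with the monotonicity property of the sub-state fidelity,
\begin{align*}
0 \leq A \leq A',\; 0 \leq B \leq B' \quad \Longrightarrow \quad \sqrt{F}(A,B) \leq \sqrt{F}(A',B'),
\end{align*}
this yields $\sqrt{F}(\rho_N^{T_A}, \rho_N^{T_B}) \geq \sqrt{F}(\rho_N^{T_A^\downarrow}, \rho_N^{T_B^\downarrow})$, so it suffices to prove \eqref{NoGoFidelityInequality} with the $\downarrow$-states on the left. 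The monotonicity property I would in turn establish via Uhlmann's theorem for sub-states: optimal purifications of $A, B$ on some ancilla, together with independently chosen optimal purifications of $A' - A, B' - B$ on an orthogonal ancilla, can be combined into purifications of $A', B'$; by adjusting the global phase of each pair, the two overlaps add as non-negative real numbers, producing a purification overlap of $A', B'$ that is at least $\sqrt{F}(A,B)$.

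Second I would verify that enlarging the state space by a passive ancilla $\mathcal{H}_Z$ does not increase the constant, i.e.
\begin{align*}
C_{\mathcal{V}\otimes\mathcal{H}_Z,\, \mathcal{W}\otimes\mathcal{H}_Z}^{(T_A^\downarrow\otimes\idop,\, T_B^\downarrow\otimes\idop)} \leq C_{\mathcal{V}, \mathcal{W}}^{(T_A^\downarrow, T_B^\downarrow)}.
\end{align*}
Indeed, given any tuple $(\mathcal{H}_E, V_A, V_B, P_A, P_B)$ satisfying the conditions of Definition \ref{ProportionalityConstantDef}, the tensored isometries $V_X\otimes\idmat_Z$ together with the same projections $P_X$ satisfy the analogous conditions for the ancilla-enlarged problem, and the operator in \eqref{DefCabFactor} becomes $P_k(V_A^\dagger(P_A P_B\otimes\idmat)V_B - \idmat)P_k \otimes \idmat_Z$, whose operator norm equals that of the unadorned expression.

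Third I would iterate. Applying Proposition \ref{No-GoInequality} at step $n$ to the operations $T_A^\downarrow\otimes\idop$ and $T_B^\downarrow\otimes\idop$ on $\mathcal{H}\otimes\mathcal{H}_Z$ with inputs $\rho_n^{T_A^\downarrow}, \rho_n^{T_B^\downarrow}$, and using $\tr{(P_k\otimes\idmat_Z)\sigma} = \tr{P_k\ptr{Z}{\sigma}}$, yields
\begin{multline*}
\sqrt{F}(\rho_n^{T_A^\downarrow}, \rho_n^{T_B^\downarrow}) - \sqrt{F}\bigl((T_A^\downarrow\otimes\idop)(\rho_n^{T_A^\downarrow}),\, (T_B^\downarrow\otimes\idop)(\rho_n^{T_B^\downarrow})\bigr) \\
\leq C_{\mathcal{V},\mathcal{W}}^{(T_A^\downarrow,T_B^\downarrow)} \sum_{k=1}^{K} \sqrt{\tr{P_k\ptr{Z}{\rho_n^{T_A^\downarrow}}}\cdot\tr{P_k\ptr{Z}{\rho_n^{T_B^\downarrow}}}}.
\end{multline*}
The standard monotonicity of $\sqrt{F}$ under the CPTP map $\Lambda_{n+1}$ then increases the subtracted term to $\sqrt{F}(\rho_{n+1}^{T_A^\downarrow}, \rho_{n+1}^{T_B^\downarrow})$, preserving the inequality. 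Telescoping over $n = 0, \ldots, N-1$ and noting $\sqrt{F}(\rho_0^{T_A^\downarrow}, \rho_0^{T_B^\downarrow}) = \sqrt{F}(\Lambda_0(s_0), \Lambda_0(s_0)) = 1$ gives the required bound for the $\downarrow$-states, and the reduction of the first step closes the argument. The universal bound $C_{\mathcal{V},\mathcal{W}}^{(T_A^\downarrow,T_B^\downarrow)} \leq 2$ is inherited directly from Proposition \ref{No-GoInequality}. The only conceptually non-trivial ingredient is the sub-state fidelity monotonicity; everything else is bookkeeping.
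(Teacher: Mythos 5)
Your proposal is correct and follows essentially the same route as the paper: reduce to the $\downarrow$-dynamics via $\rho_n^{T_X^\downarrow}\leq\rho_n^{T_X}$ and a superadditivity/monotonicity property of $\sqrt{F}$, check that tensoring with $\idop_Z$ does not increase $C_{\mathcal{V},\mathcal{W}}^{(T_A^\downarrow,T_B^\downarrow)}$, and telescope Proposition \ref{No-GoInequality} using monotonicity under the $\Lambda_n$. The only cosmetic difference is that the paper invokes strong concavity of the fidelity as a black box, whereas you re-derive the needed inequality $\sqrt{F}(A+C,B+D)\geq\sqrt{F}(A,B)+\sqrt{F}(C,D)$ from Uhlmann's theorem with orthogonal ancillas, which is a valid proof of the same fact.
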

\begin{cor}
For $\mathrm{dim}(\mathcal{H}) < \infty$, let $T_A^\downarrow, T_B^\downarrow : \trcl(\mathcal{H}) \rightarrow \trcl(\mathcal{H})$ be two completely positive trace non-increasing maps. Let $\mathcal{V}$ be a subspace of $\mathcal{H}$ such that $T_A^\downarrow\vert_{\trcl(\mathcal{V)}}$ is trace-preserving and $T_A^\downarrow\vert_{\trcl(\mathcal{V)}} = T_B^\downarrow\vert_{\trcl(\mathcal{V)}}$. Then
\begin{align}
1 - \sqrt{F}(\rho_N^{T_A}, \rho_N^{T_B})  \leq C_{\mathcal{V}, \mathcal{W}}^{(T_A^\downarrow, T_B^\downarrow)} \sum_{i = 0}^{N-1}\sum_{k = 1}^{K} \sqrt{\tr{P_k \ptr{Z}{\rho_i^{T_A^\downarrow}}}\cdot\tr{P_k \ptr{Z}{\rho_i^{T_B^\downarrow}}}},
\end{align}
\end{cor}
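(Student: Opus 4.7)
The plan is to derive this corollary as an immediate specialization of Proposition \ref{NoGoTheorem}. Reading the statement, I interpret $\rho_N^{T_A}$ and $\rho_N^{T_B}$ on the left-hand side as $\rho_N^{T_A^\downarrow}$ and $\rho_N^{T_B^\downarrow}$, and I understand the collection $\mathcal{W} = \{\mathcal{W}_1, \dots, \mathcal{W}_K\}$ with projections $P_1, \dots, P_K$ appearing on the right-hand side to be any orthogonal decomposition of $\mathcal{V}^\bot$, exactly as in the preceding proposition.

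First I would check that the hypotheses of Proposition \ref{NoGoTheorem} are satisfied. The corollary's assumptions already supply the completely positive trace non-increasing maps $T_A^\downarrow, T_B^\downarrow$ and the subspace $\mathcal{V}$ on which $T_A^\downarrow\vert_{\trcl(\mathcal{V})} = T_B^\downarrow\vert_{\trcl(\mathcal{V})}$ is trace-preserving. The only piece not explicitly stated is the pair of completely positive maps $T_A, T_B$ dominating $T_A^\downarrow, T_B^\downarrow$ in the sense that $T_A - T_A^\downarrow$ and $T_B - T_B^\downarrow$ are CP. The natural choice is $T_A := T_A^\downarrow$ and $T_B := T_B^\downarrow$, for then both differences are the zero map, which is trivially completely positive.

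With this choice, Proposition \ref{NoGoTheorem} applies to any finite-dimensional $N$-step discrimination strategy $D$, and its conclusion reads exactly as the stated inequality after identifying $\rho_N^{T_A} = \rho_N^{T_A^\downarrow}$ and $\rho_N^{T_B} = \rho_N^{T_B^\downarrow}$. There is no serious obstacle here: the corollary is a convenient restatement of the proposition in the case when the \sq{upper} and \sq{lower} maps coincide, and its purpose is presumably to provide a clean fidelity inequality directly in terms of $T_A^\downarrow$ and $T_B^\downarrow$ for use in the subsequent no-go arguments.
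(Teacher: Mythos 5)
Your derivation is correct: with $T_A := T_A^\downarrow$ and $T_B := T_B^\downarrow$ the differences $T_A - T_A^\downarrow$ and $T_B - T_B^\downarrow$ are the zero map, hence completely positive, and all remaining hypotheses of Proposition \ref{NoGoTheorem} are supplied by the corollary (with $\mathcal{W}$, the $P_k$ and the strategy $D$ understood as in the proposition), so the stated inequality follows at once; your reading of $\rho_N^{T_A}$, $\rho_N^{T_B}$ as $\rho_N^{T_A^\downarrow}$, $\rho_N^{T_B^\downarrow}$ is the only one consistent with the corollary's hypotheses, which never introduce maps $T_A$, $T_B$. The route differs from the paper only in packaging: the proof the paper attaches to this statement does not cite Proposition \ref{NoGoTheorem} as a black box but proves it in full --- first passing from $\rho_N^{T_A}, \rho_N^{T_B}$ to $\rho_N^{T_A^\downarrow}, \rho_N^{T_B^\downarrow}$ via strong concavity of the fidelity (inequality \eqref{FildelityInequality}, which is where the CP-dominance hypothesis is used), and then establishing the core bound \eqref{NonTPInequality} by applying Proposition \ref{No-GoInequality} to $T_X^\downarrow \otimes \idop$ and iterating over the $N$ steps using monotonicity of the fidelity under the channels $\Lambda_i$. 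In your specialization the strong-concavity step degenerates to an equality, and what remains is exactly \eqref{NonTPInequality}; so your argument isolates precisely the part of the paper's proof that the corollary encapsulates, at the cost of relying on the proposition being proved elsewhere (which it is, without circularity), while the paper's version is self-contained and simultaneously yields the more general statement with nontrivial dominating maps.
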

\begin{proof} To reduce the overhead in notation, we define $\rho_i := \rho_i^{T_A}$, $\rho_i^\downarrow := \rho_i^{T_A^\downarrow}$ and $\sigma_i := \rho_i^{T_B}$, $\sigma_i^\downarrow := \rho_i^{T_B^\downarrow}$. We start to prove the proposition by showing that
\begin{align} \label{FildelityInequality}
1-\sqrt{F}(\rho_N, \sigma_N) \leq 1 - \sqrt{F}(\rho_N^\downarrow, \sigma_N^\downarrow).
\end{align}
This inequality follows from the strong concavity of the fidelity and the observation that $\rho_N - \rho_N^\downarrow \geq 0$ and $\sigma_N - \sigma_N^\downarrow \geq 0$. The latter statement follows inductively, as $\rho_0 - \rho_0^\downarrow = 0 \geq 0$ and
\begin{align*}
    \rho_{i+1} - \rho^\downarrow_{i+1} &= \Lambda_i((T_A\otimes \idop)(\rho_{i}) -  (T_A^\downarrow\otimes \idop)(\rho^\downarrow_{i}) )\\
    &= \Lambda_i((T_A\otimes \idop)(\rho_i - \rho_i^\downarrow) +((T_A-T_A^\downarrow)\otimes \idop)(\rho_i^\downarrow)) \\
    &\geq 0.
\end{align*}
The last line follows, as by induction $\rho_i - \rho_i^\downarrow \geq 0$ and $T_A-T_A^\downarrow$ is, by assumption, completely positive. Replacing $\rho$ by $\sigma$ and $A$ by $B$ in the argument above, shows that also $\sigma_N - \sigma_N^\downarrow \geq 0$. We write $\Delta\rho := \rho_N-\rho^\downarrow_N$ and $\Delta\sigma := \sigma_N-\sigma^\downarrow_N$ and use the strong concavity (see \cite{nielsen00}, p. 414) and the non-negativity of the fidelity, to obtain the following inequality 
\begin{align*}
    \sqrt{F}(\rho_N, \sigma_N) &= \sqrt{F}(\rho_N^\downarrow + \Delta\rho, \sigma_N^\downarrow + \Delta\sigma)\\
    &\geq  \sqrt{F}(\rho_N^\downarrow, \sigma_N^\downarrow) +  \sqrt{F}(\Delta\rho, \Delta\sigma) \\
    &\geq \sqrt{F}(\rho_N^\downarrow, \sigma_N^\downarrow),
\end{align*}
which is equivalent to \eqref{FildelityInequality}.
To prove \eqref{NoGoFidelityInequality}, it remains to show that
\begin{align} \label{NonTPInequality}
   1 - \sqrt{F}(\rho_N^\downarrow, \sigma_N^\downarrow) \leq C^{(T_A^\downarrow, T_B^\downarrow)}_{\mathcal{V}, \mathcal{W}} \sum_{i = 0}^{N-1}\sum_{k = 0}^{K} \sqrt{\tr{P_k \ptr{Z}{\rho_i^\downarrow}}\cdot\tr{P_k \ptr{Z}{\sigma_i^\downarrow}}}. 
\end{align}
To this end, notice that if $T^\downarrow_A\vert_{\trcl(\mathcal{V})} = T^\downarrow_B\vert_{\trcl(\mathcal{V})}$, then $(T^\downarrow_A\otimes \idop)\vert_{\trcl(\mathcal{V}\otimes\mathcal{H}_Z)} = (T^\downarrow_B\otimes\idop)\vert_{\trcl(\mathcal{V}\otimes\mathcal{H}_Z)}$. Hence, $T_A^\prime :=  (T^\downarrow_A \otimes \idop)$, $T_B^\prime := (T^\downarrow_B \otimes \idop)$, $\mathcal{V}^\prime := \mathcal{V} \otimes \mathcal{H}_Z$ and $\mathcal{W}^\prime := \{\mathcal{W}_1 \otimes \mathcal{H}_Z, \dots, \mathcal{W}_K \otimes \mathcal{H}_Z\}$ satisfy the assumptions of Proposition \ref{No-GoInequality}. Furthermore, as the fidelity is non-decreasing under the channel $\Lambda_i$ (see \cite{nielsen00}, p. 414), we have
\begin{align*}
    \sqrt{F}(\rho^\downarrow_i, \sigma_i^\downarrow) - \sqrt{F}(\rho^\downarrow_{i+1}, \sigma^\downarrow_{i+1}) &= \sqrt{F}(\rho^\downarrow_i, \sigma^\downarrow_i) - \sqrt{F}(\Lambda_i\circ T_A^\prime(\rho^\downarrow_{i}), \Lambda_i\circ T_B^\prime(\sigma^\downarrow_{i})) 
    \\ & \leq \sqrt{F}(\rho^\downarrow_i, \sigma^\downarrow_i) - \sqrt{F}(T_A^\prime(\rho^\downarrow_{i}), T_B^\prime(\sigma^\downarrow_{i})).
\end{align*}
We want to apply Proposition \ref{No-GoInequality} to the RHS of this expression. To do this correctly, we should notice that the projections, appearing in \eqref{InformationInequality}, project onto $\mathcal{W}_k\otimes \mathcal{H}_Z$, hence are equal to $P_k \otimes \idmat$. Also, if $V_A, V_B, P_A$ and $P_B$ satisfy the conditions \eqref{StinespringEqualityCondition} and \eqref{StinespringDilationCondition} then $V_A\otimes\idmat, V_B\otimes\idmat, P_A\otimes\idmat$ and $P_B\otimes\idmat$ satisfy the conditions \eqref{StinespringEqualityCondition} and \eqref{StinespringDilationCondition} for $T_A^\prime$ and $T_B^\prime$. If we plug this into \eqref{DefCabFactor} and use that in general $\norm{X\otimes\idmat} = \norm{X}$, we obtain
\begin{align*}
    C_{\mathcal{V}^\prime, \mathcal{W}^\prime}^{(T_A^\downarrow\otimes \idop, T_B^\downarrow\otimes \idop)} \leq C_{\mathcal{V}, \mathcal{W}}^{(T_A^\downarrow, T_B^\downarrow)}
\end{align*}
Using these observations, we get
\begin{align*}
\sqrt{F}(\rho^\downarrow_i, \sigma^\downarrow_i) - \sqrt{F}(\rho^\downarrow_{i+1}, \sigma^\downarrow_{i+1}) &\leq C_{\mathcal{V}, \mathcal{W}}^{(T_A^\downarrow, T_B^\downarrow)} \sum_{k = 1}^K \sqrt{\tr{(P_k\otimes \idmat) \rho^\downarrow_i}\tr{(P_k \otimes \idmat) \sigma^\downarrow_i}} \\
&= C_{\mathcal{V}, \mathcal{W}}^{(T_A^\downarrow, T_B^\downarrow)} \sum_{k = 1}^K\sqrt{\tr{P_k \ptr{Z}{\rho^\downarrow_i}}\tr{P_k \ptr{Z}{\sigma^\downarrow_i}}},
\end{align*}
Equivalently,
\begin{align*}
\sqrt{F}(\rho^\downarrow_{i+1}, \sigma^\downarrow_{i+1}) \geq \sqrt{F}(\rho^\downarrow_i, \sigma^\downarrow_i) -  C_{\mathcal{V}, \mathcal{W}}^{(T_A^\downarrow, T_B^\downarrow)} \sum_{k = 1}^K\sqrt{\tr{P_k \ptr{Z}{\rho^\downarrow_i}}\tr{P_k \ptr{Z}{\sigma^\downarrow_i}}}.
\end{align*}
If we iterate this inequality, we obtain
\begin{align*}
    \sqrt{F}(\rho^\downarrow_N, \sigma^\downarrow_N) \geq \sqrt{F}(\rho^\downarrow_0, \sigma^\downarrow_0) - C_{\mathcal{V}, \mathcal{W}}^{(T_A^\downarrow, T_B^\downarrow)} \sum_{i = 0}^{N-1}\sum_{k = 1}^K \sqrt{\tr{P_k \ptr{Z}{\rho^\downarrow_i}}\tr{P_k \ptr{Z}{\sigma^\downarrow_i}}}. 
\end{align*}
Using that $\sqrt{F}(\rho^\downarrow_0, \sigma^\downarrow_0) = \sqrt{F}(s_0, s_0) = 1$ and some rearrangement establishes \eqref{NonTPInequality} and completes the proof of the theorem.
\end{proof}

\noindent To connect this technical result with the main results of this section, we need two auxilliary lemmas. 

\begin{lem} \label{RelationErrorFidelity}
For $\mathrm{dim}(\mathcal{H}) < \infty$, let $T_A, T_B : \trcl(\mathcal{H}) \rightarrow \trcl(\mathcal{H})$ be two channels and let $D$ be a finite-dimensional $N$-step discrimination strategy and $\Pi$ be a two-valued POVM. Then 
\begin{align*}
    \frac{(1 - 2 P_e(D, \Pi))^2}{2} \leq  1-\sqrt{F}(\rho_N^{T_A}, \rho_N^{T_B}),
\end{align*}
where $\rho$ is the intermediate state map of $D$.
\end{lem}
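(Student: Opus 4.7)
Since the lemma concerns two fixed channels $T_A, T_B$ (so $\mathcal{C}_A = \{T_A\}$, $\mathcal{C}_B = \{T_B\}$ in Definition~\ref{ErrorProbabilityDefn}), the error probability reduces to the standard Bayesian cost for binary discrimination of the output states $\rho_N^{T_A}, \rho_N^{T_B}$:
\[
P_e(D,\Pi) \;=\; \tfrac{1}{2}\bigl(\tr{\pi_B\rho_N^{T_A}}+\tr{\pi_A\rho_N^{T_B}}\bigr).
\]
The strategy is simply to chain the Helstrom bound with one half of the Fuchs--van de Graaf inequalities.

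First I would invoke the Helstrom bound, which states that for any density operators $\rho,\sigma$ and any two-valued POVM $\{\pi_A,\pi_B\}$,
\[
\tfrac{1}{2}\bigl(\tr{\pi_B\rho}+\tr{\pi_A\sigma}\bigr) \;\geq\; \tfrac{1}{2}\bigl(1-\tfrac{1}{2}\norm{\rho-\sigma}_1\bigr).
\]
Applied to $\rho=\rho_N^{T_A}$ and $\sigma=\rho_N^{T_B}$ and rearranged, this yields
\[
1-2P_e(D,\Pi) \;\leq\; \tfrac{1}{2}\norm{\rho_N^{T_A}-\rho_N^{T_B}}_1.
\]

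Next I would apply the Fuchs--van de Graaf inequality $\tfrac{1}{2}\norm{\rho-\sigma}_1 \leq \sqrt{1-F(\rho,\sigma)}$ to conclude
\[
(1-2P_e(D,\Pi))^2 \;\leq\; 1-F(\rho_N^{T_A},\rho_N^{T_B}).
\]
Factoring $1-F=(1-\sqrt{F})(1+\sqrt{F})$ and bounding $1+\sqrt{F}\leq 2$ (valid because $\sqrt{F}\in[0,1]$ for normalised states, using the paper's convention $\sqrt{F}(\rho,\sigma)=\norm{\sqrt{\rho}\sqrt{\sigma}}_1$) then gives
\[
(1-2P_e(D,\Pi))^2 \;\leq\; 2\,\bigl(1-\sqrt{F}(\rho_N^{T_A},\rho_N^{T_B})\bigr),
\]
and dividing by $2$ is the claim. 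No substantive obstacle is expected: both the Helstrom bound and the Fuchs--van de Graaf inequality are standard textbook results (see e.g.\ \cite{nielsen00}, Ch.~9), and the remaining manipulations are elementary.
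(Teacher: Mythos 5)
Your proposal is correct and follows essentially the same route as the paper: Helstrom/Holevo--Helstrom to bound $1-2P_e(D,\Pi)$ by $\tfrac12\norm{\rho_N^{T_A}-\rho_N^{T_B}}_1$, then Fuchs--van de Graaf, then the factorization $1-F=(1-\sqrt F)(1+\sqrt F)\le 2(1-\sqrt F)$; the paper merely routes through the minimal error $P_e^m(D)$ before squaring, while you apply the bound directly to the given POVM. The only point to keep in mind (shared by the paper, which asserts $P_e\le\tfrac12$) is that squaring $1-2P_e\le\tfrac12\norm{\cdot}_1$ requires $1-2P_e\ge 0$, which is justified by noting $1-2P_e(D,\Pi)=\tr{\pi_A(\rho_N^{T_A}-\rho_N^{T_B})}$ so that in fact $\abs{1-2P_e}\le\tfrac12\norm{\rho_N^{T_A}-\rho_N^{T_B}}_1$ for any two-valued POVM.
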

\begin{proof}
By definition, 
\begin{align*}
P_e(D,\Pi) = \frac{1}{2} \left[ \tr{\pi_B \rho_N^{T_A}} + \tr{\pi_A \rho_N^{T_B}} \right].
\end{align*}
If we minimize over the possible two-valued POVMs $\Pi^\prime$, the famous Holevo-Helstrom formula reads
\begin{align*}
    P^m_e(D) := \min_{\Pi^\prime} P_e(D, \Pi^\prime) = \frac{1}{2}\left[1 - \frac{1}{2}\norm{\rho_N^{T_A} - \rho_N^{T_B}}_1 \right].
\end{align*}
Since $0 \leq P_e(D, \Pi) \leq \frac{1}{2}$, we have $1 - 2 P_e(D, \Pi) \geq 0$. Thus,
\begin{align} \label{BasicInequalityEasy}
    \frac{(1 - 2 P_e(D, \Pi))^2}{2} \leq \frac{(1 - 2 P_e^m(D))^2}{2}.
\end{align}
By the Fuchs-van de Graaf inequality (see \cite{nielsen00}, p. 416),
\begin{align*}
    \frac{1}{2}\norm{\rho - \sigma}_1 \leq \sqrt{1-\sqrt{F}(\rho, \sigma)^2}.
\end{align*}
Thus,
\begin{align*}
    \frac{(1 - 2 P_e^m(D))^2}{2} &= \frac{\left( \frac{1}{2} \norm{\rho_N^{T_A} - \rho_N^{T_B}}_1 \right)^2}{2}\\ &\leq \frac{1-\sqrt{F}(\rho_N^{T_A}, \rho_N^{T_B})^2}{2} 
    \\&= (1-\sqrt{F}(\rho_N^{T_A}, \rho_N^{T_B})) \frac{1+\sqrt{F}(\rho_N^{T_A}, \rho_N^{T_B})}{2} \\&\leq 1-\sqrt{F}(\rho_N^{T_A}, \rho_N^{T_B}).
\end{align*}
Together with \eqref{BasicInequalityEasy}, this proves the claim.
\end{proof}

\begin{lem} \label{CPLemma}
For $\mathrm{dim}(\mathcal{H}) < \infty$, let $T_A, T_B : \trcl(\mathcal{H}) \rightarrow \trcl(\mathcal{H})$ be two channels with vacuum $v \in \mathcal{H}$. Let $\mathcal{V}_{T_A}$ and $\mathcal{V}_{T_B}$ be the respective maximal vacuum subspaces and let $T_A^\downarrow$ and $T_B^\downarrow$ be as in Definition \ref{InteractionProbabilityDef} (Eq. \ref{DefinitionDownarrowChannel}). Furthermore, let $\mathcal{V}$ be a subspace such that $v \in \mathcal{V}$ and $\mathcal{V} \subseteq \mathcal{V}_{T_A} \cap \mathcal{V}_{T_B}$. Let $\mathcal{W} = \{\mathcal{W}_1, \mathcal{W}_2, \dots, \mathcal{W}_K\}$ be a collection of mutually orthogonal subspaces of $\mathcal{V}^\bot$, such that $\mathcal{V}^\bot = \mathcal{W}_1 \oplus \mathcal{W}_2 \oplus \dots \oplus \mathcal{W}_K$. Denote the orthogonal projections onto these subspaces by $P_1, P_2, \dots, P_K$. Then
\begin{align*}
\frac{(1 - 2 P_e(D, \Pi))^2}{2} \leq C_{\mathcal{V}, \mathcal{W}}^{(T_A^\downarrow, T_B^\downarrow)} \sum_{i = 0}^{N-1}\sum_{k = 1}^{K} \sqrt{\tr{P_k \ptr{Z}{\rho_i^{T_A^\downarrow}}}\cdot\tr{P_k \ptr{Z}{\rho_i^{T_B^\downarrow}}}},
\end{align*}
for all finite-dimensional $N$-step discrimination strategies $D = (\mathcal{H}, \mathcal{H}_Z, s_0, \Lambda)$ and all two-valued POVMs, $\Pi$.
\end{lem}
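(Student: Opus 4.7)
The plan is to derive the bound by chaining together two results already in the paper: Lemma \ref{RelationErrorFidelity}, which converts the error probability into a fidelity defect, and Proposition \ref{NoGoTheorem}, which estimates that defect in terms of the weighted sum on the right-hand side. First I would apply Lemma \ref{RelationErrorFidelity} to get
$$
\frac{(1-2P_e(D,\Pi))^2}{2} \leq 1-\sqrt{F}(\rho_N^{T_A},\rho_N^{T_B}),
$$
so that the task reduces to controlling the fidelity defect. Next, I would invoke Proposition \ref{NoGoTheorem} with the channels $T_A,T_B$ and the CPTN maps $T_A^\downarrow,T_B^\downarrow$ from Definition \ref{InteractionProbabilityDef}, which delivers exactly the quantity on the right-hand side of the claimed inequality.

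The bulk of the work is to verify the hypotheses of Proposition \ref{NoGoTheorem}. Two of them are immediate from the form of $T_X^\downarrow$: complete positivity with trace non-increase holds because $(P_v\otimes\idmat)V_X$ is the composition of an isometry with a projection, hence a contraction; and $T_X-T_X^\downarrow$ is completely positive because it admits the representation $\ptr{E}{((\idmat-P_v)\otimes\idmat)V_X\cdot V_X^\dagger}$ with $\idmat-P_v\geq 0$. The trace-preservation of $T_A^\downarrow\vert_{\trcl(\mathcal{V})}$ is where the assumption $\mathcal{V}\subseteq\mathcal{V}_{T_A}$ is used: I would recycle the structural observation from the proof of Lemma \ref{MaximalVacuumSubspaceLemma} that, on the maximal vacuum subspace, the Stinespring isometry factorises as $V_A\psi=e_A\otimes U_A\psi$ with $P_v=\ket{e_A}\bra{e_A}$ and $U_A$ an isometry. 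It follows that $T_A^\downarrow(\rho)=U_A\rho U_A^\dagger$ for all $\rho\in\trcl(\mathcal{V})$, which is isometric and hence trace-preserving; the same argument works for $B$.

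The only delicate hypothesis is $T_A^\downarrow\vert_{\trcl(\mathcal{V})}=T_B^\downarrow\vert_{\trcl(\mathcal{V})}$, which is not a consequence of $\mathcal{V}\subseteq\mathcal{V}_{T_A}\cap\mathcal{V}_{T_B}$ alone. This is the main obstacle, but it is resolved by inspecting Definition \ref{ProportionalityConstantDef}: if the two restrictions differ, then no pair of isometries $V_A,V_B$ can simultaneously dilate $T_A^\downarrow,T_B^\downarrow$ in the prescribed form \emph{and} satisfy $V_AP=V_BP$ (since the latter would force the restrictions to coincide via $T_X^\downarrow|_{\trcl(\mathcal{V})}(\cdot)=\ptr{E}{(P_X\otimes\idmat)V_XP\cdot PV_X^\dagger}$). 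In that case the defining set is empty, $C_{\mathcal{V},\mathcal{W}}^{(T_A^\downarrow,T_B^\downarrow)}=+\infty$ by the convention $\inf\emptyset=+\infty$, and the inequality is vacuously true. If, on the other hand, the restrictions do agree, all four hypotheses of Proposition \ref{NoGoTheorem} are met, so applying it and chaining with the fidelity bound from Lemma \ref{RelationErrorFidelity} finishes the proof, with $C_{\mathcal{V},\mathcal{W}}^{(T_A^\downarrow,T_B^\downarrow)}\leq 2$ by Proposition \ref{No-GoInequality}.
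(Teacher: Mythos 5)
Your proposal follows the same skeleton as the paper's proof: Lemma \ref{RelationErrorFidelity} turns the error probability into the fidelity defect $1-\sqrt{F}(\rho_N^{T_A},\rho_N^{T_B})$, and Proposition \ref{NoGoTheorem}, applied with the $T_X^\downarrow$ of Definition \ref{InteractionProbabilityDef}, bounds that defect. Your two routine verifications are exactly the paper's: complete positivity of $T_X-T_X^\downarrow$ from the representation $\ptr{E}{((\idmat-P_v^{(X)})\otimes\idmat)V_X\cdot V_X^\dagger}$, and trace-preservation of $T_X^\downarrow\vert_{\trcl(\mathcal{V})}$ from $\mathcal{V}\subseteq\mathcal{V}_{T_X}$; you argue the latter via the factorisation $V_X\psi=e_X\otimes U_X\psi$ on $\mathcal{V}_{T_X}$, the paper by noting that $(P_v^{(X)}\otimes\idmat)$ acts trivially on $V_X\,\trcl(\mathcal{V}_{T_X})\,V_X^\dagger$ so that $T_X^\downarrow=T_X$ there; these are the same observation.

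Where you diverge is the remaining hypothesis of Proposition \ref{NoGoTheorem}, $T_A^\downarrow\vert_{\trcl(\mathcal{V})}=T_B^\downarrow\vert_{\trcl(\mathcal{V})}$. You are right that it does not follow from $\mathcal{V}\subseteq\mathcal{V}_{T_A}\cap\mathcal{V}_{T_B}$; the paper's proof simply never checks it (it is automatic in the two places the lemma is invoked, Theorems \ref{QuantitativeNoGo} and \ref{RateLimitTheorem}, where $T_A\vert_{\trcl(\mathcal{V})}=T_B\vert_{\trcl(\mathcal{V})}$ is assumed and hence $T_A^\downarrow=T_B^\downarrow$ on $\trcl(\mathcal{V})$). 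Your patch via $C_{\mathcal{V},\mathcal{W}}^{(T_A^\downarrow,T_B^\downarrow)}=\inf\emptyset=+\infty$ has two wrinkles. First, the parenthetical justification for emptiness is incomplete: $V_AP=V_BP$ alone does not force the restrictions to agree, because $P_A$ and $P_B$ may differ; you need the extra step that trace-preservation of $T_X^\downarrow\vert_{\trcl(\mathcal{V})}$ forces $(P_X\otimes\idmat)V_XP=V_XP$ for \emph{every} admissible tuple in Definition \ref{ProportionalityConstantDef} (a norm argument, as in the proof of Proposition \ref{No-GoInequality}), after which $T_A^\downarrow(\rho)=\ptr{E}{V_A\rho V_A^\dagger}=\ptr{E}{V_B\rho V_B^\dagger}=T_B^\downarrow(\rho)$ for $\rho\in\trcl(\mathcal{V})$ and emptiness indeed follows; this step is available from your own trace-preservation paragraph but should be stated. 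Second, $C=+\infty$ only renders the inequality vacuous when the right-hand sum is strictly positive; if the strategy probes only with states supported in $\mathcal{V}\otimes\mathcal{H}_Z$ the sum is zero, and then no convention about $\infty\cdot 0$ saves the bound (e.g.\ two distinct unitary channels with common vacuum, $\mathcal{V}\subsetneq\mathcal{H}=\mathcal{V}_{T_A}=\mathcal{V}_{T_B}$, violate the stated inequality). So the degenerate case is not covered by your patch — nor by the paper, whose statement tacitly needs the additional hypothesis; where the lemma is actually used, that hypothesis holds and your argument coincides with the paper's.
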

\begin{proof}
By Lemma \ref{RelationErrorFidelity}, we have for any finite-dimensional $N$-step discrimination strategy $D$ and any two-valued POVM, $\Pi$, that
\begin{align} \label{RHSFirstStepProofNo-Go}
    \frac{(1 - 2 P_e(D, \Pi))^2}{2} \leq  1-\sqrt{F}(\rho_N^{T_A}, \rho_N^{T_B}).
\end{align}
We want to apply Proposition \ref{NoGoTheorem} to the RHS of this inequality. To this end, we have to define the quantities appearing in that proposition. We identify $T_A, T_B, \mathcal{V}$ and $\mathcal{W}$ with the objects that bare the same name. In the following let $X \in \{A, B\}$. We define $T_X^\downarrow$ as in Definition \ref{InteractionProbabilityDef} and need to check that $T_X - T^\downarrow_X$ is completely positive and that $T_X^\downarrow\vert_{\trcl(\mathcal{V})}$ is trace-preserving.
To this end, fix a Stinespring isometry $V_X : \mathcal{H} \rightarrow \mathcal{H}_E \otimes \mathcal{H}$ of $T_X$. Then $T^\downarrow_X$ is defined by
\begin{align*}
T_X^\downarrow(\cdot) = \ptr{E}{(P_v^{(X)}\otimes \idmat) V_X \cdot V_X^\dagger},
\end{align*}
where $P_v^{(X)}$ is the projection onto the support of $\ptr{\mathcal{H}}{V_X \ket{v}\bra{v} V_X^\dagger}$. It follows immediately from this expression that $T_X-T_X^\downarrow$ is completely positive. To see that $T_X^\downarrow\vert_{\trcl(\mathcal{V}_{T_X})}$ is trace-preserving, note that by Definition \ref{MaximalVacuumSubspaceDef} 
\begin{align*}
    \mathcal{V}_{T_X} = V_X^{-1} \left[ \mathrm{supp}(\ptr{\mathcal{H}}{ V_X \ket{v}\bra{v} V_X^\dagger}) \otimes \mathcal{H} \right].
\end{align*}
Thus, for any\footnote{Remember that for a subspace $\mathcal{V}_0 \subseteq \mathcal{H}$, the operators in $\trcl(\mathcal{V}_0)$ are those that can be written in the form $\sum_{i, j} \alpha_{i j} \ket{\psi_i}\bra{\psi_i}$, with $\alpha_{i j} \in \mathbb{C}$ and $\psi_i \in \mathcal{V}_0$. } $\rho \in \trcl(\mathcal{V}_{T_X})$, 
\begin{align*}
    V_X \rho V_X^\dagger \in \trcl(\mathrm{supp}(\ptr{\mathcal{H}}{ V_X \ket{v}\bra{v} V_X^\dagger}) \otimes \mathcal{H}).
\end{align*}
As $P_v^{(X)}\otimes \idmat$ is the projection onto $\mathrm{supp}(\ptr{\mathcal{H}}{ V_X \ket{v}\bra{v} V_X^\dagger}) \otimes \mathcal{H}$, we have
\begin{align*}
    T_X^\downarrow\vert_{\trcl(\mathcal{V}_{T_X})}(\cdot) &= \ptr{E}{(P_v^{(X)}\otimes \idmat) V_X \cdot V_X^\dagger} = \ptr{E}{V_X \cdot V_X^\dagger} = T_X\vert_{\trcl(\mathcal{V}_{T_X})}(\cdot).
\end{align*}
Thus, $T_X^\downarrow\vert_{\trcl(\mathcal{V}_{T_X})}$ is trace-preserving, as $T_X\vert_{\trcl(\mathcal{V}_{T_X})}$ is. As $\mathcal{V}$ is a subspace of $\mathcal{V}_{T_X}$, also $T_X^\downarrow\vert_{\trcl(\mathcal{V})}$ is trace-preserving. This is what we have claimed. As all assumptions are satisfied, we can invoke Proposition \ref{NoGoTheorem}, which directly yields the desired inequality.
\end{proof}

\noindent The next result has already been stated in the results section. 

\begin{thm}[No-go theorem] \label{QuantitativeNoGo}
For $\mathrm{dim(\mathcal{H})} < \infty$, let $T_A, T_B : \trcl(\mathcal{H}) \rightarrow \trcl(\mathcal{H})$ be two channels with vacuum $v\in \mathcal{H}$.
If there exists no subspace $\mathcal{V} \subseteq \mathcal{H}$ such that $v \in \mathcal{V}$, at least one of the channels $T_A$ or $T_B$ is isometric on $\mathcal{V}$ and $T_A\vert_{\trcl(\mathcal{V})} \neq T_B\vert_{\trcl(\mathcal{V})}$, then there exists a constant $C < \infty$, such that 
\begin{align*} 
    (1-2P_e(D, \Pi))^2 \leq  C \sqrt{P_I^{T_A}(D)\cdot P_I^{T_B}(D)} \leq C \max(P_I^{T_A}(D), P_I^{T_B}(D)),
\end{align*}
for all finite-dimensional $N$-step discrimination strategies $D$ and all two-valued POVMs, $\Pi$.  
Hence, $T_A$ and $T_B$ cannot be discriminated in an \sq{interaction-free} manner.
\end{thm}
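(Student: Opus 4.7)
The plan is to reduce the statement to Lemma \ref{CPLemma} with a well-chosen subspace $\mathcal{V}$ and the coarsest possible orthogonal decomposition of $\mathcal{V}^\bot$, and then to convert the fidelity/projection bounds that appear on the right-hand side into bounds on the \sq{interaction} probabilities using part \ref{prop5} of Lemma \ref{MaximalVacuumSubspaceLemma}.

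First I would show that the two maximal vacuum subspaces coincide, i.e.\ $\mathcal{V}_{T_A} = \mathcal{V}_{T_B}$, and call this common subspace $\mathcal{V}$. By parts \ref{prop1} and \ref{prop2} of Lemma \ref{MaximalVacuumSubspaceLemma}, the subspace $\mathcal{V}_{T_A}$ contains $v$ and $T_A$ is isometric on it, so the hypothesis of the theorem forces $T_A\vert_{\trcl(\mathcal{V}_{T_A})} = T_B\vert_{\trcl(\mathcal{V}_{T_A})}$. Hence $T_B$ is also isometric on $\mathcal{V}_{T_A}$, and part \ref{prop4} of Lemma \ref{MaximalVacuumSubspaceLemma} yields $\mathcal{V}_{T_A}\subseteq \mathcal{V}_{T_B}$; by symmetry the reverse inclusion holds as well. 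In particular $\mathcal{V}\subseteq \mathcal{V}_{T_A}\cap \mathcal{V}_{T_B}$ and $T_A\vert_{\trcl(\mathcal{V})}=T_B\vert_{\trcl(\mathcal{V})}$, so all the hypotheses of Lemma \ref{CPLemma} are met.

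Next I would invoke Lemma \ref{CPLemma} with the trivial decomposition $\mathcal{W} = \{\mathcal{V}^\bot\}$ (i.e.\ $K=1$ and $P_1 = P^\bot$, the projection onto $\mathcal{V}^\bot$), yielding, for every finite-dimensional $N$-step discrimination strategy $D$ and every two-valued POVM $\Pi$,
\begin{align*}
\frac{(1-2P_e(D,\Pi))^2}{2}
\leq C_{\mathcal{V},\mathcal{W}}^{(T_A^\downarrow,T_B^\downarrow)}
\sum_{i=0}^{N-1}\sqrt{\tr{P^\bot \ptr{Z}{\rho_i^{T_A^\downarrow}}}\cdot\tr{P^\bot \ptr{Z}{\rho_i^{T_B^\downarrow}}}},
\end{align*}
together with the universal bound $C_{\mathcal{V},\mathcal{W}}^{(T_A^\downarrow,T_B^\downarrow)}\leq 2$. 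Now I would apply Cauchy--Schwarz to the sum over $i$, splitting it into the geometric mean of $\sum_i \tr{P^\bot \ptr{Z}{\rho_i^{T_A^\downarrow}}}$ and the analogous quantity for $T_B$.

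Finally I would convert each of these sums into the corresponding \sq{interaction} probability via part \ref{prop5} of Lemma \ref{MaximalVacuumSubspaceLemma}: there exist constants $C_{T_A},C_{T_B}>0$ with $\mathfrak{i}_{T_X}(\rho)\geq C_{T_X}\tr{P^\bot \rho}$ for all $\rho\geq 0$, and hence
\begin{align*}
\sum_{i=0}^{N-1}\tr{P^\bot \ptr{Z}{\rho_i^{T_X^\downarrow}}}
\leq \frac{1}{C_{T_X}}\sum_{i=0}^{N-1}\mathfrak{i}_{T_X}\!\left(\ptr{Z}{\rho_i^{T_X^\downarrow}}\right)
= \frac{P_I^{T_X}(D)}{C_{T_X}},
\end{align*}
for $X\in\{A,B\}$. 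Combining the three estimates gives $(1-2P_e(D,\Pi))^2 \leq C\sqrt{P_I^{T_A}(D)\,P_I^{T_B}(D)}$ with $C := 4/\sqrt{C_{T_A}C_{T_B}}<\infty$, and the second inequality of the theorem is then trivial.

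I expect the only real subtlety to be the opening step, i.e.\ recognizing that the hypothesis forces $\mathcal{V}_{T_A}=\mathcal{V}_{T_B}$ and that the two channels must be equal on this common subspace. Once that is in place, the rest is a direct application of Lemma \ref{CPLemma}, a single Cauchy--Schwarz, and the lower bound from part \ref{prop5} of Lemma \ref{MaximalVacuumSubspaceLemma}, which crucially requires $\mathrm{dim}(\mathcal{H})<\infty$.
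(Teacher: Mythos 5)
Your proposal is correct and follows essentially the same route as the paper: you derive the reformulation $\mathcal{V}_{T_A}=\mathcal{V}_{T_B}$ with $T_A\vert_{\trcl(\mathcal{V})}=T_B\vert_{\trcl(\mathcal{V})}$ (the paper's Remark \ref{DifferentCharacterizationsRemark}), apply Lemma \ref{CPLemma} with the single-block decomposition $\mathcal{W}=\{\mathcal{V}^\bot\}$, and combine Cauchy--Schwarz with part \ref{prop5} of Lemma \ref{MaximalVacuumSubspaceLemma}. The only difference is the harmless ordering of the Cauchy--Schwarz step and the pointwise bound $\tr{P^\bot\rho}\leq \mathfrak{i}_{T_X}(\rho)/C_{T_X}$, which yields the same constant up to the bound $C_{\mathcal{V},\mathcal{W}}^{(T_A^\downarrow,T_B^\downarrow)}\leq 2$.
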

\begin{rem} \label{DifferentCharacterizationsRemark}
The assumption: "The statement that $T_A$ or $T_B$ is isometric on a subspace $\mathcal{V}$, with $v \in \mathcal{V}$, already implies that $T_A\vert_{\trcl(\mathcal{V})} = T_B\vert_{\trcl(\mathcal{V})}$" can be rephrased in two equivalent ways. The first one is that the conditions \ref{MainProp1}, \ref{MainProp2} and \ref{MainProp3} in the Main Theorem (section \ref{ResultsSection}) cannot be fulfilled simultaneously.  
The second reformulation is that for the maximum vacuum subspaces $\mathcal{V}_{T_A}$ and $\mathcal{V}_{T_B}$, we have $\mathcal{V}_{T_A} = \mathcal{V}_{T_B}$ and $T_A\vert_{\trcl(\mathcal{V}_{T_A})} = T_B\vert_{\trcl(\mathcal{V}_{T_B})}$. The equivalence follows directly from the characterization of maximal vacuum subspaces in Lemma \ref{MaximalVacuumSubspaceLemma} \ref{prop4}. This second reformulation is not only important in the proof, but also if one wants to check this criterion, as $\mathcal{V}_{T_A}$ and $\mathcal{V}_{T_B}$ are efficiently computable directly from Definition \ref{MaximalVacuumSubspaceDef}. 
\end{rem}
\begin{proof}
We use the second characterization in Remark \ref{DifferentCharacterizationsRemark}. That is, $\mathcal{V}_{T_A} = \mathcal{V}_{T_B}$ and $T_A\vert_{\trcl(\mathcal{V}_{T_A})} = T_B\vert_{\trcl(\mathcal{V}_{T_B})}$. We set $\mathcal{V} := \mathcal{V}_{T_A}$ and let $T_A^\downarrow$ and $T_B^\downarrow$ be as in Definition \ref{InteractionProbabilityDef}. Furthermore, we define $\mathcal{W} := \{\mathcal{W}_1\}$, with $\mathcal{W}_1 := \mathcal{V^\bot}$.
Then, by Lemma \ref{CPLemma}, we have
\begin{align} \label{RHSSecondtStepProofNo-Go}
    (1 - 2 P_e(D, \Pi))^2 \leq 2C_{\mathcal{V}, \mathcal{W}}^{(T_A^\downarrow, T_B^\downarrow)} \sum_{i = 0}^{N-1} \sqrt{\tr{P^\bot \ptr{Z}{\rho_i^{T_A^\downarrow}}}\cdot\tr{P^\bot \ptr{Z}{\rho_i^{T_B^\downarrow}}}},
\end{align}
where $P^\bot$ is the orthogonal projection onto $\mathcal{W}_1 = \mathcal{V}^\bot$. 
As $\mathcal{V}$ is the maximum vacuum subspace of $T_A$ and $T_B$, Lemma \ref{MaximalVacuumSubspaceLemma} \ref{prop5} implies that for $X \in \{A, B\}$, there is a constant $C_{T_X} > 0$ such that $\mathfrak{i}_{T_X}(\rho) \geq C_{T_X} \tr{P^\bot\, \rho}$ for all $\rho \geq 0$. As $\ptr{Z}{\rho_i^{T_X^\downarrow}} \geq 0$, we get
\begin{align*}
    \eqref{RHSSecondtStepProofNo-Go} &\leq \frac{2C_{\mathcal{V}, \mathcal{W}}^{(T_A^\downarrow, T_B^\downarrow)}}{\sqrt{C_{T_A} C_{T_B}}} \sum_{i = 0}^{N-1} \sqrt{\mathfrak{i}_{T_A}\left(\ptr{Z}{\rho_i^{T_A^\downarrow}}\right)\,\mathfrak{i}_{T_B}\left(\ptr{Z}{\rho_i^{T_B^\downarrow}}\right)} \\
    &\leq \frac{2C_{\mathcal{V}, \mathcal{W}}^{(T_A^\downarrow, T_B^\downarrow)}}{\sqrt{C_{T_A} C_{T_B}}} \sqrt{\left( \sum_{i = 0}^{N-1} \mathfrak{i}_{T_A}\left(\ptr{Z}{\rho_i^{T_A^\downarrow}}\right)\right)\,\left(\sum_{i = 0}^{N-1}\mathfrak{i}_{T_B}\left(\ptr{Z}{\rho_i^{T_B^\downarrow}}\right)\right)} \\
    &= \frac{2C_{\mathcal{V}, \mathcal{W}}^{(T_A^\downarrow, T_B^\downarrow)}}{\sqrt{C_{T_A} C_{T_B}}} \sqrt{P_I^{T_A}(D)\cdot P_I^{T_B}(D)},
\end{align*}
where we used the Cauchy-Schwarz inequality (on $\mathbb{C}^{N}$) to obtain the second line and the definition of the \sq{interaction} probability in the last line. We note that the last inequality in the statement of the theorem is trivial. Thus, by setting $C := \frac{2C_{\mathcal{V}, \mathcal{W}}^{(T_A^\downarrow, T_B^\downarrow)}}{\sqrt{C_{T_A} C_{T_B}}}$, we have proven the claim.  
\end{proof}

\noindent The following Theorem is the technical version of the result stated in the results section.  

\begin{thm}[Rate limit theorem] \label{RateLimitTheorem}
For $\mathrm{dim(\mathcal{H})} < \infty$, let $T_A, T_B : \trcl(\mathcal{H}) \rightarrow \trcl(\mathcal{H})$ be two channels with vacuum $v\in \mathcal{H}$. Let $\mathcal{V}_{T_A}$ and $\mathcal{V}_{T_B}$ be the respective maximal vacuum subspace of $T_A$ and $T_B$. Set $\mathcal{V} := \mathcal{V}_{T_A} \cap \mathcal{V}_{T_B}$. Suppose that $T_A\vert_{\trcl(\mathcal{V})} = T_B\vert_{\trcl(\mathcal{V})}$ and that $\mathcal{V}^\bot \cap \mathcal{V}_A$ and $\mathcal{V}^\bot \cap \mathcal{V}_B$ are orthogonal. \\Then there exists a constant $C > 0$ such that
\begin{align*}
    \max(P_I^{T_A}(D), P_I^{T_B}(D)) \geq C\,\frac{(1-2P_e(D,\Pi))^4}{N},
\end{align*}
for all finite-dimensional $N$-step discrimination strategies $D$, and any two-valued POVM, $\Pi$.
\end{thm}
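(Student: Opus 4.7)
The plan is to derive the theorem from Lemma \ref{CPLemma} by partitioning $\mathcal{V}^{\bot}$ into three mutually orthogonal pieces chosen so that each of the three resulting sums on the right-hand side of that lemma can be absorbed into $P_I^{T_A}(D)$, $P_I^{T_B}(D)$, or a product of the two. First I would verify that the hypotheses of Lemma \ref{CPLemma} are met for $\mathcal{V} = \mathcal{V}_{T_A} \cap \mathcal{V}_{T_B}$: the vacuum $v$ lies in $\mathcal{V}$ by Lemma \ref{MaximalVacuumSubspaceLemma}(\ref{prop1}), and $T_A\vert_{\trcl(\mathcal{V})} = T_B\vert_{\trcl(\mathcal{V})}$ is assumed.

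Concretely, I would take $\mathcal{W} = \{\mathcal{W}_A, \mathcal{W}_B, \mathcal{W}_0\}$ with $\mathcal{W}_A := \mathcal{V}^\bot \cap \mathcal{V}_{T_A}$, $\mathcal{W}_B := \mathcal{V}^\bot \cap \mathcal{V}_{T_B}$, and $\mathcal{W}_0$ the orthogonal complement of $\mathcal{W}_A \oplus \mathcal{W}_B$ inside $\mathcal{V}^\bot$; the orthogonality hypothesis is precisely what makes this a legitimate orthogonal decomposition. Writing $\mathcal{V}_{T_X} = \mathcal{V} \oplus \mathcal{W}_X$ and taking complements yields the three key inclusions $\mathcal{W}_A \subseteq \mathcal{V}_{T_B}^\bot$, $\mathcal{W}_B \subseteq \mathcal{V}_{T_A}^\bot$, and $\mathcal{W}_0 \subseteq \mathcal{V}_{T_A}^\bot \cap \mathcal{V}_{T_B}^\bot$. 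Calling the respective projections $P_A, P_B, P_0$, Lemma \ref{MaximalVacuumSubspaceLemma}(\ref{prop5}) then supplies constants $C_{T_A}, C_{T_B} > 0$ such that $\tr{P_A \rho} \leq C_{T_B}^{-1}\, \mathfrak{i}_{T_B}(\rho)$, $\tr{P_B \rho} \leq C_{T_A}^{-1}\, \mathfrak{i}_{T_A}(\rho)$, and $P_0$ is dominated by both $C_{T_A}^{-1}\mathfrak{i}_{T_A}$ and $C_{T_B}^{-1}\mathfrak{i}_{T_B}$.

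Substituting these bounds into the conclusion of Lemma \ref{CPLemma}, trivially estimating the ``other'' factor (the one not lying in the appropriate $\mathcal{V}_T^\bot$) by $1$, and applying $\sum_{i=0}^{N-1} \sqrt{x_i y_i} \leq \sqrt{\sum_i x_i}\,\sqrt{\sum_i y_i}$ termwise (with $y_i \equiv 1$ for the $\mathcal{W}_A$ and $\mathcal{W}_B$ sums, and with both factors interaction-functional values for the $\mathcal{W}_0$ sum), the right-hand side collapses to a constant multiple of $\sqrt{N\, P_I^{T_A}(D)} + \sqrt{N\, P_I^{T_B}(D)} + \sqrt{P_I^{T_A}(D)\, P_I^{T_B}(D)}$. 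Setting $M := \max(P_I^{T_A}(D), P_I^{T_B}(D))$, the third term is bounded by $M \leq \sqrt{N M}$ whenever $M \leq N$; the residual case $M \geq 1$ makes the conclusion trivial since the right-hand side of the theorem is at most $1/N$. Hence $(1-2P_e(D,\Pi))^2 \leq \mathrm{const}\cdot \sqrt{N M}$, and squaring yields the theorem. The main technical obstacle is the orthogonal decomposition of $\mathcal{V}^\bot$: without the orthogonality hypothesis one inherits a cross-term supported on vectors whose projections onto $\mathcal{V}_{T_A}^\bot$ and $\mathcal{V}_{T_B}^\bot$ are both non-trivial and which is therefore controlled by neither interaction functional. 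The appearance of the $\sqrt{N}$ factor in the Cauchy–Schwarz step is precisely what produces the $N^{-1}$ scaling, matching the achievability result of Theorem \ref{DiscriminationStroategyOutlineTheorem}.
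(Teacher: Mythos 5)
Your proposal is correct and follows essentially the same route as the paper's proof: the same decomposition $\mathcal{V}^\bot = (\mathcal{V}^\bot\cap\mathcal{V}_{T_A})\oplus(\mathcal{V}^\bot\cap\mathcal{V}_{T_B})\oplus\mathcal{W}_0$ fed into Lemma \ref{CPLemma}, the bound $\mathfrak{i}_{T_X}(\rho)\geq C_{T_X}\tr{P_{\mathcal{V}_{T_X}}^\bot\rho}$ from Lemma \ref{MaximalVacuumSubspaceLemma}, bounding the uncontrolled factor by $1$, and a Cauchy--Schwarz over the $N$ steps producing the $\sqrt{N\max(P_I^{T_A},P_I^{T_B})}$ bound that is then squared. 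The only differences are cosmetic (you apply Cauchy--Schwarz termwise rather than first over the three blocks, and your case distinction $M\geq 1$ is redundant since $P_I\leq 1$), so the argument matches the paper's.
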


\begin{proof} The proof is similar to the one of the No-go theorem. Let $T_A^\downarrow$ and $T_B^\downarrow$ be as in Definition \ref{InteractionProbabilityDef} and set $\mathcal{V} := \mathcal{V}_{T_A} \cap \mathcal{V}_{T_B}$. Furthermore, define $\mathcal{W} := \{\mathcal{W}_1, \mathcal{W}_2, \mathcal{W}_3 \}$, with $\mathcal{W}_1 := \mathcal{V}^\bot \cap \mathcal{V}_{T_A}$, $\mathcal{W}_2 := \mathcal{V}^\bot \cap \mathcal{V}_{T_B}$ and $\mathcal{W}_3 := (\mathcal{W}_1\oplus\mathcal{W}_2)^\bot \cap \mathcal{V}^\bot$. Clearly, $\mathcal{W}_1, \mathcal{W}_2$ and $\mathcal{W}_3$ are mutually orthogonal and their direct sum is $\mathcal{V}^\bot$. Furthermore, $\mathcal{W}_2\oplus \mathcal{W}_3 = \mathcal{V}_{T_A}^\bot$ and $\mathcal{W}_1\oplus \mathcal{W}_3 = \mathcal{V}_{T_B}^\bot$. Thus, by Lemma \ref{CPLemma}, we have
\begin{align} \label{RHSSecondtStepProofNo-Go}
    (1 - 2 P_e(D, \Pi))^2 \leq 2C_{\mathcal{V}, \mathcal{W}}^{(T_A^\downarrow, T_B^\downarrow)} \sum_{i = 0}^{N-1}\sum_{k= 1}^{3} \sqrt{\tr{P_k \ptr{Z}{\rho_i^{T_A^\downarrow}}}\cdot\tr{P_k \ptr{Z}{\rho_i^{T_B^\downarrow}}}},
\end{align} 
where for $k \in \{1, 2, 3\}$, $P_k$ is the orthogonal projection onto $\mathcal{W}_k$. Using the Cauchy-Schwarz inequality (on $\mathbb{C}^{3}$), and the fact that probabilities are less than one, and afterwards the Cauchy-Schwarz inequality on $\mathbb{C}^N$, we get
\begin{align}\label{NoGoRateFthird}
    \eqref{RHSSecondtStepProofNo-Go} &\leq \sqrt{12}C_{\mathcal{V}, \mathcal{W}}^{(T_A^\downarrow, T_B^\downarrow)} \sum_{i = 0}^{N-1} \sqrt{\tr{(P_2+P_3) \ptr{Z}{\rho_i^{T_A^\downarrow}}} + \tr{(P_1+P_3) \ptr{Z}{\rho_i^{T_B^\downarrow}}}}\nonumber \\
    &\leq \sqrt{12N} C_{\mathcal{V}, \mathcal{W}}^{(T_A^\downarrow, T_B^\downarrow)} \sqrt{\sum_{i = 0}^{N-1} \tr{P_{\mathcal{V}_{T_A}}^\bot \ptr{Z}{\rho_i^{T_A^\downarrow}}} + \tr{P_{\mathcal{V}_{T_B}}^\bot \ptr{Z}{\rho_i^{T_B^\downarrow}}}},
\end{align}
where $P_{\mathcal{V}_{T_A}}^\bot$ and $P_{\mathcal{V}_{T_B}}^\bot$ are the projections onto $\mathcal{V}_{T_A}^\bot$ and $\mathcal{V}_{T_B}^\bot$.
Lemma \ref{MaximalVacuumSubspaceLemma}, \ref{prop5} implies that for $X \in \{A, B\}$, there is a constant $C_{T_X} > 0$ such that $\mathfrak{i}_{T_X}(\rho) \geq C_{T_X} \tr{P_{\mathcal{V}_{T_X}}^\bot\, \rho}$ for all $\rho \geq 0$. As $\ptr{Z}{\rho_i^{T_X^\downarrow}} \geq 0$, we get
\begin{align*}
    \eqref{NoGoRateFthird}&\leq \sqrt{12N} C_{\mathcal{V}, \mathcal{W}}^{(T_A^\downarrow, T_B^\downarrow)} \sqrt{C_{T_A}^{-1}\sum_{i = 0}^{N-1}\mathfrak{i}_{T_A}\left(\ptr{Z}{\rho_i^{T_A^\downarrow}}\right) + C_{T_B}^{-1}\sum_{i = 0}^{N-1}\mathfrak{i}_{T_B}\left(\ptr{Z}{\rho_i^{T_B^\downarrow}}\right)}\\
        &\leq C_\mathcal{V}^{(T_A^\downarrow, T_B^\downarrow)} \sqrt{\frac{24}{\min(C_{T_A}, C_{T_B})}}  \sqrt{N\,\max(P_I^{T_A}(D), P_I^{T_B}(D))}
\end{align*}
Taking the square and defining $C := \frac{\min(C_{T_A}, C_{T_B})}{24 {C_{\mathcal{V}, \mathcal{W}}^{(T_A^\downarrow, T_B^\downarrow)}}^2}$ proves the claim. 
\end{proof}

\newpage
\section{Conclusion and open problems}

In our work, we have characterized, when it is possible and impossible to discriminate quantum channels in an \sq{interaction-free} manner. Although this answers the question, what can be done perfectly with \sq{interaction-free} measurements there are still many questions left open. One question that is in direct succession of our work is, under which conditions two channels can be discriminated such that the \sq{interaction} probability decays faster than $~N^{-1}$. 
Another question would ask for a more quantitative treatment, i.e., even though one might not be able to discriminate two channels in an \sq{interaction-free} manner, there still might be a significant quantum advantage over classical strategies. A \sq{big} question concerns the influence of noise and decoherence. We may note that noise may influence what can or cannot be done in both directions, since the noise can also be on the Daemon's side and hence make his detection skills weaker. Before the no-go results for semitransparent objects were established \cite{PhysRevA.63.032105, PhysRevA.64.062303}, one anticipated application of \sq{interaction-free} measurement was to eliminate the exposure of humans to radiation in medical applications such as x-ray scans. This is not possible. However, our no-go theorem does not touch the case of asymmetric \sq{interaction-free} discrimination. That is, we may allow that one of the two objects to be discriminated gets destroyed (for example by simply setting its transmission functional to zero). This might even be a desirable effect. For example in a medical context, we would love to design a procedure such that a tumor gets destroyed, while the healthy tissue stays intact. 
\\\\
\noindent \textbf{Acknowledgment}
M.H. was supported by the Bavarian excellence network {\sc enb} via the \mbox{International} PhD Programme of Excellence {\em Exploring Quantum Matter} ({\sc exqm}).
M.M.W. acknowledges funding by the Deutsche Forschungsgemeinschaft (DFG, German Research Foundation) under Germany’s Excellence Strategy EXC-2111 390814868.
\\\\
\noindent \textbf{Remark} 
This work started as M.H.'s (unpublished) Master thesis \cite{MasterThesis}. However, the contents of \cite{MasterThesis} and the present work are largely disjoint. In particular, key results which include the main result, the \sq{interaction} model, the optimization of the needed resources and the determination of the rate of the best discrimination strategy were missing.

\appendix

\newpage
\section{} \label{SemisimplicityLemmaAppendix}

\begin{lem}[Semi-simplicity of the peripheral spectrum] 
Let $T: \trcl(\mathcal{H}) \rightarrow \trcl(\mathcal{H})$ be a channel such that $1$ is in the discrete spectrum of $T$. Then, for any $n \in \mathbb{N}$ and any (rectifiable) path inside the resolvent set of $T$ that encloses $1$, and separates $1$ from $\sigma(T)\setminus\{1\}$, we have
\begin{align} \label{RieszProjectionIdentity}
  \frac{1}{2\pi i} \oint\limits_{\Gamma_1}\frac{z^n}{z-T}\,\mathrm{d}z = \frac{1}{2\pi i} \oint\limits_{\Gamma_1}\frac{1}{z-T}\,\mathrm{d}z, 
\end{align}
\end{lem}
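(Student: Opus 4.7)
The plan is to recognize the left-hand side as the output of the holomorphic functional calculus applied to $z^n$, and then use a growth argument forced by the contractivity of $T$. Let $P_1 := \frac{1}{2\pi i} \oint_{\Gamma_1}(z-T)^{-1}\,\mathrm{d}z$ denote the Riesz projection associated with the spectral set $\{1\}$. Since $1$ lies in the discrete spectrum of $T$ by hypothesis, $P_1$ has finite rank and commutes with $T$. The functional calculus identity $T^nP_1 = \frac{1}{2\pi i}\oint_{\Gamma_1}\frac{z^n}{z-T}\,\mathrm{d}z$ reduces the claim to showing that $T^nP_1 = P_1$ for every $n\in\mathbb{N}$, which in turn is equivalent to $TP_1 = P_1$, i.e.\ that $T$ acts as the identity on $\mathrm{ran}(P_1)$.

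Next I would analyze $T_1 := T|_{\mathrm{ran}(P_1)}$, which is a linear endomorphism of the finite-dimensional space $\mathrm{ran}(P_1)$ whose spectrum equals $\{1\}$. By Jordan decomposition there exists a nilpotent $N$ on $\mathrm{ran}(P_1)$ with $T_1 = \idmat + N$. The goal is then to show $N = 0$.

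Here is where the contractivity of channels enters. A quantum channel is a trace-norm contraction, so $\|T^n\|_{\blt(\trcl(\mathcal{H}))} \le 1$ for every $n$, and hence $\|T^nP_1\| \le \|P_1\| =: M$ is uniformly bounded in $n$. Restricted to the finite-dimensional invariant subspace $\mathrm{ran}(P_1)$, this gives a uniform bound on the powers $T_1^n = (\idmat + N)^n = \sum_{k=0}^{\dim\mathrm{ran}(P_1)-1}\binom{n}{k}N^k$. If $N$ were non-zero, picking the largest $k$ with $N^k\neq 0$ would produce a term of norm growing like $n^k$, contradicting the uniform bound (all norms on $\mathrm{ran}(P_1)$ are equivalent). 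Hence $N=0$, so $T_1 = \idmat$, which yields $TP_1 = P_1$ and therefore $T^nP_1 = P_1$ for all $n$.

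The main obstacle is conceptual rather than technical: one has to identify the correct source of a contradiction in the infinite-dimensional setting, and the crucial observation is that $1\in\sigma(T)$ being \emph{discrete} confines the possible Jordan structure to a finite-dimensional invariant block, where the power growth of a non-trivial Jordan part becomes visible and clashes with the global contraction estimate $\|T^n\|\le 1$. Everything else is bookkeeping in the holomorphic functional calculus; independence from the particular path $\Gamma_1$ is automatic from Cauchy's theorem since the integrand is holomorphic outside the spectrum.
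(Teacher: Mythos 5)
Your proposal is correct and follows essentially the same route as the paper: identify the contour integral with $T^nP_1$ via the holomorphic functional calculus, use the discreteness of $1$ to get a finite-rank Riesz projection and a Jordan decomposition $T|_{\mathrm{ran}(P_1)} = \idmat + N$, and then rule out $N \neq 0$ because the binomial growth of $(\idmat+N)^n$ would contradict the power-boundedness $\norm{T^n} \le 1$ of a channel. The only (immaterial) difference is how the contradiction is extracted — you invoke norm equivalence on the finite-dimensional block, while the paper evaluates on a specific vector and uses a Hahn--Banach coordinate functional.
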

\begin{proof}
For brevity, we denote the Riesz-Projection on the RHS of \eqref{RieszProjectionIdentity} by $P$. As $1$ is in the discrete spectrum of $T$, Corollary 2.3.6 in \cite{simon2015operator} says that $T P = \frac{1}{2\pi i} \oint\limits_{\Gamma_1}\frac{z}{z-T}\,\mathrm{d}z = P + N$, where $N$ is a nilpotent operator that commutes with $P$. Hence $T = P + N + T_0$, where $T_0 := (\idop-P)T(\idop-P)$.
By the analytic functional calculus, we have
\begin{align*}
    (P+N)^n = \left(\frac{1}{2\pi i} \oint\limits_{\Gamma_1}\frac{z}{z-T}\,\mathrm{d}z\right)^n = \frac{1}{2\pi i} \oint\limits_{\Gamma_1}\frac{z^n}{z-T}\,\mathrm{d}z.
\end{align*}
If $N = 0$, then the claim follows, since $P$ is a projection ($P^n = P$). To this end, assume that $N \neq 0$. Since $N$ is nilpotent, there exists and integer $D$ such that $N^{D} \neq 0$ and $N^{D+1} = 0$. As $N \neq 0$, we have $D \geq 1$. Choose $\rho \in \trcl(\mathcal{H})$ such that $N^D(\rho) \neq 0$ and $P(\rho) = \rho$. Note that $P T_0 = T_0P = 0$. Thus $T^n(\rho) = (P+N)^n(\rho) + T_0^n(\rho) = (P+N)^n(\rho)$. In particular, since $T$ is a channel, $\norm{T^n} = 1$ and thus 
\begin{align} \label{boundednessEq}
    \norm{(P+N)^n(\rho)} \leq \norm{\rho}.
\end{align} 
For $n \geq D$, we have 
\begin{align*}
    (P+N)^n(\rho) = \sum_{i = 0}^D {n \choose i} N^i(\rho).
\end{align*}
Furthermore, the vectors $\rho, N(\rho), N^2(\rho), \dots N^D(\rho)$ are linearly independent. The coordinate function of $N(\rho)$ is ${n\choose 1}$, which is unbounded for $n \rightarrow \infty$. Since the coordinate function can be extended to a continuous linear functional on $\trcl(\mathcal{H})$ (Hahn-Banach), the unboundedness contradicts \eqref{boundednessEq}. Hence, $N = 0$.  
\end{proof}

\bibliography{references} 
\bibliographystyle{ieeetr}



\end{document}